\providecommand{\U}[1]{\protect\rule{.1in}{.1in}}
\newtheorem{theorem}{Theorem}
\newtheorem{definition}[theorem]{Definition}
\newtheorem{notation}[theorem]{Observation}
\newenvironment{proof}[1][Proof]{\noindent\textbf{#1.} }{\ \rule{0.5em}{0.5em}}
\def\fr{\frac}
\def\be{\begin{equation}}
\def\ee{\end{equation}}
\def\ba{\begin{eqnarray}}
\def\ea{\end{eqnarray}}
\def\pa{\partial}
\def\ra{\rightarrow}
\def\na{\nabla}
\def\s{\sigma}
\def\l{\lambda}
\def\e{\varepsilon}
\def\a{\alpha}
\def\b{\beta}
\def\g{\gamma}
\def\d{\delta}
\def\t{\tau}
\def\pt{\phantom{a}}
\def\w{\omega}
\def\f{\varphi}
\def\L{\Lambda^4}
\def\S{\mathbf{S}_\Lambda}
\def\f{\varphi}
\newcommand{\captionfonts}{\large}
\long\def\@makecaption#1#2{%
  \vskip\abovecaptionskip
  \sbox\@tempboxa{{\captionfonts #1: #2}}%
  \ifdim \wd\@tempboxa >\hsize
    {\captionfonts #1: #2\par}
  \else
    \hbox to\hsize{\hfil\box\@tempboxa\hfil}%
  \fi
  \vskip\belowcaptionskip}
\begin{document}

\title{Arrangement Field Theory \\ beyond Strings and Loop Gravity}
\author{Diego Marin\thanks{dmarin.math@gmail.com} \\ \small
with contributions by \\ \small Pangea Association\thanks{www.gruppopangea.com/?page\_ id=682\& lang=en}
$\qquad\qquad$ Fabrizio Coppola\thanks{fabrcop@aliceposta.it}, \\
\small Marcello Colozzo\thanks{extrabyte2000@yahoo.it} $\qquad\qquad$ Istituto Scientia\thanks{www.istitutoscientia.it, Via Ortola 65, 54100, Massa (MS), Italy}}
\date{}

\maketitle
\large
\setcounter{page}{1}
\tableofcontents

\chapter{Abstracts}

This paper regroups all contributions to the arrangement field theory (AFT),
together with a philosophical introduction by Dr. Fabrizio Coppola.
AFT is an unifying theory which describes gravitational, gauge and fermionic
fields as elements in the super-symmetric extension of Lie algebra $Sp(12,\mathbf{C})$.

$\pt$

\noindent \textbf{Paper number 1}

\noindent We introduce the concept of \lq\lq non-ordered space-time'' and formulate a quaternionic field theory
over such generalized non-ordered space. The imposition of an order over a non-ordered space
appears to spontaneously generate gravity, which is revealed as a fictitious force. The same
process gives rise to gauge fields that are compatible with those of Standard Model. We suggest a
common origin for gravity and gauge fields from a unique entity called \lq\lq arrangement
matrix'' ($M$) and propose to quantize all fields by quantizing $M$. Finally we give a proposal for
the explanation of black hole entropy and area law inside this paradigm.

$\pt$

\noindent \textbf{Paper number 2}

\noindent In this work we apply the formalism developed in the previous paper (\lq\lq The arrangement field
theory'') to describe the content of standard model plus gravity. The resulting scheme finds an
analogue in supersymmetric theories but now all quarks and leptons take the role of gauginos for
$Sp(12,\mathbf{C})$ gauge fields. Moreover we discover a triality between \emph{Arrangement Field Theory},
\emph{String Theory} and \emph{Loop Quantum Gravity}, which appear as different manifestations of
the same theory. Finally we show as three families of fields arise naturally and we discover a new
road toward unification of gravity with gauge and matter fields.

$\pt$

\noindent \textbf{Paper number 3}

\noindent We show how antigravity effects emerge from arrangement field theory.
AFT is a proposal for an unifying theory which joins gravity with gauge
fields by using the Lie group $Sp(12,\mathbf{C})$. Details of theory have been
exposed in the papers number 1 and number 2.

\chapter[The philosophy of AFT]{The philosophy of arrangement field theory}
\label{intro}

\section{Classical Physics \label{classicalphy}}

In classical physics, space and time are fundamental entities, pro\-vi\-ding a preordained
structure in which interactions between physical objects can occur.
In short, space and time are \lq\lq absolute".
Moreover, the physical properties of a body or system are supposed to be objective and independent from a
possible observation.

In this paradigm, reality exists independently of classical measurements and is not significantly
influenced by measurements, unless these are particularly \lq\lq invasive". But even in such cases,
it is assumed that the observed systems had their own pre-existing cha\-ra\-cte\-ri\-stics.

These were obvious and implicit tenets in classical physics, which influenced whole science,
aimed to be purely objective.

\section{Space and time according to philosophers \label{philosophers}}

Despite the rapid and successful development of classical physics and science in
general, firmly based on the fixed concepts of space and time, between late 17th century and early
19th century respectable philosophers such as Locke, Hume, Leibniz, Kant and Schopenhauer,
conceptualized space and time not as objective and universal entities, but as concepts defined by
our own intellect, aimed to interpret the external reality perceived by our senses.

This idea was radically different from the founding conception of classical physics, based on
full objectivity, and appeared quite extravagant to several scientists at that time.
Nevertheless Kant, who had a scientific background, exposed his conception in a profound and rational way.

In 1781 Kant distinguished two main activities of conscious mind \cite{kant}: \lq\lq analytic
propositions" and \lq\lq synthetic propositions". In an oversimplied interpretation, \lq\lq analytic
propositions" are the elements of rational, logical reasoning, in which thoughts
proceed by deduction, starting from known facts and finding consequences which,
anyway, were implicit in the premises and only had to made explicit by reasoning.

\lq\lq Synthetic propositions", instead, are new, non-deductible informations, coming
from perceptions and sensations. For instance we can not deduce whether an apple is sweet,
or a radiator is hot, but we must check that through our senses.

Kant also proposed a distinction between \lq\lq a priori" propositions, meaning \lq\lq in advance",
ie \lq\lq before" an experience is performed; and \lq\lq a posteriori" propositions, meaning \lq\lq after"
an experience.

According to Kant, all analytic propositions are \lq\lq a priori". A trivial example is given by any sum,
such as 4 + 7 = 11.
This analytic proposition is true \lq\lq a priori": the result is already 11 before we make the calculation.
Kant states that no analytic proposition can be a \lq\lq posteriori". Synthetic propositions, on the other
side, are generally \lq\lq a posteriori", since perceptions come from experience.

Now, an interesting question remains: may \lq\lq a priori" synthetic propositions exist?
Kant answers that they do actually exist. Certain \lq\lq categories" that human mind applies to events,
such as the principle of \lq\lq cause and effect", are \lq\lq a priori". In fact we perceive events and
relate to each other according to a category, \lq\lq causality", which, according to Kant, already exists
in our intellect.

Kant states that \lq\lq space" and \lq\lq time" are also \lq\lq a priori" synthetic forms.
Even if space, time and causality are related to experience, Kant does not consider them as inherent
to the objective phenomena, but as subjective tools (even if they manifest themselves as universal)
that our intellect uses to \lq\lq order" the experiences.

After Kant's definitions, anyway, classical, mechanistic science continued to achieve extraordinary results.
However, in the early twentieth century, physics started to face unexpected problems and
contradictions, that forced scientists to formulate new principles and accept radical changes.

\section{Relativistic physics \label{relativistic}}

In 1632 Galileo had intuited and enunciated the \lq\lq principle of relativity", stating that
the laws of physics are the same in every inertial frame of reference \cite{galileo1}.

Later developments of physics, including several discoveries in optics and electromagnetism,
suggested instead that a privileged, steady, fundamental frame of reference should exist.
This issue especially afflicted electromagnetism, that was an excellent theory but
included certain unsolved inconsistencies.

In 1905 Einstein solved the whole problem, restarting from the Galileo's principle of relativity
and applying it to the new knowledge of electromagnetism and optics, thus developing an original,
consistent theory, \lq\lq special relativity"\cite{einstein1}.
His theory also accounted for the results of the Michelson-Morley experiment \cite{mmorley},
conducted in 1887, which had demonstrated that speed of light does not follow the classical laws
of velocity addition.
Einstein solved all the inconsistencies by proposing that the speed of light $c$ is independent from
the motion of the emitting body. The universal constant $c$ became an insurmountable speed limit
in physics.

Einstein's theory also implied new, counter-intuitive ideas:
for example, time flows differently in different inertial reference systems, and perception
of space also depends on the frame of reference of the observer.
In light of such new discoveries, Kant's ideas do not seem to be so extravagant anymore.

Space and time lose their absolute characteristics if considered independently
from each other, but, adequately considered as components (coordinates)
of four-dimensional points, remain \lq\lq absolute" (\lq\lq invariant") in a single entity,
\lq\lq space-time" or \lq\lq chronotope", ruled by a generalized geometrical entity including time as the fourth coordinate.
In 1908 such a four-dimensional structure was perfected and named \lq\lq Minkowski space" \cite{minko}.

In 1916 Einstein expanded the principle of relativity to non-inertial reference frames, thus
defining the new theory of \lq\lq general relativity", in which the four-dimensional geometry is
curved by the presence of the masses \cite{einstein3}. Hence, even the (linear) Minkowski space
had to be considered as an approximation, valid only in small regions of the (curved) universe.
In this perspective, \lq\lq gravitational forces" find their natural explanation
in geometrical terms, based on a specific concept of metric.

This approach also affected the interpretation of the principle of cause-effect, to the point that Einstein,
in paragraph $a2$, wrote:
\lq\lq The law of causality has not the signicance of a statement as to the world of experience,
except when observable facts ultimately appear as causes and effects" \cite{einstein3}.
Kant had exposed this \lq\lq extravagant" idea a long time before \cite{kant}.

In this paper we suggest a new step in the direction of \lq\lq relativization" (so to say),
by questioning the absolute ordering of the space-time points, that we believe is an imposition
made by our intellect, rather than a proper quality of Nature. Such conjecture might open new unexpected
perspectives for understanding the fundamental fields of physics, as we are going to see.

\section{Quantum limitation of objectivity \label{limitation}}

In 1900 Planck had proposed \lq\lq quantization" of energy to
explain the electromagnetic emission of a \lq\lq black body" \cite{planck}.
In 1905 quantization of energy was also applied by Einstein to explain the \lq\lq photoelectric effect" \cite{einstein2}.

The several discoveries that clarified the structure of the atom from
1905 to the 1930's included the Rutherford's experiment \cite{rutherford}
in 1911, and the consequent Bohr model \cite{bohr1} in 1913.
Bohr started from the results of the Rutherford's experiment, and imposed quantization
to the angular momentum of electrons, instead of quantizing energy directly. As a consequence,
energy also turned out to be quantized, and the calculated levels were in excellent
agreement with the experimental values. The agreement was nearly perfect in the case of hydrogen,
the simplest atom in Nature.

In the case of more complex and heavier chemical elements, the
mathematical frame was more difficult and the results were less precise. To solve
these problems, the complete theory of Quantum Mechanics (QM) was gradually
developed (mainly by the \lq\lq Copenhagen school" directed by Bohr himself
during the 1920's), which came out to be intuitively abstruse, offering no image of the
motion of the electrons around the atomic nucleus.

While developing QM, it began to emerge that the experiments
inevitably influenced the observed systems. Bohr, Heisenberg and other physicists
of the \lq\lq Copenhagen school" suspected that physical properties of
quantum systems could no longer be assumed to be completely predefined
and ontologically independent from observation.

In the first version of the \lq\lq Copenhagen interpretation"
they assumed that free will of the conscious observers played a decisive role in the collapse
of a quantum state into an eigenstate \cite{heisenberg1}.
This appeared as an unacceptable extravagance to many physicists, including
Einstein, because of the unexpected restrictions that the supposed objectivity of the
universe had to suffer, as a consequence of the new theory.

Quantum states evolve deterministically according to the Schr\"o\-din\-ger equation \cite{schr},
formulated in 1926, but remain devoided of certain characteristics, which can be revealed
(\lq \lq objectivated") only when the quantum state collapses into an \lq\lq eigenstate" of the measured
physical quantity. This is the main reason why physical quantities in QM are called \lq\lq observables".

QM \lq\lq works fine" only if it is accepted that such hidden properties are not objectively defined before
the measurement and are partly created by observation itself, when the state is reduced to an eigenstate.
The eigenvalues calculated according to QM are in excellent agreement with the possible outcomes
given by experiments, even though the theory can not predict which eigenvalue will come out: only the
respective probabilities can be calculated, as pointed out by Born \cite{born} in 1926.
This led in 1927 to the Heisenberg's \lq\lq uncertainty principle" \cite{heisenberg2}, which put
an end to the absolute determinism that was implicit in classical physics.

QM thus introduced a margin of \lq\lq uncertainty", in which Nature may reserve a small room for Her
non-predictable \lq\lq caprice" or \lq\lq willingness", according to Jordan \cite{jordan}, and secondarily \cite{heisenberg1} accepted
by Pauli, Wigner, Eddington, and von Neumann \cite{von}, and years later by Wheeler \cite{wheeler}, Stapp \cite{stapp1}, and other physicists.
For example, Stapp in 1982 defined human mental activity as \lq\lq creative",
because it only partially undergoes the course of causal mechanisms, having a margin for free choices \cite{stapp1}.

Another important consequence concerns the act of measurement, after which, the
subsequent course of the physical system under observation is unavoidably modified by
the measurement itself, so that observations inevitably imprint different directions to events.

In 1932 von Neumann, after reordering and formalizing QM into a consistent theory,
stated that a distinctive element was necessary to trigger the quantum \lq\lq collapse"or \lq\lq reduction",
and declared that the consciousness of an observer could be such an element, distinctive enough from the usual
physical quantities \cite{von}. In 2001 Stapp consistently explained this concept in detailed and clear terms \cite{stapp2}.

In 1935 the discussion about the interpretation of QM faced the problem introduced by the Einstein, Podolski and Rosen (EPR) paradox \cite{epr}, \cite{bridge},
that later, in 1951, was better defined by Bohm \cite{bohm}. In this well-known thought experiment,
two particles in quantum \lq\lq entanglement" but far away from each other, produce instant, non-local influences, in contradiction
with the upper limit set by relativity at the speed of light: E., P. and R. considered that as absurd and impossible.

Nevertheless, the experimental version that was defined by the Bell's theorem \cite{bell} in 1964, and implemented in 1982 by Aspect et al. \cite{aspect},
confirmed the existence of non-local influences due to the entanglement. Thus, a conflict seems to exist between special relativity
(that does not allow non-local influences) and QM (which includes and reveals such influences).
The subsequent theories have not been able to solve in a convincing way such a dissonance.
The conjecture exposed in this paper, however, may offer a new framework where such conflict can be finally overcome.

\begin{flushright}
Fabrizio Coppola, Istituto Scientia
\end{flushright}

\chapter{The arrangement field theory (AFT)}

\section{Introduction to formalism}
\label{sec:1}

The arrangement field paradigm describes the universe be means of a graph (ie an ensemble
of vertices and edges). However there is a considerable difference between this framework
and the usual modeling with spin-foams or spin-networks. The existence of an edge which
connects two vertices is in fact probabilistic. In this way we consider the vertices as
fundamental physical quantities, while the edges become dynamic fields.

In section \ref{reciprocal} we introduce the concept of non-ordered space-time, ie an
ensemble of vertices without any information on their mutual positions.
In section \ref{arrmatrix} we define the \lq\lq arrangement matrix'' ($M$), which is a matricial
field whose entries define the probability amplitudes for the existence of edges.
The arrangement matrix regulates the order of vertices in the space-time, determining
the topology of space-time itself. In the same section we extend the concept of derivative
on such non-ordered space-time.

In section \ref{ord} we define a simple \lq\lq toy-action'' for a quaternionic field in a
non-ordered space-time. We show how the imposition of an arrangement in such space-time
generates automatically a metric $h$ which is strictly determined by $M$.

In section \ref{local} we discover a low energy limit under which the \lq\lq toy-action''
becomes a local action after the arrangement imposition.

In section \ref{spin} we show that a new interpretation of spin nature arises spontaneously
from our framework. In the same section, the role of \lq\lq arrangement matrix'' is compared
to the role of an external observer.

In section \ref{symmetry} we anticipate some unpublished results regarding the
availment of our framework to describe all standard model interactions.

In section \ref{entropy} we apply a second quantization to the \lq\lq arrangement matrix'',
turning it in an operator which creates or annihilates edges. We show how this process can
give a new interpretation to black hole entropy and area law. We infer that quantization of
$M$ automatically quantizes $h$, apparently without renormalization problems.

\section{A non-ordered universe \label{unordered space}}

\subsection[Reciprocal relationship between space points]{Reciprocal relationship between space-time points \label{reciprocal}}

Every euclidean $4$-dimensional space can be approximated by a graph $\L$, that
is a collection of vertices connected by edges of length $\Delta$. We recover the
continuous space in the limit $\Delta \ra 0$. Moreover we can pass from the euclidean
space to the lorenzian space-time by extending holomorphically any function in the
fourth coordinate $x_4 \ra ix_4$ \cite{minko}.

In non commutative geometry, one can assume that a first vertex is connected to a second,
without the second is connected to the first. This means that connections between
vertices are made by two oppositely oriented edges, which we can represent by a couple of arrows.

We assume the vertices as fundamental quantities. Then we can select what couples of
vertices are connected by edges; different choices of couple generated different
graphs, which in the limit $\Delta \ra 0$ correspond to different spaces.

Our fundamental assumption is that the existence of an edge follows a probabilistic law,
like any other quantity in QM. We draw any pair of vertices, denoted by $v_{1}$ and $v_{2}$,
and we connect each other by a couple of arrows oriented in opposite directions.

Before proceeding, we extend the common definition of amplitude probability. Usually this is
a complex number, whose square module represents a probability and so is minor or equal to one.

We define instead the amplitude probability as an element in the division ring of quaternionic
numbers, commonly indicated with $\mathbf{H}$. Its square module represents yet a probability
and so is minor or equal to one. A quaternion $q$ have the form $q = a+ib+jc+kd$ with $a,b,c,d
\in \mathbf{R}$, $i^2 = j^2 = k^2 = -1$ and $ij = -ji = k$, $jk = -kj =i$, $ki = -ik =j$.

We write a quaternionic number near the arrow which moves from $v_1$ to $v_2$.
It corresponds to the probability amplitude for the existence of an edge which
connects $v_1$ with $v_2$. We do the same thing for the other arrow, writing the
probability amplitude for the existence of an edge which connects $v_2$ with $v_1$ \label{geometry}.

\begin{figure}[ptbh]
\centering\includegraphics[width=0.4\textwidth ]{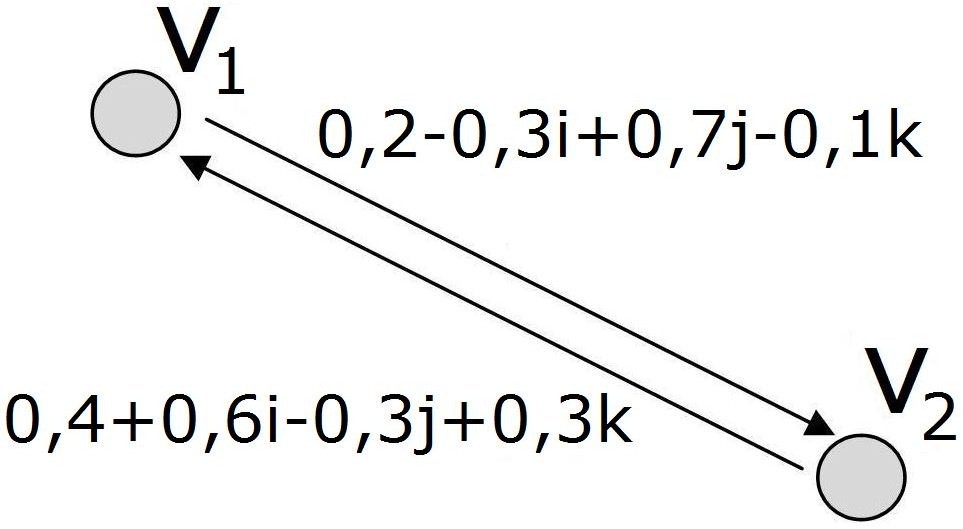}\end{figure}

A non-drawn arrow corresponds to an arrow with number $0$. In principle,
for every pair of vertices exists a couple of arrows which connect each other,
eventually with label $0$.

We can describe our universe by means of vertices connected by couple of arrows, with a
quaternionic number next to each arrow, as shown in figure \ref{eq: network}, below.

\begin{figure}[ptbh]
\centering\includegraphics[width=0.7\textwidth ]{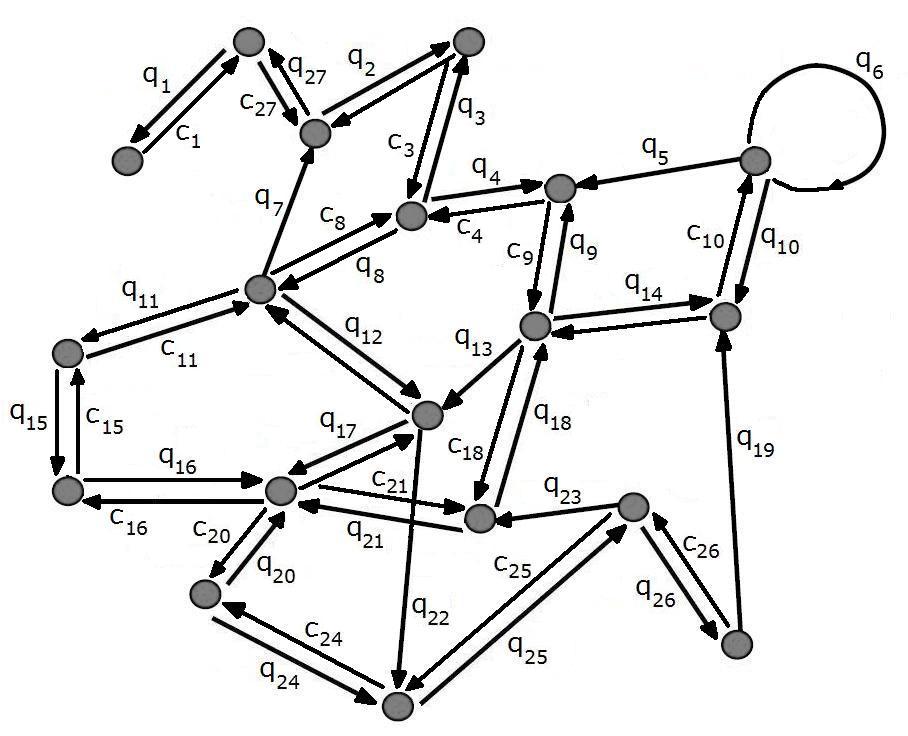}\caption{We can describe
our universe by means of vertices connected by couple of arrows, with a quaternionic
number next to each arrow.} \label{eq: network}
\end{figure}
What we are building is another variation of the Penrose's spin-network model \cite{spinnet}
or the Spin-Foam models \cite{spinfoam1}, \cite{spinfoam2} in Loop Quantum Gravity \cite{loopgravity},
which generalize Feynman diagrams.

\subsection{The Matrix relating couples of points \label{arrmatrix}}

Given a spin-network, like the one in figure \ref{eq: network}, we can move from picture to the
\lq\lq Arrangement Matrix'' $M$, which is a simple table constructed as follows. We enumerate all
the vertices in the graph at our will, provided we enumerate all of them.
Typically we think of indexing the vertices by the usual sequence of integers $1,2,3,4,5,\ldots.$

Thus we create such matrix, whose rows and columns are enumerated in the same way as the vertices in the graph.
Then we look at the vertices $v_{i}$ and $v_{j}$: in the entry $(i,j)$ we report the number situated near the
arrow which moves from $v_i$ to $v_j$. Similarly, in the entry $(j,i)$ we report the number written near the
opposite arrow. Remember that an absent arrow is an arrow with number $0$ and consider for the moment
$|M^{ij}|\leq 1$ for every $ij$.

\begin{figure}[h]
\centering\includegraphics[width=0.3\textwidth ]{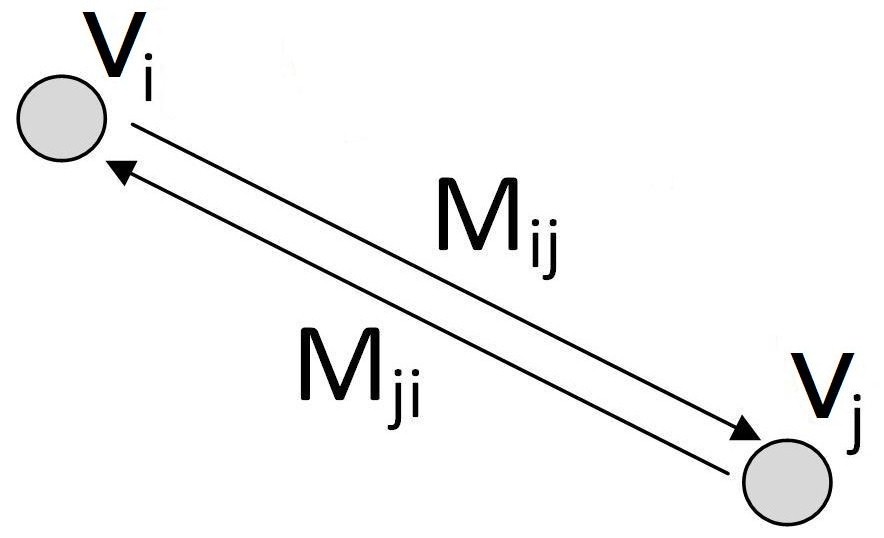}
\label{figuranuova}
\end{figure}

\noindent In principle, we can image an entry $M_{ij} \neq M_{ji}$, even with $|M_{ij}|^2 \neq |M_{ji}|^2$.
This means that $v_i$ may be connected to $v_j$ even if $v_j$ is not connected to $v_i$.
In that case, a non-commutative geometry is involved. The probability amplitude that $v_i$ and $v_j$
are mutually connected (we could talk about \lq\lq classical'' connection), is:

$$Cl.ampl. \propto M_{ij} M_{ji}$$
The probability amplitude for the vertex $v_i$ to be classically connected with any other
vertex (hence it will be not isolated) is:
$$Cl.ampl. \propto \sum_j M_{ij} M_{ji} = (M \cdot M)_{ii}$$
We can imagine our table with elements $M_{ij}$ as a machine which \lq\lq creates''
jointures between vertices, by connecting each other or closing a single vertex
onto itself through a loop. The loops are obviously represented by diagonal elements
of matrix, with the form $(i,i)$.

Now let's ask ourselves: is it necessary to know where the vertices are located?
Let's look at the Standard Model action: it is given by a sum (or more properly, an integral),
over $all$ the points of the universe, of locally defined terms. Any term is defined on a single point.
Since the terms are separated - a term for each point - and we integrate all of them,
we do not need to know where the points physically are.

However, there are terms which are not strictly local, ie those containing the derivative
operator $\partial$. The operator $\partial$, acting on a field $\varphi$\ in the point $v_{j}$,
calculates the difference between the value of $\varphi$ in a point immediately \lq\lq after''
$v_{j}$, and the value of $\varphi$ immediately \lq\lq\ before'' $v_{j}$.

In the discretized theory, the integral over points becomes a sum over
vertices of the graph. Similarly, the derivative becomes a finite difference.
Hence, for terms containing $\partial$, we need a clear definition of
\lq\lq before'' and \lq\lq after'', that is an arrangement of the vertices, as defined
by the matrix $M$.

We consider a scalar field but don't represent it with the usual function
(or distribution) $\varphi\left(  x\right) $. Instead we denote it
with a column of elements (an array) where each element
is the value of the field in a specific vertex of the graph. For example (with only 7 vertices):

\normalsize
\begin{spacing}{1.1}
\begin{equation}
\varphi=\left(
\begin{array}
[c]{c}%
\varphi\left(  p_{0}\right) \\
\varphi\left(  p_{1}\right) \\
\varphi\left(  p_{2}\right) \\
\varphi\left(  p_{3}\right) \\
\varphi\left(  p_{4}\right) \\
\varphi\left(  p_{5}\right) \\
\varphi\left(  p_{6}\right)
\end{array}
\right)
\end{equation}
\end{spacing}
\large

\begin{spacing}{1.5}
\noindent For simplicity, we start with a one-dimensional graph: it's easy to see how the
derivative operator is proportional to an antisymmetric matrix $\tilde{M}$ whose
elements are different from zero only immediately above the diagonal (where they count +1),
and immediately below (where they count -1). We can see this, for example, in a
\lq\lq toy-graph'' formed by only $12$ separated vertices (figure \ref{cerchio}).
The argument remains true while increasing the number of vertices.
\end{spacing}
\begin{spacing}{1.1}
\normalsize
$$
\!
\partial\varphi \! = \!\fr 1 {2\Delta} \!\left(
\begin{array}
[c]{cccccccccccc}%
0 & +1 & 0 & 0 & 0 & 0 & 0 & 0 & 0 & 0 & 0 & -1\\
-1 & 0 & +1 & 0 & 0 & 0 & 0 & 0 & 0 & 0 & 0 & 0\\
0 & -1 & 0 & +1 & 0 & 0 & 0 & 0 & 0 & 0 & 0 & 0\\
0 & 0 & -1 & 0 & +1 & 0 & 0 & 0 & 0 & 0 & 0 & 0\\
0 & 0 & 0 & -1 & 0 & +1 & 0 & 0 & 0 & 0 & 0 & 0\\
0 & 0 & 0 & 0 & -1 & 0 & +1 & 0 & 0 & 0 & 0 & 0\\
0 & 0 & 0 & 0 & 0 & -1 & 0 & +1 & 0 & 0 & 0 & 0\\
0 & 0 & 0 & 0 & 0 & 0 & -1 & 0 & +1 & 0 & 0 & 0\\
0 & 0 & 0 & 0 & 0 & 0 & 0 & -1 & 0 & +1 & 0 & 0\\
0 & 0 & 0 & 0 & 0 & 0 & 0 & 0 & -1 & 0 & +1 & 0\\
0 & 0 & 0 & 0 & 0 & 0 & 0 & 0 & 0 & -1 & 0 & +1\\
+1 & 0 & 0 & 0 & 0 & 0 & 0 & 0 & 0 & 0 & -1 & 0
\end{array}
\right) \!\!\! \left(
\begin{array}
[c]{c}%
\varphi\left(  0\right) \\
\varphi\left(  1\right) \\
\varphi\left(  2\right) \\
\varphi\left(  3\right) \\
\varphi\left(  4\right) \\
\varphi\left(  5\right) \\
\varphi\left(  6\right) \\
\varphi\left(  7\right) \\
\varphi\left(  8\right) \\
\varphi\left(  9\right) \\
\varphi\left(  10\right) \\
\varphi\left(  11\right)
\end{array}
\right) $$

\be = \left(
\begin{array}
[c]{c}%
\varphi\left(  1\right)  -\varphi\left(  11\right) \\
\varphi\left(  2\right)  -\varphi\left(  0\right) \\
\varphi\left(  3\right)  -\varphi\left(  1\right) \\
\varphi\left(  4\right)  -\varphi\left(  2\right) \\
\varphi\left(  5\right)  -\varphi\left(  3\right) \\
\varphi\left(  6\right)  -\varphi\left(  6\right) \\
\varphi\left(  7\right)  -\varphi\left(  5\right) \\
\varphi\left(  8\right)  -\varphi\left(  6\right) \\
\varphi\left(  9\right)  -\varphi\left(  7\right) \\
\varphi\left(  10\right)  -\varphi\left(  8\right) \\
\varphi\left(  11\right)  -\varphi\left(  9\right) \\
\varphi\left(  0\right)  -\varphi\left(  10\right)
\end{array}
\right)\label{eq: matrix_dev}
\ee
$$ \pt $$
\end{spacing}
\large
\begin{figure}[ptbh]
\centering\includegraphics[width=0.4\textwidth ]{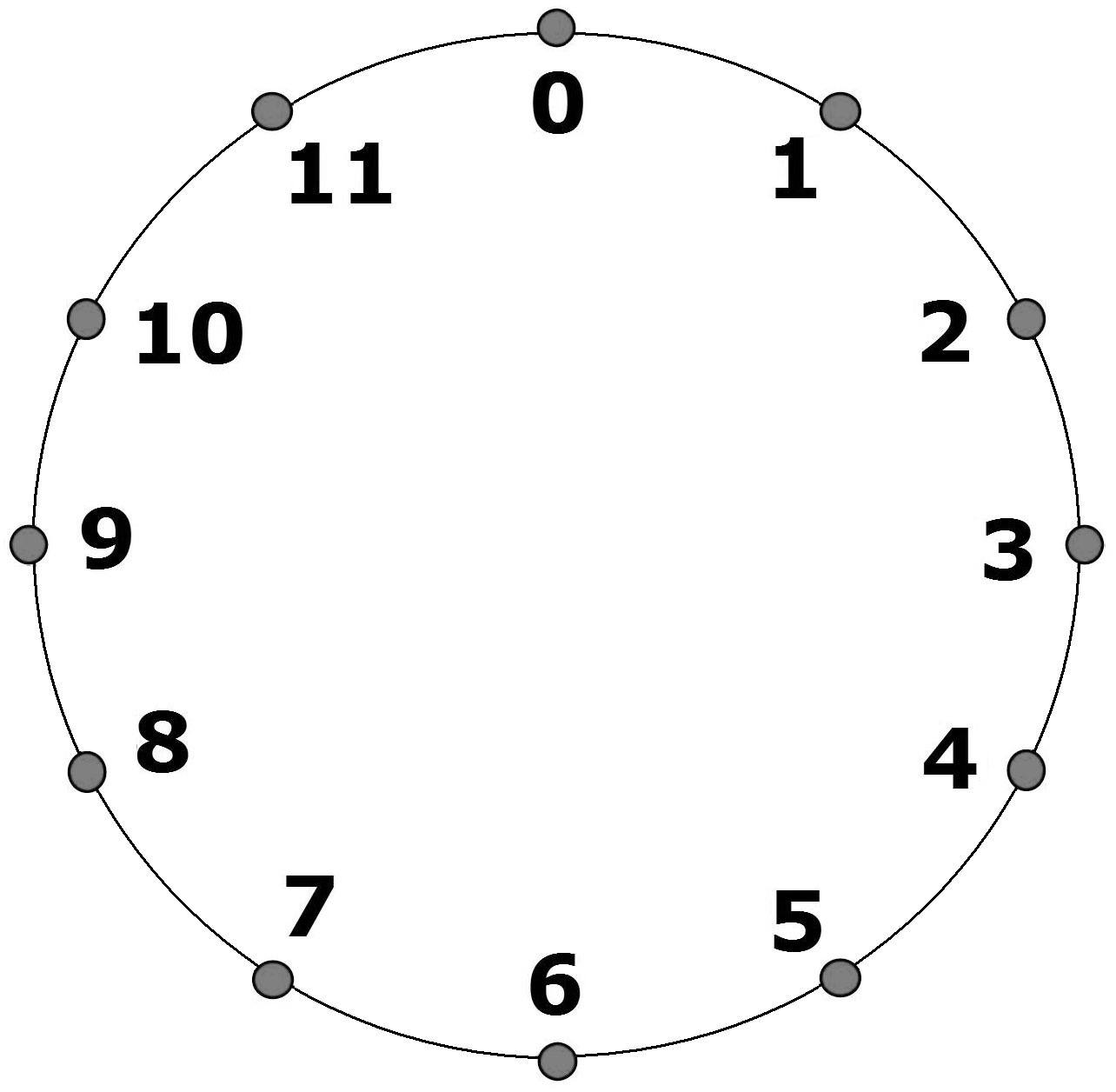}\caption{{A simple graph with $12$ vertices
which approximates a circular one-dimensional space.}}
\label{cerchio}
\end{figure}
\noindent $\Delta$ is the length of graph edges. In the continuous limit, $\Delta \ra 0$
(that occurs in Hausdorff spaces, where matricial product turns into a convolution), we obtain

\ba
\pa \varphi (x) &=& \lim_{\Delta \ra 0}\fr 1 {2\Delta} \int \tilde{M}(x,y) \varphi (y) dy \nonumber \\
\pa \varphi (x) &=& \lim_{\Delta \ra 0}\fr 1 {2\Delta} \int \left[ \delta (y-(x+\Delta)) - \delta (y-(x-\Delta))\right] \varphi (y) dy \nonumber \\
\pa \varphi (x) &=& \lim_{\Delta \ra 0}\fr {\varphi(x+\Delta)-\varphi(x-\Delta)} {2\Delta} = \pa\varphi (x)
\ea
In this way our definition is consistent with the usual definition of derivative.

While increasing the number of points, a $(-1)$ still remains in the up right corner
of the matrix, and a $(+1)$ in the down left corner as well.
To remove those two non-null terms, it is sufficient to make them unnecessary,
by imposing boundary conditions that make the field null in the first and in the last point.

In fact we can describe an open universe (a straight line in one dimension),
starting from a closed universe (a circle) and making the radius to tend to infinity.
Hence we see that the conditions of null field in the first and in the last point become
the traditional boundary conditions for the Standard Model fields.

\begin{notation}
Note that in spaces with more than one dimension, a derivative matrix $\tilde{M}_\mu$
assumes the form (\ref{eq: matrix_dev}) only if we number the vertices progressively
along the coordinate $\mu$. However, two different numberings can be always related
by a vertices permutation.
\end{notation}

\section[A quaternionic field action in a non-ordered space]{A quaternionic field action in a non-ordered space-time \label{ord}}

\begin{definition}For any graph $\L$ we define its \emph{associated non-ordered space}
$\mathbf{S}_\Lambda$ as the ensemble of all its vertices. \end{definition} The graph
includes vertices plus edges (ordered connections between vertices), while the
\emph{associated non-ordered space} contains only vertices. In some sense, $\S$ doesn't
know where any vertex is.

Consider a \emph{numbering function} $\pi$, that is whatever bijection from
$X \subset\mathbf{N}$ to the non-ordered space.

\ba \pi &:& X\subset\mathbf{N} \longrightarrow \S \nonumber \\
        &&  i \longrightarrow v_i = \pi(i) \nonumber \ea

\noindent In this way, every vertex $v_i$ in $\S$ is one to one with an integer
$i \in X\subset\mathbf{N}$. This means that the ensemble of vertices has to be at
most numerable.

We consider a generic invertible matrix $M$ and interpret any entry $M^{ij}$ of $M$
as the probability amplitude for the existence in $\L$ of an edge which connects
$\pi(i)$ with $\pi(j)$. Remember that a couple of vertices can be connected by at most
two oriented edges with different orientations. $M^{ij}$ defines the probability amplitude
for the edge which moves from $\pi(i)$ to $\pi(j)$, while $M^{ji}$ defines the probability
amplitude for the edge which moves from $\pi(j)$ to $\pi(i)$.

Take care that in four dimensions we have to number the vertices by
elements $(i,j,k,l)$ in $\mathbf{N}^4$ before taking the limit $\Delta \ra 0$.
In this way $\sum_{(i,j,k,l)}\Delta^4$ becomes $\int dx^0 dx^1 dx^2 dx^3$.
If, as we have suggested, the vertices have been already numbered
with elements of $\mathbf{N}$, we can change the numbering by using
the natural bijection $\vartheta$ between $\mathbf{N}$ and $\mathbf{N}^4$, with
$(i,j,k,l) = \vartheta (a)$, $(i,j,k,l) \in \mathbf{N}^4$ and $a \in \mathbf{N}$.

\begin{definition}[covariant derivative] Given any skew hermitian matrix $A_\mu$,
with entries in $\mathbf{H}$, and a skew hermitian matrix $\tilde{M}_\mu$, which assumes
the form (\ref{eq: matrix_dev}) when the vertices are numbered along the coordinate $\mu$,
their associated covariant derivative is
\end{definition}
\begin{equation}
\nabla_{\mu}=\tilde{M}_{\mu}+ A_{\mu}.
\end{equation}

\begin{definition}[arrangement] We indicate with $n$ the number of elements inside $X \subset \mathbf{N}$.
Given a normal matrix $\hat{M}$ and four covariant derivatives $\nabla_\mu$ ($\mu = 0,1,2,3$) with dimensions $n \times n$,
an \emph{arrangement} for $\hat{M}$ is a quadruplet of couples $(\hat{D}^\mu,\hat{U})$, with $\hat{D}^\mu$
diagonal and $\hat{U}$ hyperunitary, such that

\begin{equation}
\hat{M} = \sum_\mu \hat{U}\hat{D}^\mu \nabla_\mu \hat{U}^{\dag}.
\end{equation}
\end{definition}
We require that covariant derivative will be form-invariant under the action of a transformation
$V \in U(n,\mathbf{H})$ which acts both on $\tilde{M}_\mu$ and $A_\mu$. We explicit $V\nabla_{\mu}V^{\dag}$:

\begin{align}
V\nabla_{\mu}V^{\dag}  &  = V\left(  \tilde{M}_{\mu}+A_{\mu}\right)  V^{\dag}\\
&  = \underset{=1}{\underbrace{VV^{\dag}}}\tilde{M}_{\mu}+V\left[  \tilde
{M}_{\mu},V^{\dag}\right]  +VA_{\mu}V^{\dag}\nonumber .
\end{align}
Setting

\begin{equation}
A_{\mu}^{\prime}= V\left[  \tilde{M}_{\mu},V^{\dag}\right]  +VA_{\mu}V^{\dag},
\end{equation}
we obtain

\begin{equation}
V\nabla_{\mu}V^{\dag}=\tilde{M}_{\mu}+A_{\mu}^{\prime}\overset{def}{=}%
\nabla_{\mu}^{\prime} \label{eq: der_A}
\end{equation}
that means

$$V\nabla_{\mu}[A]V^{\dag} = \nabla_\mu [A'].$$

\noindent Hence the transformation law for the matrix $A_{\mu}$ is like we expect:

\begin{equation}
A_{\mu}\rightarrow {A'}_\mu =
V\left[  \tilde{M}_{\mu},V^{\dag}\right]  +VA_{\mu}V^{\dag} . \label{eq: trasforma_A}
\end{equation}

\noindent We observe that (\ref{eq: trasforma_A}) preserves the hermiticity of $A_{\mu}$. In fact

\ba
{A'}^\dag_\mu &=& (V[\tilde{M}_\mu,V^\dag] + VA_\mu V^\dag)^\dag \nonumber \\
            &=& (V\tilde{M}_\mu V^\dag - \tilde{M}_\mu + VA_\mu V^\dag)^\dag \nonumber \\
            &=& V\tilde{M}_\mu^\dag V^\dag - \tilde{M}_\mu^\dag + V A_\mu^\dag V^\dag \nonumber \\
            &=& - V \tilde{M}_\mu V^\dag + \tilde{M}_\mu - V A_\mu V^\dag \nonumber \\
            &=& -(V[\tilde{M}_\mu,V^\dag] + VA_\mu V^\dag) = -{A'}_\mu \ea
It's easy to see that (\ref{eq: trasforma_A}) reduces to the usual transformation for a gauge field
${A'}_{\mu} = V\partial_{\mu}V^{\dag}+VA_{\mu}V^{\dag}$ in the limit $\Delta \ra 0$.

\begin{theorem}
For every invertible normal matrix $\hat{M}$ and every covariant derivative $\nabla[A]_\mu$ which is
invertible (in the matricial sense), there exist
\begin{enumerate}
\item A new quadruplet of covariant derivatives ${\nabla'}_\mu = \nabla[A']_\mu$ such that
$D^\mu {\nabla'}_\mu = 1$ for some diagonal matrix $D^\mu$, where $A^\prime_\mu$ is the gauge
transformed of $A_\mu$ for some unitary transformation $U$;
\item An arrangement $(\hat{D}^\mu,\hat{U})$ between $\hat{M}$ and $\nabla^\prime_\mu$.
\end{enumerate}\label{existence}\end{theorem}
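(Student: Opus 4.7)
The plan is to establish Part 1 by direct construction of a gauge transformation together with diagonal multipliers, and then derive Part 2 by combining Part 1 with the spectral decomposition of $\hat{M}$.

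For Part 1, read $D^\mu \nabla'_\mu = 1$ with implicit summation on $\mu$, and recall that $\nabla'_\mu = U\nabla_\mu U^\dag$. The target equation becomes $\sum_\mu (U^\dag D^\mu U)\nabla_\mu = 1$, i.e.\ the identity is to be expanded as a linear combination of the (invertible) $\nabla_\mu$ whose coefficients $\tilde{D}^\mu := U^\dag D^\mu U$ are mutually normal and jointly diagonalized by $U$. The strategy is to exploit the invertibility of, say, $\nabla_0$ to take $\tilde D^0$ as a perturbation of $\nabla_0^{-1}$, and to use the three remaining coefficients $\tilde D^1,\tilde D^2,\tilde D^3$ together with the choice of $U$ to cancel the residue $\sum_{i\ge 1}\tilde D^i \nabla_i$. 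An implicit-function argument anchored at a base point where $\nabla_0^{-1}$ is close to a diagonal matrix should exhibit surjectivity of the differential of the parameter-to-constraint map $(U, D^\mu)\mapsto \sum_\mu D^\mu U\nabla_\mu U^\dag$ onto a neighbourhood of the identity.

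Part 2 is then almost immediate. By the spectral theorem for normal quaternionic matrices, write $\hat{M} = \hat{U}\hat{D}\hat{U}^\dag$ with $\hat{U}$ hyperunitary and $\hat{D}$ diagonal. Substitute the identity furnished by Part 1 between $\hat{D}$ and $\hat{U}^\dag$:
\[
\hat{M} \;=\; \hat{U}\hat{D}\Bigl(\sum_\mu D^\mu \nabla'_\mu\Bigr)\hat{U}^\dag \;=\; \sum_\mu \hat{U}\bigl(\hat{D}\,D^\mu\bigr)\nabla'_\mu\hat{U}^\dag .
\]
Since $\hat{D}$ and $D^\mu$ are both diagonal, so is their product $\hat{D}^\mu := \hat{D}\,D^\mu$, and the pair $(\hat{D}^\mu, \hat{U})$ is the required arrangement between $\hat{M}$ and $\nabla'_\mu$.

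The main obstacle is Part 1. A naive parameter count pits $\dim Sp(n) + 16n \approx 2n^2 + 17n$ real parameters against $4n^2$ real constraints, which becomes unfavourable for large $n$. The proof therefore cannot proceed by a generic dimension count; it must instead leverage the sparse, banded form of the reference shift matrices $\tilde{M}_\mu$ inherited from the lattice (together with the transformation law of the gauge correction $A_\mu$) to reduce the effective number of independent constraints. Pinning down the correct subvariety of hyperunitaries on which the residual linear system in the $D^\mu$ is solvable is the crux of the argument; once that point is in place, Part 2 assembles automatically from the spectral theorem.
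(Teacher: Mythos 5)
Your Part 2 is essentially the paper's: diagonalize $\hat{M}$ by the quaternionic spectral theorem and splice the identity $1=\sum_\mu D^\mu\nabla'_\mu$ between the diagonal factor and $\hat{U}^\dag$, absorbing the product of diagonals into $\hat{D}^\mu$. The divergence, and the genuine gap, is in Part 1, and it stems from how you read the equation $D^\mu\nabla'_\mu=1$. You read it with an implicit sum over $\mu$, which turns Part 1 into the hard problem of expanding the identity as $\sum_\mu \tilde{D}^\mu\nabla_\mu$ with coefficients that are simultaneously diagonalized by one unitary; you then concede that the naive parameter count is unfavourable and that identifying the subvariety of hyperunitaries on which the residual system is solvable is the unresolved crux. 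That concession is the gap: the implicit-function sketch gives no reason why the differential of $(U,D^\mu)\mapsto\sum_\mu D^\mu U\nabla_\mu U^\dag$ should be surjective anywhere, so on your reading Part 1 is not proved.

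The paper sidesteps the problem entirely by declaring, immediately after the displayed relation $1=D^{\mu}U\nabla_{\mu}U^{\dag}=U\nabla_{\mu}U^{\dag}D^{\mu}$, that \emph{no sum over repeated indices is implied}: the relation is required to hold for each $\mu$ separately. Each $\nabla_\mu$ is skew-hermitian, hence normal, hence diagonalizable by a hyperunitary $U$, and one simply sets $D^\mu=(U\nabla_\mu U^\dag)^{-1}$, which is diagonal and exists by the assumed invertibility. The four separate copies of the identity are then summed, producing a factor of $4$ that is absorbed into $\hat{D}^\mu=\frac{1}{4}\hat{D}^{-1}D^\mu$ --- which is also why the paper's $\hat{D}^\mu$ carries a $\frac{1}{4}$ that yours does not. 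So either you keep your summed reading and must actually solve the problem you identified, or you adopt the paper's unsummed reading, in which case Part 1 is immediate but you inherit the difficulty the paper itself glosses over: a \emph{single} unitary $U$ is being asked to diagonalize all four $\nabla_\mu$ at once, which generically fails unless the four covariant derivatives commute.
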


\begin{proof}
According to spectral theorem, $\forall \hat{M} \in\mathbb{M}^{(N)}$ $\exists \hat{U}$ hyperunitary
such that $\hat{U}\hat{M}\hat{U}^{\dag}= K$ with $K$ diagonal. $\hat{M}$ is invertible, so the same is
true for $K$. Setting $\hat{D}= K^{-1}$:

\begin{align}
\hat{U}\hat{M}\hat{U}^{\dag}\hat{D} = K\hat{D} = KK^{-1} = 1\label{ordinamento}\\
\hat{D}\hat{U}\hat{M}\hat{U}^{\dag}= \hat{D}K = K^{-1}K = 1 .\nonumber
\end{align}

At this point we choice a covariant derivative $\nabla_\mu$ (which is also a normal matrix) and
we reason as we did above for $\hat{M}$, putting

\begin{equation}
1 =D^{\mu}U\nabla_{\mu}U^{\dag}=U\nabla_{\mu}U^{\dag}D^{\mu}
\label{eq: riduzioneB}
\end{equation}
for some $D^\mu$ diagonal and $U$ unitary.
No sum over repeated indices is implied.

A well known theorem states that $U$ can be chosen in such a way that $D^\mu$ takes values in $\mathbf{C}$.
Moreover we can always find a quaternion $s$ with $|s|=1$ such that, if $D^\mu$ takes values
in $\mathbf{C} = \mathbf{R} \oplus i\mathbf{R}$, then $s^* D^\mu s$ will take values
in $\mathbf{C} = \mathbf{R} \oplus (ri+tj+pk)\mathbf{R}$, with fixed $r,t,p \in \mathbf{R}$ and $r^2+t^2+p^2 =1$.
Every $s$ with $|s|=1$ describes in fact a rotation in the $3$ dimensional space with base elements $i,j,k$.

Introducing such $s$, the equation (\ref{eq: riduzioneB}) becomes

\be
s1s = s^*D^{\mu}ss^{*}U\nabla_{\mu} U^{\dag}s.
\ee
Now we note that $s^*U$ is another hyperunitary transformation. Redefining $s^* D_\mu s \ra D_\mu$, $s^* U \ra U$
we obtain newly

\be
1 = D^{\mu}U\nabla_{\mu} U^{\dag}.
\ee
In this way we can always choose in what complex plane is $D_\mu$. In the following we call this
propriety \lq\lq $s$-invariance''. Using (\ref{eq: der_A}) into (\ref{eq: riduzioneB}):

\begin{equation}
1=D^\mu \nabla_{\mu}^{\prime}=\nabla_{\mu}^{\prime}D^\mu \Longrightarrow\left[
\nabla_{\mu}^{\prime},D^\mu \right]  = 0 .
\end{equation}
Taking into account (\ref{ordinamento}):

\begin{align}
\hat{D}\hat{U}\hat{M} \hat{U}^{\dag}  &  =D^\mu \nabla_{\mu}^{\prime}\\
\hat{U}\hat{M} \hat{U}^{\dag}\hat{D}  &  =\nabla_{\mu}^{\prime}D^\mu \nonumber .
\end{align}
Summing on $\mu$ we obtain:

\begin{align}
4\hat{D}\hat{U}\hat{M} \hat{U}^{\dag}  &  = \sum_\mu D^\mu \nabla_{\mu}^{\prime}\\
4\hat{U}\hat{M} \hat{U}^{\dag}\hat{D}  &  = \sum_\mu \nabla_{\mu}^{\prime}D^\mu . \nonumber
\end{align}

\noindent Solving for $\hat{M}$:

\be
\hat{M}  = \fr 14 \sum_\mu \hat{U}^{\dag} \hat{D}^{-1} D^{\mu} \nabla_{\mu}^{\prime} \hat{U}
   = \fr 14 \sum_\mu \hat{U}^{\dag} \nabla_{\mu}^{\prime} D^{\mu}\hat{D}^{-1} \hat{U}.
\ee
Defining $\hat{D}^\mu$ as $\fr 14 \hat{D}^{-1} D^\mu $

\be
\hat{M}  = \sum_\mu \hat{U}^{\dag} \hat{D}^{\mu} \nabla_{\mu}^{\prime} \hat{U} \label{finale-dim}
\ee
QED \end{proof}

Note that in general $\hat{M} \neq \sum_\mu \hat{U}^{\dag} \nabla_{\mu}^{\prime} \hat{D}^{\mu}\hat{U}$
because $\hat{D}^{-1} D^\mu \neq D^\mu \hat{D}^{-1}$ for the non commutativity of quaternions.

\begin{theorem} For every invertible matrix $M$ with entries in $\mathbf{H}$, a normal matrix $\hat{M} = U_M M$ exists,
where $U_M$ is unitary and $\hat{M}$ is neither hermitian nor skew hermitian. \label{normal}
\end{theorem}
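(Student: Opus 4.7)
The plan is to combine the quaternionic polar decomposition with a twist by a unitary matrix that commutes with the positive factor, so that the product is normal but neither hermitian nor skew-hermitian.

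First, since $M$ is invertible, $M^{\dag}M$ is positive-definite hermitian, and the spectral theorem for hermitian quaternionic matrices (already invoked in Theorem \ref{existence}) produces a hyperunitary $S$ with $S^{\dag}(M^{\dag}M)S = K$, where $K$ is a real positive diagonal matrix. Setting $D := K^{1/2}$, $P := S D S^{\dag}$, and $V := M P^{-1}$, one checks $V V^{\dag} = M P^{-2} M^{\dag} = M (M^{\dag}M)^{-1} M^{\dag} = 1$, so $V$ is unitary and $M = V P$ is the right polar factorization with $P$ positive hermitian.

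Next, I would construct a unitary $W$ commuting with $P$ by setting $W := S E S^{\dag}$, where $E$ is diagonal with every entry equal to a fixed unit quaternion $e_{0}$. Since $D$ is real diagonal and $E$ is diagonal, $ED = DE$, hence $[W,P]=0$. Defining $U_{M} := W V^{\dag}$ (a product of unitaries, hence unitary), one obtains
$$\hat{M} := U_{M} M = W V^{\dag} V P = W P = S(ED)S^{\dag},$$
which is normal because $\hat M \hat M^{\dag} = W P^{2} W^{\dag} = P^{2} = P^{\dag} W^{\dag} W P = \hat M^{\dag} \hat M$. To exclude hermiticity and skew-hermiticity, observe that $\hat M^{\dag} = S(D \bar E)S^{\dag}$; using that $D$ is invertible real diagonal and both $E,D$ are diagonal, $\hat M = \hat M^{\dag}$ forces $e_{0} = \bar e_{0}$ (i.e.\ $e_{0}\in\mathbf{R}$), while $\hat M = -\hat M^{\dag}$ forces $e_{0} = -\bar e_{0}$ (i.e.\ $e_{0}$ purely imaginary). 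The choice $e_{0} = (1+i)/\sqrt{2}$ excludes both cases and completes the argument.

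The one delicate point --- and the reason the naive attempt of multiplying $P$ by a scalar phase $e^{i\theta}\mathbf{1}$ fails --- is that non-real quaternionic scalars do not commute with a generic quaternionic matrix, so $e^{i\theta}P$ is not automatically normal. Routing $W$ through the diagonalizing frame $S$ of $P$, so that $W$ and $P$ are simultaneously diagonal in that frame, sidesteps this non-commutativity, and no other obstacle arises; the proof reduces to an application of polar decomposition plus a well-chosen diagonal phase.
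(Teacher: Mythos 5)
Your proof is correct and follows essentially the same route as the paper's: a quaternionic polar decomposition $M = VP$ followed by multiplication by a unitary that commutes with the positive hermitian factor $P$. The only real difference is in how that commuting unitary is built --- the paper uses $p=\exp\bigl((bI+cJ+dK)P\bigr)$ for skew-hermitian unitaries $I,J,K$ commuting with $P$, whereas you conjugate a concrete diagonal phase $e_0=(1+i)/\sqrt{2}$ into the eigenframe of $P$, which makes the final check that $\hat{M}$ is neither hermitian nor skew-hermitian explicit rather than merely asserted as in the paper.
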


\begin{proof} Given an invertible matrix $M$, a unique choice of matrices $U$ and $P$ always exists,
with $U$ unitary and $P$ hermitian positive, such that $UM = P$. Moreover, a well known theorem states
that, for every hermitian matrix $P$ with entries in $\mathbf{H}$, there exist $I,J,K$ skew hermitian
unitary matrices which commute with $P$. Moreover $I,J,K$ achieve the same algebra of quaternionic
imaginary unities $i,j,k$.
\begin{spacing}{1.5}
Consider then the unitary matrix $p = exp((bI + cJ + dK)P)$, with $b,c,d \in \mathbf{R}$.
It's easy to see that $[p,P]=0$. Moreover the matrix $\hat{M} = pP$ is normal and it is neither
hermitian or skew hermitian. In fact
\end{spacing}
\normalsize
$$(pP)^\dag = p^{\dag} P = p^{-1} P = \neq \pm pP$$
$$(pP)(pP)^\dag = (Pp)(Pp)^\dag = Ppp^\dag P^\dag = PP = Pp^\dag pP = P^\dag p^\dag pP = (pP)^\dag (pP)$$
\large

\noindent Moreover

$$\hat{M} = pUM = U_M M\qquad U_M = pU\,\,\, unitary.$$
\end{proof}

\begin{definition}[associated normal matrix] $\pt\!\!\!\!$ For every invertible matrix $M$, we define an \emph{associated normal matrix} as a normal matrix obtained trough the construction above. We indicate it with $\hat{M}$ and use the notation $U_M$ for the unitary transformation which transforms
$M$ in $\hat{M} = U_M M$. \end{definition}

\begin{theorem}
For every $n \times n$ invertible matrix $M$ with entries in $\mathbf{H}$ and every quadruplet of covariant derivatives $\nabla[A]_\mu$ which are invertible (in the matricial sense), there exist
\begin{enumerate}
\item An associated normal matrix $\hat{M} = U_M M$ with $U_M$ unitary;
\item A new quadruplet of covariant derivatives ${\nabla'}_\mu = \nabla[A']_\mu$ such that $D^\mu {\nabla'}_\mu = 1$ for some diagonal matrix $D^\mu$, where $A^\prime_\mu$ is the gauge transformed of $A_\mu$ for some unitary transformation $U$;
\item An arrangement $(\hat{D}_\mu,\hat{U})$ between $\hat{M}$ and $\nabla^\prime_\mu$ such that
\end{enumerate}

\ba
&& S = (M\phi)^\dag \cdot(M\phi) =\sum_{i=1}^n \sum_{\mu,\nu} \sqrt{\left\vert h\right\vert }
h^{\mu\nu}(x_i) ({\nabla'}_{\mu}\phi^{\prime}(x_i))^* ({\nabla'}_{\nu}\phi^{\prime}(x_i))  .\nonumber \\
&& \pt \label{azione-sca}
\ea

\noindent Here $\phi$ is a one-component quaternionic field, while

\ba
&& x_i \equiv \pi(i) \nonumber \\
&& \phi^{\prime}(x_{i})= {\phi^{\prime}}^{i}(x)= \sum_{j} \hat{U}^{ij}\phi^{j}(x) = \sum_{j} \hat{U}^{ij}\phi(x_{j})\nonumber \\
&& \sqrt h h^{\mu\nu}(x_i) = \fr 12 d^\mu d^{*\nu}(x_i) +c.c.\qquad \hat{D}_\mu^{ij} = d^\mu(x_i)\d_{ij}. \nonumber \\ && \label{transf-scalar}\ea
\end{theorem}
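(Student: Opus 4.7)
The proof will combine the two preceding theorems, \ref{existence} and \ref{normal}, essentially by straightforward substitution, but handling the quaternionic non-commutativity in the final step will require care.

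My plan is as follows. First I would invoke Theorem \ref{normal} to produce the unitary $U_M$ and the normal matrix $\hat{M} = U_M M$ referenced in item 1. Since $U_M$ is unitary, the quadratic form on the left of (\ref{azione-sca}) is unchanged when $M$ is replaced by $\hat{M}$:
\begin{equation}
(M\phi)^\dag(M\phi) = \phi^\dag M^\dag U_M^\dag U_M M\phi = (\hat{M}\phi)^\dag(\hat{M}\phi).
\end{equation}
Next, since $\hat{M}$ is normal and the $\nabla[A]_\mu$ are invertible, I would apply Theorem \ref{existence} directly to the pair $(\hat{M},\nabla[A]_\mu)$. This produces the gauge transformation $U$, the new connection $A'_\mu$ with $\nabla'_\mu = \nabla[A']_\mu$ satisfying $D^\mu\nabla'_\mu = 1$, and the arrangement $(\hat{D}^\mu,\hat{U})$ with $\hat{M} = \sum_\mu \hat{U}^\dag \hat{D}^\mu \nabla'_\mu \hat{U}$, as in (\ref{finale-dim}). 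This establishes items 1, 2 and the existence half of item 3.

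The main content is then to insert the arrangement decomposition into $(\hat{M}\phi)^\dag(\hat{M}\phi)$ and read off the metric. Setting $\phi' = \hat{U}\phi$ and using unitarity of $\hat{U}$, the flanking factors collapse and I obtain
\begin{equation}
(\hat{M}\phi)^\dag(\hat{M}\phi) = \Bigl(\sum_\mu \hat{D}^\mu \nabla'_\mu \phi'\Bigr)^\dag \Bigl(\sum_\nu \hat{D}^\nu \nabla'_\nu \phi'\Bigr).
\end{equation}
Because each $\hat{D}^\mu$ is diagonal with entries $d^\mu(x_i)$, the inner product decouples across vertices and I get
\begin{equation}
S = \sum_i \sum_{\mu,\nu} (\nabla'_\mu \phi'(x_i))^* \, d^{*\mu}(x_i) d^{\nu}(x_i) \, (\nabla'_\nu \phi'(x_i)).
\end{equation}

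The hard part, and the main obstacle, is to recognise the scalar factors $d^{*\mu}d^\nu$ sandwiched between the quaternionic derivative values as the metric density prescribed by the theorem, namely $\sqrt{|h|}\,h^{\mu\nu} = \tfrac12 d^\mu d^{*\nu} + \mathrm{c.c.}$ In general $d^{*\mu}d^\nu$ does not commute with $\nabla'_\nu\phi'(x_i)$, so one cannot simply pull it out in front. My approach is to exploit the $s$-invariance established inside the proof of Theorem \ref{existence} in order to confine all the $d^\mu(x_i)$ to a single complex plane. Then relabelling $\mu\leftrightarrow\nu$ in one copy of the double sum and combining with its quaternionic conjugate produces exactly the symmetrisation $\tfrac12(d^\mu d^{*\nu} + d^\nu d^{*\mu})$ acting as a real scalar in front of $(\nabla'_\mu\phi')^*(\nabla'_\nu\phi')$; the antisymmetric imaginary remainder drops out because $S$ is manifestly real. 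Identifying the symmetric combination with $\sqrt{|h|}\,h^{\mu\nu}(x_i)$ closes the argument.
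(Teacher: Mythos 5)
Your proposal follows the paper's own proof essentially step for step: Theorem \ref{normal} supplies $\hat{M}=U_M M$ and the invariance of $\phi^\dag M^\dag M\phi$ under the unitary $U_M$, Theorem \ref{existence} supplies the arrangement, and substituting $\hat M=\sum_\mu\hat U^\dag\hat D^\mu\nabla'_\mu\hat U$ collapses the $\hat U$ factors to give the same vertex-diagonal double sum $\sum_{i,\mu,\nu}(\nabla'_\mu\phi')^*\,d^{*\mu}d^\nu\,(\nabla'_\nu\phi')$. Your closing symmetrisation via the $\mu\leftrightarrow\nu$ relabelling and $s$-invariance is in fact more explicit than the paper's treatment, which simply inserts $\tfrac12(\hat D^{\mu\dag}\hat D^\nu+\mathrm{c.c.})$ and identifies it with $\sqrt{|h|}\,h^{\mu\nu}$ without commenting on the quaternionic ordering issue you raise.
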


\begin{proof} The existence of $\nabla^\prime_\mu = \nabla[A']_\mu$ follows from the proof of theorem \ref{existence}, while the existence of an associated normal matrix $\hat M = U_M M$ descends from theorem \ref{normal}. Hence we see that the first action in (\ref{azione-sca}) is invariant for transformations $(U_1,U_2)$ in $U(n,\mathbf{H}) \otimes U(n,\mathbf{H})$ which send $M$ in $U_2 M U_1^\dag$ and $\phi$ in $U_1\phi$. In fact
\begin{spacing}{1.3}
\ba S[\phi] &=& \phi^\dag M^\dag M \phi \nonumber \\
        &\ra& \phi^\dag U_1^\dag (U_2 M U_1^\dag)^\dag (U_2 M U_1^\dag) U_1 \phi \nonumber \\
        &=& \phi^\dag U_1^\dag U_1 M^\dag U_2^\dag U_2 M U_1^\dag U_1 \phi \nonumber \\
        &=& \phi^{\dag} M^\dag M \phi = S[\phi]\ea
If we set $U_1 = 1$ and $U_2 = U_M$ we have

\be S[\phi] \ra \phi^\dag M^\dag U_M^\dag U_M M \phi = \left\{ \begin{array}[l]{ll} = \phi^\dag M^\dag M \phi = S[\phi] \\
    = \phi^\dag \hat{M}^\dag \hat{M} \phi \qquad .\end{array} \right. \label{sca-nuova}  \ee
We substitute (\ref{finale-dim}) in (\ref{sca-nuova}) with $\hat{M}$ in place of $M$.
\end{spacing}
\ba
    S[\phi] &=& {\displaystyle\sum\limits_{\mu,\nu}}
    \left(  \hat{U}^{\dag}\hat{D}^{\mu}\nabla_{\mu}^{\prime}\hat{U}\phi\right)^\dag  \left(
    \hat{U}^{\dag}\hat{D}^{\nu}\nabla_{\nu}^{\prime}\hat{U}\phi\right) \nonumber \\
   &=& {\displaystyle\sum\limits_{\mu,\nu}}
    \left(  \phi^{\dag}\hat{U}^{\dag}\nabla_{\mu}^{\prime\dag}\hat{D}^{\mu\dag}\hat{U}
    \hat{U}^{\dag}\hat{D}^{\nu}\nabla_{\nu}^{\prime}\hat{U}\phi\right) \nonumber\\
   &=& {\displaystyle\sum\limits_{\mu,\nu}}
    \left( \phi^{\dag}\hat{U}^{\dag}\nabla_{\mu}^{\prime\dag}\hat{D}^{\mu\dag}
    \underset{=1}{\underbrace{\hat{U}\hat{U}^{\dag}}}\hat{D}^{\nu}\nabla_{\nu}^{\prime}
    \hat{U}\phi\right) \nonumber\\
   &=& {\displaystyle\sum\limits_{\mu,\nu}}
    \left(  \phi^{\dag}\hat{U}^{\dag}\nabla_{\mu}^{\prime\dag}\hat{D}^{\mu\dag}
    \hat{D}^{\nu}\nabla_{\nu}^{\prime}\hat{U}\phi\right) \nonumber\\
   &=& {\displaystyle\sum\limits_{\mu,\nu}}
    \left( \phi^{\prime\dag}\nabla_{\mu}^{\prime\dag} \hat{D}^{\mu\dag}\hat{D}^{\nu}
    \nabla_{\nu}^{\prime}\phi^{\prime}\right) \nonumber .
\ea
In the last step we have taken in account the definition
(\ref{transf-scalar}). Finally

\be
S  = \fr 12 \sum_{\mu,\nu} \phi^{\prime^\dag} \nabla_{\nu}^{\prime^\dag}
\left( \hat{D}^{\mu\dag}\hat{D}^{\nu} +c.c. \right) \nabla_{\mu}^{\prime}\phi^{\prime}.
\ee
It is remarkable that $\hat{D}_{\mu}$ is diagonal:
\begin{equation}
\hat{D}^{\mu}_{ij}=d^{\mu}\left(  x_{i}\right)  \delta_{ij} .
\end{equation}
We can set

\begin{equation}
\sqrt{\left\vert h\right\vert }h^{\mu\nu}\left(  x_{i}\right)  = \fr 12 d^{\mu *}d^{\nu}\left(  x_{i}\right)  +c.c.
\end{equation}
and then

\begin{equation}
S= {\displaystyle\sum\limits_{i,\mu,\nu}}
\sqrt{\left\vert h\right\vert }h^{\mu\nu}(x_{i})\left(  \nabla_{\mu}^{\prime
}\phi^{\prime}\right)^{*i} \left(  \nabla_{\nu}^{\prime}
\phi^{\prime}\right)^{i} . \label{ultima2}
\end{equation}
\end{proof}
QED.

The action of a transformation $(U_1, U_2)$ on $\nabla'$ follows from its action on $M$.
We can always use the invariance under $U(n,\mathbf{H}) \otimes U(n,\mathbf{H})$ to put $M$
in the form $M =\sum_\mu \hat{D}^\mu \nabla'_\mu$. Starting from this we have

\be U_2 M U_1^\dag = \sum_\mu U_2 \hat{D}^\mu \nabla'_\mu U_1^\dag = \sum_\mu U_2 \hat{D}^\mu U_1^\dag U_1\nabla'_\mu U_1^\dag.\label{transf}\ee
We define ${\nabla''}_\mu = U_1\nabla'_\mu U_1^\dag$ the transformed of $\nabla'$ under $(U_1, U_2)$ and
$\hat{D}^{\prime\mu} = U_2 \hat{D}^\mu U_1^\dag$ the transformed of $\hat{D}^\mu$. We assume that $A^\prime_\mu$
inside $\nabla^\prime_\mu$ transforms correctly as a gauge field, so that

$$\nabla^\prime [A^\prime]_\mu \phi' = \nabla^\prime [A^\prime]_\mu U_1^\dag \phi'' = U_1^\dag \nabla'' [A^\prime]_\mu \phi'' = U_1^\dag \nabla^\prime [A^\prime_{U1}]_\mu \phi''$$
$$\phi'' = U_1 \phi' .$$

\noindent We want $\hat{D}^{\prime\mu}$ remain diagonal and $h' = h[\hat{D}'] = h[\hat{D}]$.
In this case there are two relevant possibilities:
\begin{enumerate}
\item $\hat{D}$ is a matrix made by blocks $m \times m$ with $m$ integer divisor of $n$
and every block proportional to identity. In this case the residual symmetry is $U(1,\mathbf{H})^n \times U(m,\mathbf{H})^{n/m}$ with
elements $(sV, V)$, $s$ both diagonal and unitary, $V \in U(m,\mathbf{H})^{n/m}$;
\item $h$ is any diagonal matrix. The symmetry reduces to $U(1,\mathbf{H})^n \otimes U(1,\mathbf{H})^n$
which is local $U(1,\mathbf{H}) \otimes U(1,\mathbf{H}) \sim SU(2) \otimes SU(2) \sim SO(4)$.
\end{enumerate}

\noindent In this way, if we keep fixed the metric $h$ and keep diagonal $\hat{D}$, the new action will
be invariant at least under $U(1,\mathbf{H})^n \otimes U(1,\mathbf{H})^n$ which doesn't modify $h$.

Note however that the action (\ref{ultima2}) is highly non local, because the fields $A_\mu(x^a, x^b)$ with
$a \neq b$ can relate couples of vertices very far each other. In fact the transformations
in $U(n,\mathbf{H})$ mix all the vertices in the universe independently from their
position. In the next section we'll discover in what limit (besides $\Delta \ra 0$) the
(\ref{ultima2}) becomes a local action. Let us now pause on the metric $h^{\mu\nu}$.

\begin{notation}
We observe how the metric $h$ has appeared from nowhere.
We get the \lq\lq impression'' that the metric does not exist \lq\lq a priori'',
but is generated by the matrices $\hat{D}$. In other words:
the metric is simply the result of our desire to see an ordered universe at any cost.
\end{notation}

\begin{notation}
Note that we have chosen the matrix $\nabla$ between skew hermitian matrices, so that the gauge fields
$AR_i$ have real eigenvalues, corresponding to effectively measurable quantities\footnote{The operator
$R_i$ acts on any array $\psi$ as $R_i \psi = \psi i$.}.
Conver\-se\-ly, $\hat{M}$ must remain generically normal.
In fact, if $\hat{M}$ was (skew) hermitian, the fields $d$ would become (imaginary) real, and there would
not be enough degrees of freedom to construct the metric $h$.
\end{notation}

\noindent We focus on the relationship:

\begin{equation}
\sqrt{\left\vert h\right\vert }h^{\mu\nu}\left(  x_{i}\right)  = \fr 12 d^{\mu *}d^{\nu}\left(  x_{i}\right)  +c.c. \label{eq: metrica_ord}
\end{equation}

\noindent We set:

\be
d =\left(
\begin{array}
[c]{c}%
a_{0}+ib_{0}+jc_0+kd_0 \\
a_{1}+ib_{1}+jc_1+kd_1 \\
a_{2}+ib_{2}+jc_2+kd_2 \\
a_{3}+ib_{3}+jc_3+kd_3
\end{array}
\right)
\ee

It's easy to see how $s$-invariance permits us to choose the $D_\mu$ in such a way that the real
vectors
\begin{spacing}{1.3}
\be \left(
\begin{array}
[c]{c}%
a_{0}\\
a_{1}\\
a_{2}\\
a_{3}
\end{array}
\right), \quad
\left(
\begin{array}
[c]{c}%
b_{0}\\
b_{1}\\
b_{2}\\
b_{3}
\end{array}
\right), \quad
\left(
\begin{array}
[c]{c}%
c_0 \\
c_1 \\
c_2 \\
c_3
\end{array}
\right), \quad
\left(
\begin{array}
[c]{c}%
d_0 \\
d_1 \\
d_2 \\
d_3
\end{array}
\right)\ee
\end{spacing}
\newpage
\noindent will be linearly independent.

\ba
\sqrt{\left\vert h\right\vert }h^{-1}  &=& \left(
\begin{array}
[c]{cc}
a_{0}^{2}+b_{0}^{2}+c_0^2+d_0^2 & a_{0}a_{1}+b_{0}b_{1}+c_0 c_1+d_0 d_1 \\
a_{1}a_{0}+b_{1}b_{0}+c_1 c_0+d_1 d_0 & a_1^2+b_1^2+c_1^2+d_1^2 \\
a_2 a_0+b_2 b_0+c_2 c_0 +d_2 d_0 & a_2 a_1+b_2 b_1 +c_2 c_1+d_2 d_1 \\
a_3 a_0+b_3 b_0+c_3 c_0+d_3 d_0 & a_3 a_1+b_3 b_1+c_3 c_1+d_3 d_1
\end{array}
\right. \nonumber \\
&& \left. \begin{array}
[c]{cc}
a_{0}a_{2}+b_{0}b_{2}+c_0 c_2+d_0 d_2 & a_{0}a_{3}+b_{0}b_{3}+c_0 c_3+d_0 d_3 \\
a_1 a_2 + b_1 b_2+c_1 c_2+d_1 d_2 & a_1 a_3 +b_1 b_3+c_1 c_3+d_1 d_3 \\
a_2^2+b_2^2+c_2^2+d_2^2 & a_2 a_3 + b_2 b_3 + c_{2}c_{3}+d_{2}d_{3} \\
a_3 a_2+b_3 b_2 +c_3 c_2+d_3 d_2 & a_3^2+b_3^2+c_3^2+d_3^2
\end{array}\right) \nonumber
\ea

Note that we have $10$ independent metric components as it should be.
What would have happened if the entries of $M$ were been simply complex numbers?

In that case we could always take a one-form $X_\nu$ such that $X_\nu (Im$ $d^\nu) = X_\nu (Re\,d^\nu) = 0$.
The contraction of $X_\nu$ with the metric would be

$$\sqrt h h^{\mu\nu} X_\nu = d^{*\mu} (d^\nu X_\nu) + d^\mu (d^{*\nu} X_\nu) = 0 .$$

Hence the metric would be degenerate. For $d^\mu \in \mathbf{H}$ this can't happen,
because no one-form can be orthogonal to $4$ vectors linearly independent in a $4$-dimensional space.
Moreover a such one-form exists in spaces with dimension $>4$. For this reason our theory hasn't
meaning in presence of extra dimensions.

\section{A local action from the quaternionic field action}
\label{local}
Here we expose how to get a local action from the quaternionic field action in the limit of low energy.
We can add to action quadratic $\sim M^2$ and quartic $\sim M^4$ terms, provided they are gauge invariant.
In general we obtain a non-trivial potential of form $\alpha M^4 - \beta M^2$.
We suppose that a minimum for such potential breaks the symmetry $U(n,\mathbf{H}) \otimes U(n,\mathbf{H})$
and provides a mass to gauge fields $A_\mu$.
To view it is sufficient to rewrite $M$ as a function of $A_\mu$ and consider a quartic term:

\begin{equation}
h^{\mu\alpha}A_{\mu}A_{\alpha}h^{\nu\beta}A_{\nu}A_{\beta} .
\label{eq: quartico}
\end{equation}

For a minimum of $M$ there is a minimum of $A$ which gives sense to the expansion:

\begin{equation}
A_{\mu}=A_{\mu}^{\min}+\delta A_{\mu} .
\end{equation}
Therefore the (\ref{eq: quartico}) generates a factor:

\begin{equation}
m\left(  x\right)  ^{2}h^{\nu\beta}A_{\nu}A_{\beta}
\end{equation}

\begin{equation}
m\left(  x\right)  ^{2} = h^{\mu\alpha}A_{\mu}^{\min}A_{\alpha}^{\min}
\end{equation}
Hence the gauge fields acquire a mass, varying from point to point
in the universe and essentially dependent on the metric.

\begin{theorem}
Given a potential for $M$, which is both hermitian and invariant for $U(n,\mathbf{H})\otimes U(n,\mathbf{H})$,
his minimum configurations are always invariant at least for $U(1,\mathbf{H})^n \otimes U(1,\mathbf{H})^n$,
that is a local $U(1,\mathbf{H}) \otimes U(1,\mathbf{H})$.
\end{theorem}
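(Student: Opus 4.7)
The plan is to reduce the theorem to the residual-symmetry classification already carried out in the paragraph immediately preceding its statement. The essential ingredient is that $V$ depends on $M$ only through its quaternionic singular values. Since $V$ is invariant under $(U_1,U_2)\colon M\mapsto U_2 M U_1^{\dag}$, and combining Theorem \ref{normal} with the quaternionic spectral theorem yields a singular-value decomposition $M = U_2 \Sigma U_1^{\dag}$ with $\Sigma$ diagonal and real non-negative, invariance forces $V(M)=\tilde V(\sigma_1,\ldots,\sigma_n)$, a real symmetric function of the $\sigma_i$. Hermiticity of the potential guarantees that $\tilde V$ is really real-valued, so the minimization problem is well posed.

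Given any minimizing tuple $(\sigma_1^*,\ldots,\sigma_n^*)$, the full orbit $\{U_2\Sigma^*U_1^{\dag}\}$ consists of minima. I would exploit the $U(n,\mathbf{H})\otimes U(n,\mathbf{H})$ freedom to pick, as a representative, a configuration already in the arrangement form produced by Theorem \ref{existence}, with $\hat D^\mu$ diagonal, so that by (\ref{eq: metrica_ord}) the associated metric $h^{\mu\nu}$ is diagonal as well. This places the minimum into the second of the two possibilities enumerated just before the theorem, and the discussion there already identifies the subgroup of $U(n,\mathbf{H})\otimes U(n,\mathbf{H})$ preserving both the diagonality of $\hat D^\mu$ and the value of $h$ as exactly $U(1,\mathbf{H})^n\otimes U(1,\mathbf{H})^n$, that is a local $U(1,\mathbf{H})\otimes U(1,\mathbf{H})\sim SO(4)$. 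Since these transformations act trivially on singular values, they send minima to minima, which is precisely the claimed lower bound on the unbroken subgroup.

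The main obstacle is the degenerate stratum in which several $\sigma_i^*$ coincide: there $\hat D^\mu$ naturally takes the proportional-to-identity block form of possibility 1 of the classification, and the residual symmetry is the strictly larger $U(1,\mathbf{H})^n\times U(m,\mathbf{H})^{n/m}$. This does not contradict the ``at least'' assertion, but the proof must explicitly verify that this case-1 subgroup always contains a copy of $U(1,\mathbf{H})^n\otimes U(1,\mathbf{H})^n$, obtained by noting that the diagonal $U(1,\mathbf{H})^n$ sits inside each factor $U(m,\mathbf{H})$. Once this embedding is in place, the lower bound holds uniformly across all minima, both generic and degenerate, and the theorem follows.
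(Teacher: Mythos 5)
Your proposal is correct and its core coincides with the paper's own proof: the paper observes that an invariant hermitian potential is built from terms $tr\,((MM^\dag)^j)$, whose (real) eigenvalue content is unchanged under $M \ra s_1 M s_2^*$ with $(s_1,s_2)$ diagonal and unitary, which is exactly your statement that the potential is a function of the quaternionic singular values and that $U(1,\mathbf{H})^n\otimes U(1,\mathbf{H})^n$ preserves them, hence maps minima to minima. The detour through the arrangement form, the diagonality of $\hat{D}^\mu$, the metric classification into possibilities 1 and 2, and the embedding of $U(1,\mathbf{H})^n$ into $U(m,\mathbf{H})^{n/m}$ is superfluous (though harmless), since your final observation that these transformations act trivially on the singular values already yields the conclusion for every minimum, generic or degenerate.
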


\begin{proof}
A such potential contains only terms of type $tr((MM^\dag)^j)$, $j \in \mathbf{N}$.
All we can measure are eigenvalues of hermitian operators, and a hermitian operator has
only real eigenvalues $q$ which are invariant under $U(1,\mathbf{H})^n$, ie $sqs^* = qss^* =q$ for $|s| =1$.
The simpler hermitian operators made by $M$ are $MM^\dag$ and $M^\dag M$, whose eigenvalues are invariant under

$$M \ra s_1 M s_2^*\qquad (s_1, s_2) \in U(1,\mathbf{H}) \otimes U(1,\mathbf{H})$$
$$MM^\dag \ra s_2 M s_1^* s_1 M^\dag s_2^* = s_2 MM^\dag s_2^*$$
$$M^\dag M \ra s_1 M^\dag s_2^* s_2 M s_1^* = s_1 M^\dag M s_1^*$$
\end{proof}

\noindent In this manner we have always $m =0$ for diagonal fields
$A_\mu (x^a, x^a) \overset{!}{=} A_\mu (x^a)$.

A transformation $(s_1,s_2) \in U(1, \mathbf{H})\otimes U(1, \mathbf{H})$ acts inside action in the expected way
(see formula (\ref{transf}))

$$ \phi' \ra s_1 \phi' \overset{!}{=} \phi'' $$
$$ {\nabla'}[A']_\mu \ra s_1 {\nabla'}[A']_\mu s_1^* = {\nabla'}[A^\prime_{s1}]_\mu $$
$$ d^\mu \ra s_2 d^\mu s_1^* $$

\be S[\phi', A'] = S'[\phi'', A^\prime_{s1}] = \sum_{\mu\nu}(s_2 d^\mu s_1^* \nabla^\prime [A^\prime_{s1}]_\mu \phi'')^\dag (s_2 d^\nu s_1^*
 \nabla^\prime [A^\prime_{s1}]_\nu \phi'') \label{locale1}\ee

\noindent We use the natural correspondence

\be (1, i, j, k) \longleftrightarrow i(\s^0, \s^1, \s^2, \s^3), \qquad \s^0 = -i\mathbf{1}, \label{correspondence}\ee
and define the complex field $\hat{\phi}$ as a complex $2 \times 2$ matrix,

$$\hat{\phi}^a = \left( \begin{array}[c]{cc}
\phi_1^a +i\phi_2^a & \phi_3^a+i\phi_4^a  \\
-\phi_3^a +i\phi_4^a & \phi_1^a -i\phi_2^a \\
\end{array} \right)$$
with $\phi'' = \phi_1 + i \phi_2 +j\phi_3 +k \phi_4$ and $\phi_1, \phi_2, \phi_3, \phi_4 \in \mathbf{R}$.
Every term between parenthesis becomes

\be W_2 (i\s^k) d_k^\mu W_1^\dag {\nabla'} [A^\prime_{s1}]_\mu \hat{\phi}  \label{SU2}\ee
where $\s$ are Pauli matrices and $(W_1,W_2) \in SU(2)\otimes SU(2)$.

\begin{theorem}
For every $SO(4)$ transformation $\Lambda$, a transformation $(W_1,W_2) \in SU(2) \otimes SU(2)$
exists, such that for every vector $d_j \in R^4$ we find

$$ \Lambda_i^{\pt j} d_j \sigma^i = d_i W_2 \sigma^i W_1^\dag .$$
\end{theorem}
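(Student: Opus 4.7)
The plan is to recognise the statement as a concrete instance of the classical double cover $SU(2)\times SU(2) \to SO(4)$, realised on the Pauli-matrix representation specified by (\ref{correspondence}). I will define a continuous group homomorphism $\Phi: SU(2)\times SU(2) \to O(4)$ by $\Phi(W_1,W_2): X \mapsto W_2 X W_1^\dag$, check that it actually lands in $SO(4)$, and then obtain surjectivity from a dimension-plus-kernel count.

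First I would fix the real four-dimensional subspace $V = \mathbf{R}\sigma^0 \oplus \mathbf{R}\sigma^1 \oplus \mathbf{R}\sigma^2 \oplus \mathbf{R}\sigma^3 \subset M_2(\mathbf{C})$ with $\sigma^0 = -i\mathbf{1}$, and compute directly that $\det(d_\mu \sigma^\mu) = -(d_0^2 + d_1^2 + d_2^2 + d_3^2)$, so $-\det$ is exactly the Euclidean quadratic form on $V \cong \mathbf{R}^4$. Next I would observe that the rescaled space $iV$ has basis $\{\mathbf{1}, i\sigma^1, i\sigma^2, i\sigma^3\}$ and is the standard realisation of the quaternion algebra $\mathbf{H}$ inside $M_2(\mathbf{C})$, with $SU(2) \subset iV$ being its unit sphere. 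Writing $W_2 X W_1^\dag = -i\,W_2(iX)W_1^\dag$ and using closure of $iV$ under matrix multiplication, the product sits in $iV$, so $W_2 X W_1^\dag \in V$; since $\det W_1 = \det W_2 = 1$, the transformation $\Phi(W_1,W_2)$ preserves $\det$ and hence the norm on $\mathbf{R}^4$. Continuity of $\Phi$ together with connectedness of $SU(2)\times SU(2)$ and $\Phi(\mathbf{1},\mathbf{1})=\mathrm{id}$ force the image into the identity component $SO(4)$.

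For surjectivity I would use that both $SU(2)\times SU(2)$ and $SO(4)$ are connected compact Lie groups of dimension $6$. The kernel of $\Phi$ is computed directly: the equation $W_2 X W_1^\dag = X$ for all $X\in V$, applied to $X = \sigma^0$, gives $W_2 = W_1$, and then applied to each $\sigma^k$ forces $W_1$ to commute with every Pauli matrix, whence $W_1 = \pm\mathbf{1}$. Thus $\ker\Phi = \{\pm(\mathbf{1},\mathbf{1})\}$ is discrete, the image is a compact connected subgroup of $SO(4)$ of dimension $6$, and so coincides with all of $SO(4)$, establishing the existence claim for the required $(W_1,W_2)$.

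The main technical obstacle is the closure step $\Phi(W_1,W_2)V \subseteq V$: because $V$ contains the non-Hermitian element $\sigma^0 = -i\mathbf{1}$, one has to keep careful track of the factor of $i$ relating $V$ to the quaternionic subalgebra $iV$, and it is easy to convince oneself prematurely that $W_2 X W_1^\dag$ has spurious components outside $\mathbf{R}\sigma^0 \oplus \mathbf{R}\sigma^k$. Once the identification $iV \cong \mathbf{H}$ is firmly in hand, the remaining pieces — the determinant computation, connectedness, and the kernel calculation — are entirely routine.
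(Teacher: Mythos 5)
Your proof is correct, but it takes a genuinely different route from the paper's. The paper splits $W_1 = U_1^{\prime\dag}U_1$, $W_2 = U_1^{\prime}U_1$ into a ``rotation'' and a ``boost'' factor and argues constructively: $d_i(i\sigma^i)$ is a normal matrix with $\det = |d|^2$, so an $SU(2)_{rot}$ conjugation diagonalises it and an explicit diagonal $U_1^{\prime}$ of unit determinant rescales it to $|d|\mathbf{1}$; since $SO(4)$ shares these two properties (norm preservation and transitivity on spheres), the paper reads off the correspondence $\Lambda_k^{\;j} = \frac12\,tr(W_2\sigma^j W_1^\dag\sigma^k)$. You instead treat $\Phi(W_1,W_2)\colon X\mapsto W_2XW_1^\dag$ as a homomorphism into $O(4)$ and prove surjectivity abstractly, via closure of $V$ under $\Phi$ (through the quaternionic identification of $iV$), determinant preservation, connectedness, the kernel computation $\ker\Phi=\{\pm(\mathbf{1},\mathbf{1})\}$, and a dimension count. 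Your argument is less explicit but logically tighter: strictly speaking, the paper's transitivity-on-spheres step only shows that the image of $\Phi$ acts transitively on $S^3$, and one still needs the stabiliser of a point (the diagonal $SU(2)$ covering $SO(3)$) to be realised before concluding that every $\Lambda\in SO(4)$ is hit --- precisely the gap your kernel-plus-dimension argument closes. Conversely, the paper's explicit rotation/boost decomposition is reused downstream (the Wick-rotation discussion and the identification of $d^\mu$ with a vielbein), so your proof, while complete for the stated theorem, does not by itself deliver those byproducts.
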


\begin{proof}
We write $W_1 = U^{\prime\dag}_1 U_1$ and $W_2 = U^\prime_1 U_1$. In this manner we decompose $SU(2)\otimes SU(2)$ in $SU(2)_{rot} \otimes SU(2)_{boosts}$. $SU(2)_{rot}$ is generated by the couples $(U_1,U_1)$, while $SU(2)_{boosts}$ by the couples $(U^{\prime\dag}_1, U^{\prime}_1)$. After a wick rotation, the first one describes rotation in $R^3$, while the second one describes boosts.

A generic vector $d = \left( \begin{array}[c]{cccc} d_0 & d_1 & d_2 & d_3 \end{array} \right)$ gives

$$d_i (i\sigma^i) = \left( \begin{array}[c]{cc} d_0 + id_3 & id_1+d_2 \\ id_1-d_2 & d_0-id_3
\end{array} \right)$$
with $|d|^2 = det\,d_i (i\sigma^i)$. A transformation in $SO(4)$ doesn't change the norm $|d|$. Moreover, for every
$d$ exists a transformation in $SO(4)$ which put it in the normal form

$$d = \left( \begin{array}[c]{cccc} |d| & 0 & 0 & 0 \end{array} \right) .$$
The same properties have to be true for $SU(2) \otimes SU(2)$.
The first one is banally verified because $det\,W_1 = det\,W_2 = 1$ and then $det\, d_i(i\sigma^i) = det\,d_i(W_2 i\sigma^i W_1^\dag)$.
Being $d_i(i\sigma^i)$ normal, we can use a transformation in $SU(2)_{rot}$ to put it in a diagonal form

$$U_1 d_i (i\sigma^i) U_1^\dag = \left( \begin{array}[c]{cc} d_0 + id_3 & 0 \\ 0 & d_0-id_3
\end{array} \right) .$$
Define now the matrix $U^\prime_1$ as

$$U^\prime_1 = \fr 1 {\sqrt{|d|}}\left( \begin{array}[c]{cc} \sqrt{d_0 + id_3} & 0 \\ 0 & \sqrt{d_0-id_3}
\end{array} \right) .$$
It's easy to verify that $U^\prime_1 U^{\prime\dag}_1 = 1$ and $det\,U^\prime_1 = 1$.
Applying to $U_1 d_i (i\sigma^i) U_1^\dag$ this transformation in $SU(2)_{boosts}$ we obtain

$$U_1^\prime U_1 d_i (i\sigma^i) U_1^\dag U_1^\prime = \left( \begin{array}[c]{cc} |d| & 0 \\ 0 & |d|
\end{array} \right) .$$
So, for every $d$, a transformation in $SU(2) \otimes SU(2)$  exists, which puts it in the normal form.
In this way, $d$ transforms exactly as a vielbein field in the Palatini formulation of General Relativity,
giving then the correspondence

\ba \Lambda_i^{\pt j} d_j (i\sigma^i) &=& U^\prime_1 U_1 (i\sigma^j) U_1^\dag U^\prime_1 d_j \nonumber \\
    \Lambda_i^{\pt j} d_j \sigma^i &=& W_2 \sigma^j W_1^\dag d_j \nonumber \\
    \fr 12 tr (\Lambda_i^{\pt j} d_j \sigma^i \sigma^k) &=& \fr 12 tr (W_2 \sigma^j W_1^\dag \sigma^k) d_j \nonumber \\
    \Lambda_k^{\pt j} d_j &=& \fr 12 tr (W_2 \sigma^j W_1^\dag \sigma^k) d_j \nonumber \\
    \Lambda_k^{\pt j} &=& \fr 12 tr (W_2 \sigma^j W_1^\dag \sigma^k) .\ea
So, at every $\Lambda \in SO(4)$ corresponds a couple $(U_1, U_2) \in SU(2) \otimes SU(2)$.
\end{proof}

Applying this to (\ref{SU2}), it becomes

\be W_2 (i\s^k) d_k^\mu W_1^\dag {\nabla'} [A^\prime_{s1}]_\mu \hat{\phi} =  \Lambda_k^{\pt i} d_i^\mu \s^k {\nabla'}[A^\prime_{s1}]_\mu \hat{\phi} \label{SU2-second} .\ee

Note that if we write $\hat{\phi} = (\hat{\phi}_1\,\,\,\hat{\phi}_2)$, with $\hat{\phi}_1, \hat{\phi}_2$
complex column arrays $1 \times 2$, then $\hat{\phi}_2 = i\s_2 \hat{\phi}_1^*$. This implies that
the column array $1 \times 4$ $\left( \begin{array}[c]{c}
\hat{\phi}_1 \\
\hat{\phi}_2 \\
\end{array} \right)$ transforms under $SO(4)$ as a Majorana spinor.

Applying newly the correspondence (\ref{correspondence}) to (\ref{SU2-second}), we obtain

$$ s_2 d^\mu s_1^* \nabla^\prime [A^\prime_{s1}]_\mu \phi'' = \Lambda d^\mu {\nabla'} [A^\prime_{s1}]_\mu \phi''.$$
Inserting it in the action (\ref{locale1})

\ba S'[A^\prime_{s1},\phi''] &=& \sum_{\mu\nu}({\nabla'} [A^\prime_{s1}]_\nu \phi'')^\dag d^{*\nu} \Lambda^\dag \Lambda\, d^\mu
({\nabla'} [A^\prime_{s1}]_\mu \phi'') \nonumber \\
      &=& \sum_{\mu\nu}({\nabla'} [A^\prime_{s1}]_\nu \phi'')^\dag d^{*\nu} d^\mu ({\nabla'} [A^\prime_{s1}]_\mu \phi'') \nonumber \\
      &=& \sum_{\mu\nu}(d^{\nu}{\nabla'} [A^\prime_{s1}]_\nu \phi'')^\dag (d^\mu{\nabla'} [A^\prime_{s1}]_\mu \phi'') \nonumber \\
      &=& S[A^\prime_{s1},\phi''] .\ea

The diagonal gauge field $A(x^a)$ compensates the action of $SU(2)\otimes SU(2)$
inside $\nabla'$. Moreover we have just demonstrated that the field $d^\mu$ transforms under this group
as a vielbein field in the Palatini formulation of General Relativity. This implies $A(x^a)$ is a
gravitational spin-connection. Consequently, every purely imaginary quaternion defines a spin operator
$\vec{S}$ via the correspondence $(i,j,k) \leftrightarrow 2i(S_1, S_2,$ $S_3)$. In fact, each element in
$U(1,\mathbf{H})$ is the exponential of a purely imaginary quaternion, in the same  way as an element in
$SU(2)$ is the exponential of $i\vec{\a} \cdot \vec{S}$ for some real vector $\vec{\a}$.

Note that a majorana spinor in an euclidean space can't distinguish if $s_2$ belongs to
$SU(2)_{rot}$, $SU(2)_{boosts}$ or if it is a mixed combination. Only after the wick rotation
it feels a difference, because the generator of $SU(2)_{boosts}$ moves from $i\sigma^i$ to $\sigma^i$,
while $SU(2)_{rot}$ remains unchanged.

Someone can infer that, if $\phi$ transforms as a majorana spinor, our action
has not the standard form. We don't care this now: what exposed is only a toy model. In another
work (under review) we show explicitly how to get the correct Dirac action for these and all the
others fields (both fermions and bosons).

To finish, we suppose that masses of other fields ($A(x^a, x^b)$ with $a \neq b$) are sufficiently large,
so that the experimental physics of nowadays is unable to locate them.
For the same reason, in the low energy approximation, they can be omitted from the action.
Neglecting the \lq\lq ultra-massive'' fields, the scalar field action becomes a local action

\begin{equation}
S=\sum_{i=1}^n \sum_{\mu,\nu}
\sqrt{\left\vert h\right\vert }h^{\mu\nu}\left(  x_i\right)  \left(
\overset{G}{\nabla}_{\mu}\phi(x_i)\right)^*  \left(  \overset{G}{\nabla}_{\nu}\phi(x_i)\right)
\end{equation}
where $\overset{G}{\nabla}$ are standard gravitational covariant derivatives.

\section{The origin of spin}
\label{spin}
Consider the spin operator $S_{3}$

\begin{equation}
\hat{S}_{3} = \frac{\hslash}2\left(
\begin{array}
[c]{cc}%
1 & 0\\
0 & -1
\end{array}
\right)
\end{equation}
and calculate the normalized eigenvectors and eigenstates.

\begin{align}
\left\vert \uparrow\right\rangle  &  =e^{i\phi}\left(
\begin{array}
[c]{c}%
1\\
0
\end{array}
\right)  \text{, \ with eigenvalue }\lambda_{1}=+\frac{1}{2}\text{ \ (in unit  }%
\hslash=1\text{)}\label{eq: autketS3}\\
\left\vert \downarrow\right\rangle  &  =e^{i\phi}\left(
\begin{array}
[c]{c}%
0\\
1
\end{array}
\right)  \text{, \ with eigenvalue }\lambda_{2}=-\frac{1}{2}\text{ \ }
\end{align}
where $\phi$ is an arbitrary phase. The eigenvectors completeness guarantees that the field $\hat{\phi}_1$,
which appears in the precedent section, can be always decomposed in a sum of such eigenstates.

The projectors on a single eigenstate of $S_{3}$ are

\begin{equation}
\hat{\pi}^{+} =\frac{1}{2}\left(
\begin{array}
[c]{cc}
1 & 0 \\
0 & 0
\end{array}
\right)  ,
\end{equation}

\begin{equation}
\hat{\pi}^{-} =\frac{1}{2}\left(
\begin{array}
[c]{cc}
0 & 0 \\
0 & 1
\end{array}
\right) .
\end{equation}

We see that $\hat{\pi}^{\pm}$ are idempotent, while $\hat{\pi}^{+}\hat{\pi
}^{-}=0$, as it should be. A rotation by an angle $\theta$
around the axe $1$ is represented by the unitary matrix:

\begin{equation}
U_{1}\left(  \theta\right)  =\left(
\begin{array}
[c]{cc}
\cos(\theta/2) & -i\sin(\theta/2)  \\
-i\sin(\theta/2) & \cos(\theta/2)
\end{array}
\right)
\end{equation}

\noindent where

$$U_1(\theta)\hat{\phi}_1 = \widehat{(s(U_1)\phi)}_1\qquad \hat{\phi}_1=\widehat{(\phi)}_1$$

\noindent for some quaternion $s(U)$ with $|s|=1$. In the special case of a rotation by $\pi$:

\begin{equation}
U_{1}\left(  \pi\right)  =\left(
\begin{array}
[c]{cc}
0 & -i \\
-i & 0
\end{array}
\right)  \label{eq: rotazione}.
\end{equation}

We suppose now that the system is in the eigenstate $\left\vert
\uparrow\right\rangle $; following a rotation around the axis $1$ the
state will be:

$$
\left\vert \uparrow\right\rangle _{R}= U_{1}(\theta)\left\vert \uparrow
\right\rangle .
$$
For $\theta=\pi$:

\begin{equation}
\left\vert \uparrow\right\rangle _{R}= U_{1}\left(  \pi\right)  \left\vert
\uparrow\right\rangle = -i\left\vert \downarrow\right\rangle =e^{-i\pi
/2}\left\vert \downarrow\right\rangle \rightarrow\left\vert \downarrow
\right\rangle \text{,} \label{eq: scambio_spin}%
\end{equation}
\begin{equation}
\left\vert \downarrow\right\rangle _{R}= U_{1}\left(  \pi\right)  \left\vert
\downarrow\right\rangle = -i\left\vert \uparrow\right\rangle =e^{-i\pi
/2}\left\vert \uparrow\right\rangle \rightarrow\left\vert \uparrow
\right\rangle \text{,} \label{eq: scambio_spin2}%
\end{equation}
since the state is defined up to an inessential phase factor.
We observe that a rotation by $\pi$ around the axe $1$ is equivalent to
exchange $\left\vert \uparrow\right\rangle $ with $\left\vert
\downarrow\right\rangle $, as we have just verified by
(\ref{eq: scambio_spin}) and (\ref{eq: scambio_spin2}).

Surely we can expand the matrix $M$ as follows

\begin{equation}
M\left(  x^{a},x^{b}\right)  = M'\left(  x^{a},x^{b}\right) + |s(x^a)|e^{r(x^a)} \d^{ab}
\end{equation}

\noindent with $M'\left(  x^{a},x^{b}\right) = 0$ for $a=b$.

The element $r(x^a) = arg[s(x^a)]$ is a purely imaginary quaternion: when it acts on $\phi$, it determines
uniquely the result of a spin measure, exchanging the states $\left\vert \uparrow\right\rangle $ - $\left
\vert\downarrow\right\rangle $. This seems to suggest an identification between the arrangement field $M$
and the observer who performs the measurement.

Indeed the operator $M$ can simulate a measurement operation when it presents
the form $M^{ab} = u^a w^b$:

\begin{eqnarray}
 M^{ab} &=& u^a w^b \overset{continuous}{\longrightarrow} M(x,y) = \psi(x) \psi^{\ast}(y) \nonumber \\
 M^{ab}\varphi_b &=& u^a (w^b \varphi_b) \overset{continuous}{\longrightarrow} \int dy \, M(x,y) \varphi(y) \nonumber \\
        &=& \psi(x)\int dy \, \psi^{\ast}(y)\varphi(y) = \psi(x)(\psi,\varphi) \nonumber
\end{eqnarray}
$\psi\left(  x\right)$ is any eigenstate, while $\left(  \psi
,\varphi\right)$ denotes the scalar product between $\psi$ and $\varphi$. We see that $M$ projects
$\varphi$ along the eigenstate $\psi$, and in quantum mechanics a measurement is just a projection.

The latter argument gives also an indication about the spin nature. Consider the entries of $M$
closest to the diagonal: they are the $M^{ij+1}$ and $M^{ij-1}$ which compose $\tilde{M}$.
Moreover, they represent the probability amplitudes for the existence of connections between (numerically) consecutive vertices.
In the limit $\Delta \ra 0$, $\tilde{M}$ becomes $\pa$, which is proportional to $i\pa$, an operator which
acts on a wave function $\psi(x)$ and returns the momentum $p$ of the corresponding particle:

$$i\pa \psi(x) = p\psi(x) .$$

In this way, the entries of $\tilde{M}$ represent both a momentum and a probability amplitude for
connections between (numerically) consecutive vertices. In a certain sense, $\tilde{M}$ draws
continuous paths and measures the momentums along these paths (figure \ref{percorso_continuo}).

If we describe a particle with a wave function $\phi$, its spin is determined by diagonal
components of $M$: in fact, $exp(r)$ acts on $\phi$ as a rotation in the tangent space.
Consequently, if $r$ is applied to $\phi$, it returns the spin of the associated particle.

The diagonal components of $M$ represent also the probability amplitudes for a connection between a vertex and itself.
Reasoning in analogy with the components of $\tilde{M}$, we associate at every such \lq\lq pointwise'' loop a
circumference $S^1$: we interpret the spin as the rotational momentum due to the motion along these
circumferences (figure \ref{Loop_1}).

\begin{figure}[ptbh]
\centering\includegraphics[width=0.6\textwidth ]{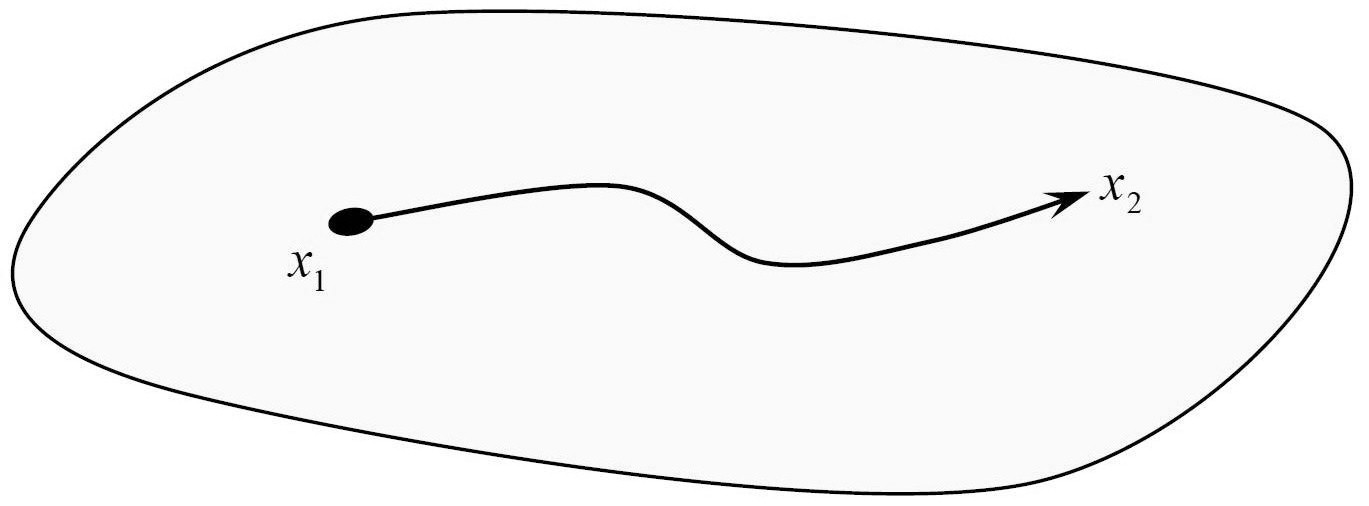}\caption{$\tilde{M}$ behaves as a derivative, that is proportional to a momentum operator.
The non-empty entries of $\tilde{M}$ represent both a momentum and a probability amplitude for
connections between (numerically) consecutive vertices.
In a certain sense, $\tilde{M}$ draws continuous paths and measures the momentums along them.}
\label{percorso_continuo}
\end{figure}
\begin{figure}[ptbh]
\centering\includegraphics[width=0.6\textwidth ]{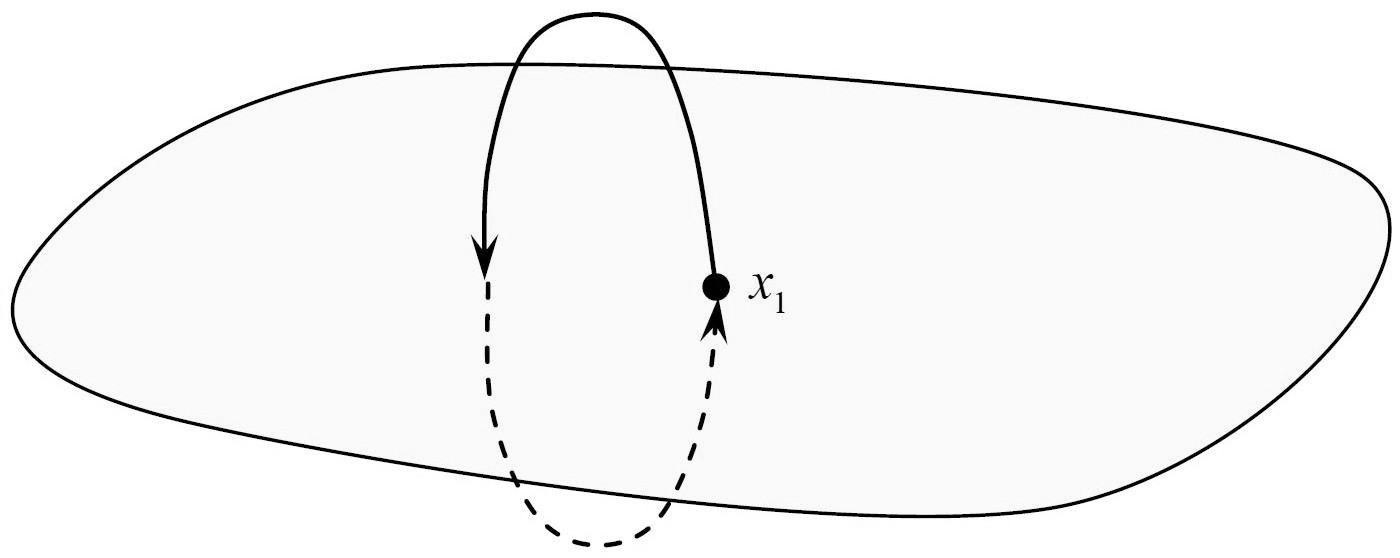}\caption{Each diagonal component of $M$ represents the probability amplitude for a connection between a vertex and itself.
The spin is a momentum along such pointwise loops.}
\label{Loop_1}
\end{figure}

It is remarkable that there exist two types of pointwise loops: the one in figure \ref{Loop_1}, where a particle assumes the same aspect after a complete rotation,
and the one in figure \ref{Loop_2}, where a particle assumes the same aspect after two complete rotations. The first case suggests a relationship with gauge fields of spin $1$,
the second with fermionic fields of spin $1/2$.

\begin{figure}[ptbh]
\centering\includegraphics[width=0.6\textwidth ]{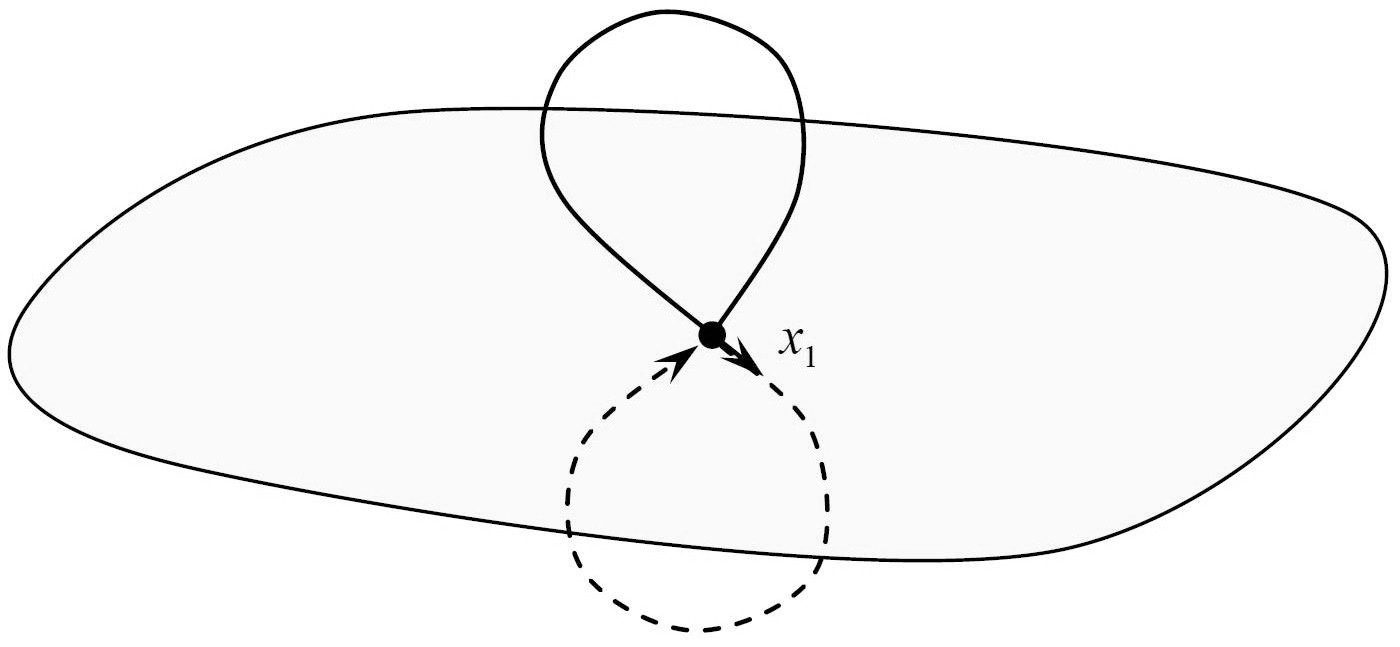}\caption{Pointwise loop associable with fermionic field.}
\label{Loop_2}
\end{figure}

\section{Symmetry breaking}
\label{symmetry}
We imagine that the symmetry breaking of $U(n,\mathbf{H})\otimes U(n,\mathbf{H})$ is not complete, but
a residual symmetry remains for transformations in $U(1,\mathbf{H})^n \times U(m,\mathbf{H})^{n/m}$.
Here $m$ is an integer divisor of $n$.

In this case, it is possible to regroup the $n$ points into $n/m$ ensembles $\mathcal{U}^a$, with $a = 1, 2, \ldots, n/m$.

$$ \mathcal{U}^a = \mathcal{U}^a (x^a_1, x^a_2, \ldots, x^a_m)$$

$$ \varphi = (\varphi (x^a_i)) = \left( \begin{array}[c]{ccccc}
\varphi(x^1_1) & \varphi(x^1_2) & \varphi(x^1_3) & \ldots & \varphi(x^1_m) \\
\varphi(x^2_1) & \varphi(x^2_2) & \varphi(x^2_3) & \ldots & \varphi(x^2_m) \\
\varphi(x^3_1) & \varphi(x^3_2) & \varphi(x^3_3) & \ldots & \varphi(x^3_m) \\
\vdots & \vdots & \vdots & \vdots & \vdots\\
\varphi(x^{n/m}_1) & \varphi(x^{n/m}_2) & \varphi(x^{n/m}_3) & \ldots & \varphi(x^{n/m}_4)
\end{array} \right) $$

$$ A = (A^{ab}_{ij}) = (A(x^a_i, x^b_j)) .$$

Now the indices $a,b$ of $A$ act on the columns of $\varphi$, while the indices $i,j$ act on the rows.
The fields $A^{ab}_{ij}$ with $a=b$ maintain null masses and so they continue to behave as gauge fields for $U(m,\mathbf{H})^{n/m}$.
Every $U(m,\mathbf{H})$ term in $U(m,\mathbf{H})^{n/m}$ acts independently inside a single $\mathcal{U}^a$. So, if we consider
the ensembles $\mathcal{U}^a$ as the real physical points, we can interpret $U(m,\mathbf{H})^{n/m}$ as a local $U(m,\mathbf{H})$.
It's simple to verify:

\ba
h^{\mu\nu}(x^a_i) &=& h^{\mu\nu}(x^a_j) \qquad \forall x_i, x_j \in \mathcal{U}^a \nonumber \\
h(x^a) &\overset{!}{=}& h(\mathcal{U}^a) = h^{\mu\nu}(x^a_i) \qquad \forall x^a_i \in \mathcal{U}^a \nonumber \\
A(x^a_i, x^a_j) &=& Tr\,\left[ A(x^a) T^{(ij)} \right], \quad where \nonumber \\
A(x^a) &=& \sum_{ij} A(x^a_i, x^a_j) T^{(ij)},
\ea
with $T^{(ij)}$ generator of $U(m,\mathbf{H})$.
Using these relations, in the next work we'll show how the terms $tr\,(MM^\dag)$ and $tr\,(MM^\dag MM^\dag)$
ge\-ne\-ra\-te respectively the Ricci scalar and the kinetic term for gauge fields.
Extending $M$ to grassmanian elements we have (up to a generalized $U(n, \mathbf{H})$ transformation)

$$M = \theta(\pa + \psi) + d^\mu (\pa_\mu + A_\mu)$$
$$M^\dag = (\pa^\dag +\psi^\dag)\theta^\dag + (\pa_\mu^\dag + A^\dag_\mu)d^{*\mu} .$$

$\theta, \theta^\dag$ are at the same time grassmanian coordinates and grassmanian equivalents of $d, d^*$.
$\pa, \pa^\dag$ are grassmanian derivatives and $\psi, \psi^\dag$ grassmanian fields (ie fermions).

Our final action will be

$$S = tr\,\left(\fr {MM^\dag}{16\pi G}-\fr 1 4 MM^\dag MM^\dag \right)$$
This action resembles the action of a $\lambda \phi^4$ theory. Some preliminary results
suggest that we can treat it by means of Feynman graphs, apparently without renormalization
problems.

We will see how the quartic term includes automatically the kinetic terms for gauge fields of
$SO(4) \otimes SU(3)\otimes SU(2) \otimes U(1)$ and the dirac action for exactly three fermionic families.

\section{Second quantization and black hole entropy}
\label{entropy}
It is remarkable that in our model the gauge fields and the gravitational fields have different
origins, although they are both born from $M$. The gravitational field in fact appears as
a multiplicative factor for moving from $M$ to the covariant derivative $\nabla'$.
The gauge fields are instead some additive elements in $\nabla'$.
This could be the reason for which the gravitational field seems non quantizable in the standard way.
On the other side, quantizing the gauge fields is equivalent to quantize a partial piece of $M$ in a flat space.
But a similar equivalence does not exist for the gravitational field.
In our framework this doesn't create problems, since we will quantize $M$ directly,
rather than gravitational and gauge fields.

What does it mean \lq\lq to quantize'' $M$? It's true that a matrice $M$
is a quantum object from its birth, as they are quantum objects the
wave functions which describe particles.

However, we will impose commutation relations on $M$,
in the same way we impose commutation relations on the wave functions.
This is the so called \lq\lq second quantization''.

The wave functions, which first had described the probability amplitude
to find a particle, then have become operators which create or annihilate
particle. Similarly, $M$ describes first the probability amplitude for
the existence of connections between vertices. After the second quantization
it will become an operator which creates or annihilate connections. In particular,
the operator $M(x^a,x^b)$ creates a connection between the vertices $x^a$ and $x^b$.

$M$ corresponds to $D^\mu\nabla^\prime_\mu$ (by invariance respect $U(n,\mathbf{H})$):
so it contains the various fields $A_\mu$ and $h^{\mu\nu}$. If we second
quantize $M$, then, indirectly, we quantize the other fields, including
the gravitational field.

To quantize $M$ we put $[M^{ij}, M^{kl\dag}]=\d^{ik}\d^{jl}$.
Here the symbol $\dag$ indicates the adjoint operator respect only
the scalar product between states in the Fock space.
The condition $[M^{ij}, M^{kl\dag}]=\d^{ik}\d^{jl}$ means that every entry
$M^{ij}$ expands in a sum of $4$ operators

$$M^{ij} = a + i(b_1 +b_2+ b_3)\qquad\quad b_1^\dag = b_1,\,\,\,b_2^\dag = b_2,\,\,\,b_3^\dag = b_3$$
The $b$'s realize the $SU(2)$ algebra implicit in the imaginary part of quaternions.
$$[b_1,b_2]=b_3 ;\qquad [b_2,b_3] = b_1;\qquad [b_3,b_1]=b_2$$
The operators $a^\dag$ and $b^\dag = b_1 + ib_2$ create an edge which connects the vertex $i$ with the vertex $j$.
The number operator is

$$N^{ij} = M^{ij\dag} M^{ij} = a^\dag a + |\vec{b}|^2\qquad \text{no sum on} \pt ij$$

\noindent $a^\dag a$ has eigenvalues $q \in \mathbf{N}$ with multiplicity $1$. Moreover the
eigenvalues of $|\vec{b}|^2$ are in the form $j(j+1)$ for $j \in \mathbf{N}/2$, with
multiplicity $(2j+1)$.
How about $N > 1$? We can consider a surface immersed into the graph.
Its area is $\Delta^2$ times the number of edges which pass through it.
If we admit the possibility for the creation of many superimposed edges,
we can interpret this superimposition as a \lq\lq super-edge'' which carries
an area equal to $N \Delta^2$.

\begin{notation}
Regarding diagonal components, we suggest a sli\-gh\-tly different interpretation:
$a^\dag$ could create loops, while $b^\dag$ could create perturbations which
travel through the loops (ie particles with spin $j$). This suggest a duality
between a loop on vertex $v_i$ and a closed string (as intended in \emph{String Theory})
situated approximately on the same vertex. Note that the two interpretations
can be accommodated if we consider quanta of area as non-local perturbations.
\end{notation}

The only Black Hole information detectable from the exterior, is the information coded in the Horizon.
So, the only distinguishable states of a Black Hole are distinguishable states of its horizon.
For the Black Hole horizon we consider all the edges which pass through it, oriented only from the interior to the exterior.

If the horizon is crossed only by edges with $N = q+j(j+1)$ and $a^\dag a =q$, the number of its distinguishable states is

$$num_S = \left( 2j+1 \right)^{A /(q+j(j+1))} .$$

We suppose now a generic partition with $A = \sum_{j,q} A_{j,q}$, where an area $A_{j,q}$ is
crossed only by edges with $N=q+j(j+1)$ and $a^\dag a =q$. The number of distinguishable states becomes

$$num_S = \sum_{\{A_{j,q}\}}\prod_{j,q} \left( 2j+1 \right)^{A_{j,q} /((q+j(j+1))\Delta^2)}$$

where the sum is over all the possible partitions of $A$.
The \lq\lq classical'' contribution comes from $j=0$ and gives $num_S =1$ (We call it \lq\lq classical''
because it is the only one with $N =1$). This implies no entropy and is related to the fact that
$tr \,M_H^\dag M_H \sim \int_H \sqrt {h_H} R(h_H) = 2\pi \chi_H$, where $M_H$ is the restriction of $M$
to the edges which cross the horizon, $h_H$ is the induced metric
on the horizon and $\chi$ is the Euler characteristic.

The dominant contribution comes from
$q=0$ and $j = 1/2$, which gives

$$num_S = 2^{4A_{1/2,0} /(3\Delta^2)}$$

So we can define entropy as

$$S = k_B\,log\,2^{4A/{3\Delta^2}} = \fr {4 \,log\,2 \, k_B A }{3\Delta^2}.$$

Our approach gives thus a proposal for the explanation of area law.
Indeed our entropy formula corresponds to the one given by Bekenstein and Hawking if $3\Delta^2 = 16 G\,log\,2$.

What is our interpretation of black hole radiation? The proximity between vertices is probabilistic:
we can have a high probability of receiving two vertices as \lq\lq neighbors'', but never a certainty. We look at
a large number of vertices for a long time: some vertex, which first seems to be adjacent to some other,
suddenly can appears far away. For this reason, some internal vertices in a Black Hole may happen to be found
outside, so that the Black Hole slowly evaporates.

We can consider also the contribution from ($q = j = 0$). If it exists, clearly it is the dominant one.
Indeed, an horizon means absence of connections between the exterior and the interior. For an external
observer, the universe finishes with the horizon. In fact, respect the coordinate system of a statical
observer infinitely distant from the horizon, every object, falling in the black hole, sits on the horizon
for an infinite time. In relation to the proper time of the statical far away observer, the object never
surpasses the horizon. If nothing surpasses the horizon, this means that the Hawking radiation comes
from the deposit of all the objects fallen in the black hole, ie from the horizon. This resolves the
information paradox proposed by Hawking.

Someone can infer that absence of connections is only illusory, because the horizon singularity is
of the type called \lq\lq apparent'': it doesn't exist in several coordinate systems, as the system
comoving with a free falling object.

We reply that it's true, because also the absence of connections depends strictly from the state on which
the number operator acts. Every state can be associated to a particular coordinates system and, if we
change coordinate system, we have to change the state. In this way, the connections can exist for an observer
and not exist for some others.

It's the same which happens for the particles. The same particle can exist in a coordinate system and not exists in an
another system (see Unruh effect). This is because the same number operator acts on different states.

Calculate now $num_S$ for $q = 0, j \ra 0$. It is

\ba num_S &=& \lim_{j\ra 0} \left( 2j+1 \right)^{A /(j(j+1)\Delta^2)} \nonumber \\
          &=& \lim_{j\ra 0} \left( 1+2j \right)^{A /(j\Delta^2)} \nonumber \\
          &=& \lim_{j\ra 0} \left( 1+2j \right)^{2A /(2j\Delta^2)} \nonumber \\
          &=& \lim_{x\ra \infty} \left( 1+\fr 1x \right)^{2Ax/\Delta^2} \nonumber \\
          &=& e^{2A/\Delta^2} \ea

The entropy becomes

$$ S =k_B \,log\,e^{2A/\Delta^2} = \fr {2k_B A}{\Delta^2}$$

This corresponds to the Bekenstein-Hawking result for $\Delta^2 = 8G$.

\section{Conclusion}

In this paper we have abandoned the preconceived existence of an order in the space-time structure,
taking a probabilistic approach also to its topology and its homology.

This framework gives new suggestions about the origin of space-time metric and particles spin.
At the same time it hints a possible emersion of all fields from an unique entity, ie the arrangement
matrix, after the imposition of an order.

Unfortunately, there isn't space here to post an explicit calculation of terms $tr\,(M^\dag M)$ and $tr\,(M^\dag M M^\dag M)$.
We have already said that they generate the Ricci scalar, the kinetic terms for gauge fields and the Dirac actions for
exactly three fermionic families.

In a next future we'll show how several phenomena can find a possible explanation inside this paradigm, as we
have seen earlier for black hole entropy. These deal with the galaxy rotation curves, the inflation, the
quantum entanglement, the values of matrices CKM and PMNS and the value of Newton constant $G$.

Here we have given a simple example by using a one-component field. Nevertheless, a potential for $M$
causes a symmetry breaking which gives mass to gauge fields without need of Higgs mechanism.
In the end, the one-component field action results unnecessary.

\section*{Acknowledgements}
I thank professor Valter Moretti, Dr. Fabrizio Coppola and Dr. Marcello
Colozzo for the useful discussions and suggestions.

\chapter{The arrangement field theory (AFT). Part 2}

\section{Introduction}

The arrangement field paradigm describes universe by means of a graph, ie an ensemble
of vertices and edges. However there is a considerable difference between this framework
and the usual modeling with spin-foams or spin-networks. The existence of an edge which
connects two vertices is in fact probabilistic. In this framework the fundamental quantity
is an invertible matrix $M$ with dimension $n \times n$, where $n$ is the number of vertices.
In the entry $ij$ of such matrix we have a quaternionic number which gives the probability
amplitude for the existence of an edge connecting vertex $i$ to vertex $j$.
In the introductory work \cite{Arrangement} we have developed a simple scalar field theory
in this probabilistic graph (we call it \lq\lq non-ordered space''). We have seen that a
space-time metric emerges spontaneously when we fix an ensemble of edges. Moreover, the
quantization of metric descends naturally from quantization of $M$ in the non-ordered space.
In section \ref{formalism} we summarize these results.

In section \ref{ricciscalar} we express Ricci scalar as a simple quadratic function of $M$.
We discover how the gravitational field emerges from diagonal components of $M$, in contrast to
gauge fields which come out from non-diagonal components.

In section \ref{kinetic} we define a quartic function of $M$ which develops a Gauss Bonnet
term for gravity and the usual kinetic term for gauge fields.

In section \ref{string} we discover a triality between \emph{Arrangement Field Theory},
\emph{String Theory} and \emph{Loop Quantum Gravity} which appear as different manifestations
of the same theory.

In section \ref{electroweak} we show that a grassmanian extension of $M$ generates automatically
all known fermionic fields, divided exactly in three families. We see how gravitational field
exchanges homologous particles in different families. The resulting scheme finds an analogue
in supersymmetric theories, with known fermionic fields which take the role of gauginos for known
bosons.

In the subsequent sections we explore some practical implications of arrangement field theory,
in connection to inflation, dark matter and quantum entanglement. Moreover we explain how
deal with theory perturbatively by means of Feynman diagrams.

\begin{center}
We warmly invite the reader to see introductory work \cite{Arrangement} before proceeding.
\end{center}

\section{Formalism}
\label{formalism}

In paper \cite{Arrangement} we have considered an euclidean $4$-dimensional space represented
by a graph with $n$ vertices. In this section we retrace the fundamental results of that work,
moving to Lorentzian spaces in the next section. Since now we assume the Einstein convention,
summing over repeated indices.

In proof of \textbf{theorem 8} in \cite{Arrangement} we have demonstrated the equivalence between the following actions:

\be S_1 = (M \varphi)^\dag (M \varphi) \label{iniziale}\ee
\be S_2 = \sum_{i=1}^n \sqrt{|h|} h^{\mu\nu} (x^i) (\nabla_\mu \varphi^i)^*(\nabla_\nu \varphi^i).
\label{action-1}\ee

\noindent $M$ is any invertible matrix $n \times n$ while the field $\varphi$ is represented
by a column array $1\times n$, with an entry for every vertex in the graph:

\be \varphi = \left( \begin{array}[c]{c}
\varphi(x^1) \\ \varphi(x^2) \\ \varphi(x^3) \\ \vdots \\ \varphi(x^N) \end{array} \right) .\ee

\noindent The entries of both $M$ and $\phi$ take values in the division ring of quaternions,
usually indicated with $\mathbf{H}$. The first action considers the universe as an abstract
ensemble of vertices, numbered from $1$ to $n$, where $n$ is the total number of space vertices.
The entry $(ij)$ in the matrix $M$ represents the probability amplitude for the existence of an
edge which connects the vertex number $i$ to the vertex number $j$. We admit non-commutative
geometries, which in this framework implies a possible inequivalence $|M^{ij}| \neq |M^{ji}|$.
More, the first action is invariant under transformations $(U_1,U_2) \in U(n,\mathbf{H})
\otimes U(n,\mathbf{H})$ which send $M$ in $U_2 M U_1^\dag$.

In action (\ref{action-1}) a covariant derivative for $U(n,\mathbf{H}) \otimes U(n,\mathbf{H})$
appears, represented by a skew hermitian matrix $\nabla$ which expands according to $\nabla_\mu =
\tilde{M}_\mu + A_\mu$. Here $\tilde{M}_\mu$ is a linear operator such that $lim_{\Delta \ra 0}
\tilde{M}_\mu = \pa_\mu$, where $\Delta$ is the graph step. If we number the space vertices along direction $\mu$, $\tilde{M}_\mu$
becomes

\be \tilde{M}^{ij}_\mu = \fr 1 {2\Delta} \left[ \delta^{(i+1)j} - \delta^{(i-1)j} \right]\label{dderiv}\ee
$$ \sum_j \tilde{M}^{ij} \varphi^j = \fr 1 {2\Delta} \sum_j \delta^{(i+1)j} \varphi^j - \delta^{(i-1)j} \varphi^j = \fr {\varphi(i+1) - \varphi(i-1)} {2\Delta} .$$
The gauge fields $A$ act as skew hermitian matrices too:

$$ A = (A^{ij}) = (A(x^i, x^j)) $$
$$ (A\phi)^i = A^{ij}\phi^j .$$
In proof of \textbf{theorem 5} we have discovered that for every normal matrix $\hat{M}$, which is
neither hermitian nor skew hermitian, four couples $(U_1,D^\mu)$ exist, with $U_1$ unitary and $D^\mu$
diagonal, such that

\be U_1^\dag D^\mu \nabla_\mu U_1 = \hat{M} \label{fondamentale}\ee
\be \sqrt{|h|} h^{\mu\nu}(x^i) = \fr 1 2 d^{*\mu}_i d_i^\nu +c.c. \qquad D^{ij}_\mu = d^\mu_i \delta^{ij} .\ee
Here $h$ is a non degenerate metric while the first relation determines uniquely the values of gauge fields.
The matrices $\nabla_\mu, U_1, D^\mu$ act on field arrays via matricial product and the ensemble of four
couples $(U_1, D^\mu)$ is called \lq\lq space arrangement''.

Further, in proof of \textbf{theorem 6}, we have seen that for every invertible matrix $M$ we can always find
an unitary transformation $U_M$ and a normal matrix $\hat{M}$, which is neither hermitian nor skew hermitian,
such that $M = U_M \hat{M}$. If we define $U_2 = U_1 U_M^\dag$, we have

\be M^\dag M = \hat{M}^\dag \hat{M} \label{matemat} \ee
\be U_2^\dag D^\mu \nabla_\mu U_1 = M .\label{fond} \ee

\noindent It's sufficient to substitute (\ref{fond}) in (\ref{iniziale}) to verify its equivalence with (\ref{action-1}).
We have called $\hat{M}$ the \lq\lq associated normal matrix'' of $M$.

The action of a transformation $(U_1, U_2)$ on $\nabla$ follows from its action on $M$. We can always use the
invariance under $U(n,\mathbf{H}) \otimes U(n,\mathbf{H})$ to put $M$ in the form $M = D^\mu \na_\mu$. Starting
from this we have

$$U_2 M U_1^\dag = U_2 D^\mu \na_\mu U_1^\dag = U_2 D^\mu U_1^\dag U_1\na_\mu U_1^\dag.$$
We define $\nabla' = U_1\na_\mu U_1^\dag$ the transformed of $\nabla$ under $(U_1, U_2)$ and $D^{\prime\mu} =
U_2 D^\mu U_1^\dag$ the transformed of $D^\mu$. We assume that $A_\mu$ inside $\nabla_\mu$ transforms correctly
as a gauge field, so that

$$\na [A]_\mu \phi = \na [A] U_1^\dag \phi' = U_1^\dag \na [A_{U1}]_\mu \phi'$$
$$\phi' = U_1 \phi .$$
We want $D^{\prime\mu}$ remain diagonal and $h' = h[D'] = h[D]$. In this case there are two relevant possibilities:
\begin{enumerate}
\item $D$ is a matrix made by blocks $m \times m$ with $m$ integer divisor of $n$ and every block proportional to
identity. In this case the residual symmetry is $U(1,\mathbf{H})^n \times U(m,\mathbf{H})^{n/m}$ with elements
$(sV, V)$, $s$ both diagonal and unitary, $V \in U(m,\mathbf{H})^{n/m}$;
\item $h$ is any diagonal matrix. The symmetry reduces to $U(1,\mathbf{H})^n \otimes U(1,\mathbf{H})^n$ which is
local $U(1,\mathbf{H}) \otimes U(1,\mathbf{H}) \sim SU(2) \otimes SU(2) \sim SO(4)$.
\end{enumerate}

\noindent In this way, if we keep fixed the metric $h$ and keep diagonal $D$, the action (\ref{action-1})
will be invariant at least under $U(1,\mathbf{H})^n \otimes U(1,\mathbf{H})^n$ which doesn't modify $h$.

We have supposed that a potential for $M$ breaks the $U(n,\mathbf{H})\otimes U(n,\mathbf{H})$ symmetry in
$U(1,\mathbf{H})^n \otimes U(m,\mathbf{H})^{n/m}$ where $m$ is an integer divisor of $n$. We'll see in fact
that the more natural potential has the form $tr\,(\a M^\dag M - \b M^\dag MM^\dag M)$, known as \lq\lq
mexican hat potential''. This potential is a very typical potential for a spontaneous symmetry breaking. In
this way all the vertices are grouped in $n/m$ ensembles $\mathcal{U}^a$:

$$\mathcal{U}^a = \{x^a_1, x^a_2, x^a_3, \ldots, x^a_m\}$$

$$ \varphi = (\varphi (x^a_i)) = \left( \begin{array}[c]{ccccc}
\varphi(x^1_1) & \varphi(x^1_2) & \varphi(x^1_3) & \ldots & \varphi(x^1_m) \\
\varphi(x^2_1) & \varphi(x^2_2) & \varphi(x^2_3) & \ldots & \varphi(x^2_m) \\
\varphi(x^3_1) & \varphi(x^3_2) & \varphi(x^3_3) & \ldots & \varphi(x^3_m) \\
\vdots & \vdots & \vdots & \vdots & \vdots\\
\varphi(x^{n/m}_1) & \varphi(x^{n/m}_2) & \varphi(x^{n/m}_3) & \ldots & \varphi(x^{n/m}_4)
\end{array} \right) $$

$$ A = (A^{ab}_{ij}) = (A(x^a_i, x^b_j)) .$$
Now the indices $a,b$ of $A$ act on the columns of $\varphi$, while the indices $i,j$ act on the rows.
The fields $A^{ab}_{ij}$ with $a=b$ maintain null masses and then they continue to behave as gauge fields for $U(m,\mathbf{H})^{n/m}$.
Every $U(m,\mathbf{H})$ term in $U(m,\mathbf{H})^{n/m}$ acts independently inside a single $\mathcal{U}^a$. So, if we consider the ensembles
$\mathcal{U}^a$ as the \lq\lq real'' physical points, we can interpret $U(m,\mathbf{H})^{n/m}$ as a local $U(m,\mathbf{H})$.

It's simple to verify:

$$h^{\mu\nu}(x^a_i) = h^{\mu\nu}(x^a_j) \qquad \forall x^a_i, x^a_j \in \mathcal{U}^a$$
$$h^{\mu\nu}(x^a) \overset{!}{=} h^{\mu\nu}(\mathcal{U}^a) = h^{\mu\nu}(x^a_i) \qquad \forall x^a_i \in \mathcal{U}^a$$
$$A_{ij} (x^a) \overset{!}{=} Tr\,\left[ A(x^a)T^{ij} \right],\quad where$$
\ba A(x^a) &=& \sum_{ij} A(x^a_i, x^a_j) T^{ij},\quad \text{with $T^{ij}$ generator of $U(m,\mathbf{H})$}\nonumber \\
&& \phantom{l} \label{definitions}\ea

\section{Ricci scalar in the arrangement field paradigm}
\label{ricciscalar}

\subsection{Hyperions}

In this subsection we define an extension of $\mathbf{H}$ by inserting
a new imaginary unit $I$. It satisfies:

$$I^2 = -1 \qquad I^\dag = -I$$
$$[I,i] = [I,j] = [I,k] = 0$$

\noindent In this way a generic number assumes the form

$$v = a + Ib+ ic + jd + ke + iIf+ jIg + kIh, \qquad a,b,c,d,e,f,g,h \in \mathbf{R}$$
$$v = p + Iq, \qquad p,q \in \mathbf{R}$$

\noindent We call this numbers \lq\lq Hyperions'' and indicate their ensemble with $Y$.
It's straightforward that such numbers are in one to one correspondence with even
products of Gamma matrices. Explicitly:

$$1 \Leftrightarrow \g_0 \g_0 = 1 \qquad I \Leftrightarrow \g_5 = \g_0 \g_1 \g_2 \g_3$$
$$i \Leftrightarrow \g_2 \g_1 \qquad iI \Leftrightarrow \g_0 \g_3$$
$$j \Leftrightarrow \g_1 \g_3 \qquad jI \Leftrightarrow \g_0 \g_2$$
$$k \Leftrightarrow \g_3 \g_2 \qquad kI \Leftrightarrow \g_0 \g_1$$

\noindent Note that imaginary units $i,j,k,iI,jI,kI$ satisfy the Lorentz algebra,
with $i,j,k$ which describe rotations and $iI, jI, kI$ which describe boosts.

\begin{definition}[bar-conjugation]
The bar-conjugation is an operation which exchanges $I$ with $-I$ (or $\g_0$ with
$-\g_0$ in the $\g$-representa\-tion). Explicitly, if $v = a + Ib+ ic + jd + ke + iIf+
jIg + kIh$ with $a,b,c,d,e,f,g,h \in \mathbf{R}$, then $\bar{v} = a - Ib+ ic + jd +
ke - iIf - jIg - kIh$.
\end{definition}

\begin{definition}[pre-norm]
The pre-norm is a complex number with $I$ as imaginary unit (we say \lq\lq I-complex number'').
Given an hy\-pe\-rion $v$, its pre-norm is $|v| = (\bar{v}^\dag v)^{1/2}$. If $v \in \mathbf{H}$,
its pre-norm coincides with usual norm $(v^\dag v)^{1/2}$.
\end{definition}

Note that every hyperion $v$ can be written in the polar form

$$v = |v|e^{ia+jb+kc+iId+jIe+kIf} \qquad a,b,c,d,e,f$$
$$|v|^2 = \bar{v}^\dag v = |v|e^{-(ia+jb+kc+iId+jIe+kIf)} |v|e^{ia+jb+kc+iId+jIe+kIf}= |v|^2.$$

\noindent If $M$ takes values in $\mathbf{Y}$, the probability for the
existence of an edge $(ij)$ can be defined as $||M^{ij}||$, which is the
norm of pre-norm.

\begin{notation}[Spectral theorem in $\mathbf{Y}$] $\pt$\newline

\noindent The fundamental relation (\ref{fondamentale}) descends uniquely from spectral theorem
in $\mathbf{H}$. You can see from work of Yongge Tian \cite{Tian} that spectral theorem
is still valid in $Y$ in the following form: \lq\lq Every normal matrix $M$ with entries
in $\mathbf{Y}$ is diagonalizable by a transformation $U \in U(n,\mathbf{Y})$ which sends
$M$ in $UM\bar{U}^\dag$''. Here $U(n,\mathbf{Y})$ is the exponentiation of $u(n, \mathbf{Y})
= u(n, \mathbf{H})\cup Iu(n, \mathbf{H})$ and $M$ satisfies a generalized normality condition.
Explicitly, $\bar{U}^\dag = U^{-1}$ and $\bar{M}^\dag M = M \bar{M}^\dag$. This implies that
(\ref{fondamentale}) is valid too in the form

$$\bar{U}^\dag D^\mu \na_\mu U = M$$

\noindent Matrix $\na$ is now in $u(n, \mathbf{Y})$ and then it satisfies $\bar{\na}^\dag = -\na$.
Accordingly, its diagonal entries belong to Lorentz algebra (they don't comprise real and
$I$-imaginary components).

To conclude, we don't know if an associate normal matrix exists for any invertible
matrix with entries in $\mathbf{Y}$. Fortunately, in lorentzian spaces there is
no reason for using such machinery and we can start from the beginning with a
normal arrangement matrix.
\end{notation}

\begin{notation}[gauge fixing] \label{sinv}$\pt$\newline

\noindent It follows from spectral theorem that eigenvalues $\lambda$ of $M$ are equivalence
classes

$$\lambda \sim s\lambda \bar{s}^\dag \qquad s \in \mathbf{Y}, \bar{s}^\dag s = 1.$$

\noindent As a consequence, we can choose freely the diagonal matrix $D$ inside the equivalence
class $SD\bar{S}^\dag$, where $S$ is both diagonal and unitary ($\bar{S}^\dag = S^{-1}$).
This choice does't affect the metric $\sqrt h h^{\mu\nu} = Re\,(\bar{D}^{\dag\mu} D^\nu)$,
granting for the persistence of a symmetry $U(1,Y)^n = SO(1,3)^n$, ie local $SO(1,3)$.
Clearly this is a reworking of the usual gauge symmetry which acts on the tetrads, sending
$e^\mu_a$ in $\Lambda_a^{\pt b} e^\mu_b$ via the lorentz transformation $\Lambda$.
In what follows we exploit $SO(1,3)$-symmetry to satisfy two conditions:

\ba && tr\,(\{\bar{\na}^\dag_\mu, \na_\nu\} \bar{D}^{\dag\mu} D^\nu) = 0 \label{scond} \\
&& tr\,(D^\b \{\na_\b, \bar{\na}^\dag_\mu\} \bar{D}^{\dag\mu} D^\nu \na_\nu \bar{\na}^\dag_\a \bar{D}^{\dag\a}) = 0 \nonumber \ea

\noindent Note that these are global conditions because operator \emph{tr}
is analogous to a space-time integration.

\end{notation}

\subsection{Ricci scalar with hyperions}

In this subsection we simplify the form of Ricci scalar by means of
hyperions, in order to make it suitable for the arrangement field
formalism. Given a gauge field $\w_\mu$ in $so(1,3)$ and a complex
tetrad $e^\mu$, we define

\be A_\mu = \w^{ab}_\mu \g_{a}\g_{b} \qquad\qquad h^{\mu\nu} = Re\,(e^{\dag \mu}_a e^\nu_b \eta^{ab})\label{gaugey} \ee
$$d^\mu = \sqrt e e^{\mu a} \g_0 \g_a \qquad e = \left[ det(- e^{\dag \mu}_a e^\nu_b \eta^{ab}) \right]^{-1/2} \in \mathbf{R^+}$$
$$\bar{d}^\mu = d^\mu (\g_0 \rightarrow-\g_0)$$
$$\Rightarrow \bar{d}^{\dag \mu} d^{\nu} = ee^{\dag \mu a}e^{\nu b} \g_{a} \g_{b} \quad \Rightarrow \sqrt h h^{\mu\nu} = \fr 14 Re\,\left[ tr(\bar{d}^{\dag \mu} d^{\nu})\right] $$

\noindent Note that our definitions are the same to require $\bar{A}^\dag =-A$
in the hyperions framework. The Ricci scalar can be written as

$$\sqrt h R(x) = -\fr 18 tr\left(\left(\pa_\mu A_\nu - \pa_\nu A_\mu + [A_\mu, A_\nu]\right) \bar{d}^{\dag\mu} d^\nu\right)$$

\noindent To verify its correctness we expand first the commutator

\ba [A_\mu, A_\nu] &=& \w^{ab}_\mu \w^{cd}_\nu \left( \g_a\g_b\g_c\g_d - \g_c \g_d \g_a \g_b \right) \nonumber \\
                &=& \fr 12 \w^{ab}_\mu \w^{cd}_\nu \left( \g_a\{\g_b,\g_c\}\g_d - \g_c \{\g_d, \g_a\} \g_b \right)  + \nonumber \\
                && +\fr 1 {2}\w^{ab}_\mu \w^{cd}_\nu \left( \g_a [\g_b,\g_c ]\g_d - \g_c [\g_d, \g_a] \g_b \right)\nonumber \ea

\ba  &=& \left(\w^{ab}_\mu \w_{b\nu}^{\pt d} - \w^{ab}_\nu \w_{b\mu}^{\pt d} \right)\left( \g_a \g_d \right)  + \nonumber \\
                && +\fr 1 {4!}\w^{ab}_\mu \w^{cd}_\nu \left( \e_{abcd} \e^{efgh} \g_e \g_f \g_g \g_h \right)\nonumber \\
                &=& [\w_\mu, \w_\nu]^{ab}\g_a\g_b + \w^{ab}_\mu \w^{(D)}_{ab\nu} \,\g_5 \ea

\noindent In the last line we have defined $\w^{(D)}_{ab \nu} = \e_{abcd} \w_\nu^{cd}$. Hence

\ba R(x) &=& -\fr 18 tr(\g_a \g_b \g_c \g_d)\left( \pa_\mu \w^{ab}_\nu - \pa_\nu \w^{ab}_\mu + [\w_\mu,\w_\nu]^{ab}\right)e^{\dag c\mu} e^{d\nu} - \nonumber \\
        && -\fr 18 tr(\g_5 \g_b \g_c) \w^{ab}_\mu \w^{(D)}_{ab\nu} e^{\dag c\mu} e^{d\nu} \ea

\noindent Consider now the relations

$$\fr 14 tr(\g_a \g_b \g_c \g_d) = \eta_{ab}\eta_{cd} - \eta_{ac}\eta_{bd} + \eta_{ad}\eta_{bc}$$
$$tr(\g_5 \g_b \g_c) = 0$$

\noindent We obtain

$$R(x) = \left( \pa_\mu \w^{ab}_\nu - \pa_\nu \w^{ab}_\mu + [\w_\mu,\w_\nu]^{ab}\right)e^{\dag \mu}_a e^{\nu}_b $$
which is the usual definition.

\begin{notation}
A complex tetrad implies that tangent space is the complexification of
Minkowsky space (usually indicated with $\mathcal{CM}$). This fact gives
a strict connection with theory of \textbf{twistors}\cite{twistor}, where massless particles
move on trajectories which have an imaginary component proportional
to helicity.
\end{notation}

\noindent We can move freely from matrices $\g$ to hyperions,
substituting $tr$ with $4$. In this way

\ba \sqrt h R(x) &=& -\fr 12 \left(\pa_\mu A_\nu - \pa_\nu A_\mu + [A_\mu, A_\nu]\right) \bar{d}^{\dag\mu} d^\nu \nonumber \\
        &=& -\fr 12 [\na_\mu,\na_\nu]\bar{d}^{\dag\mu} d^\nu \nonumber \ea
$$\na_\mu =\pa_\mu + A_\mu \qquad\qquad A_\mu, d^\mu \in \mathbf{Y}$$

$$e^\mu_a = Re\,e^\mu_a + I\,Im\,e^\mu_a$$
\ba d^\mu &=& Re\,e^{\mu 0} + i I\, Re\,e^{\mu 3} + j I\, Re\,e^{\mu 2} + k I\, Re\,e^{\mu 1} + \nonumber \\
 && + I\,Im\,e^{\mu 0} - i\, Im\,e^{\mu 3} - j\, Im\,e^{\mu 2} - k\, Im\,e^{\mu 1} \nonumber \ea

\subsection{Ricci scalar in the new paradigm}

We try to define Hilbert-Einstein action as

$$S_{HE} = tr\,(\bar M^\dag M).$$

\noindent We insert in $S_{HE}$ the usual expansion $M = UD^\mu \na_\mu \bar{U}^\dag$, obtaining

\ba S_{HE} &=& tr\,[(\bar U \bar{D}^{\dag\mu} \bar{\na}_\mu U^\dag)^\dag (U D^\nu \na_\nu \bar{U}^\dag)] \nonumber \\
           &=& tr\,[ U \bar{\na}^\dag_\mu \bar{D}^{\dag\mu} \bar{U}^\dag U D^\nu \na_\nu \bar{U}^\dag] \nonumber \\
           &=& tr\,[\na_\nu \bar{\na}^\dag_\mu \bar{D}^{\dag\mu} D^\nu].\label{HEfirst}\ea

\noindent Now we can impose the first condition in (\ref{scond}) which gives

\ba S_{HE} &=& \fr 12 tr\,\{[\na_\nu, \bar{\na}^\dag_\mu ] \bar{D}^{\dag\mu} D^\nu \} \label{espansione} .\ea
Expanding the covariant derivatives we obtain

\ba S_{HE} &=& \fr 12 \sum_{a,b,c} \{ \pa^\dag_\mu A_\nu (x^a, x^b)-\pa_\nu \bar{A}_\mu^\dag (x^a, x^b) + \nonumber \\
    && \qquad\qquad\qquad\qquad + [\bar{A}^\dag_\mu, A_\nu](x^a, x^b)\}\bar{d}^{\dag\mu}(x^b)\d^{bc} d^{\nu}(x^c)\d^{ca} \nonumber \\
    &=& \fr 12 \sum_a \{ \pa^\dag_\mu A_\nu (x^a)-\pa_\nu \bar{A}^\dag_\mu(x^a) + \nonumber \\
        && \qquad\qquad\qquad\qquad +[\bar{A}^\dag_\mu, A_\nu](x^a, x^a)\}\bar{d}^{\dag\mu}(x^a)d^{\nu}(x^a) \nonumber \\
        &=& \fr 12\sum_{a,b\neq a} \{ \pa^\dag_\mu A_\nu (x^a)-\pa_\nu \bar{A}_\mu^\dag(x^a) + [\bar{A}_\mu^\dag (x^a), A_\nu(x^a)] + \nonumber \\
        && \qquad\qquad\qquad\qquad +[\bar{A}_\mu^\dag (x^a,x^b),A_\nu (x^b,x^a)]\} \cdot \bar{d}^{\dag\mu}(x^a)d^{\nu}(x^a) \nonumber \\  && \ea

\noindent Consider now a symmetry breaking with residual group $U(m,\mathbf{Y})^{n/m}$ which regroups
vertices in ensembles $\mathcal{U}^a = \{x^a_1, x^a_2,\ldots,x^a_m\}$.
We assume that fields $A(x^a_i,x^b_j)$ with $a \neq b$ acquire big masses and thus we can neglect them.
The symbol $\sum_a$ becomes $\sum_{a,i}$, while $\sum_{a,b\neq a}$ becomes $\sum_{a,i,b,j|(a,i)\neq (b,j)}$. After
neglecting heavy fields, the last one is simply $\sum_{a,i,j\neq i}$.

\ba S_{HE} &=& \fr 12\sum_{a} \{ \pa^\dag_\mu tr\,A_\nu (x^a)-\pa_\nu tr\,\bar{A}^\dag_\mu(x^a) +
[tr\,\bar{A}^\dag_\mu (x^a), tr\,A_\nu(x^a)] + \nonumber \\
        && +\sum_{i,j\neq i}[\bar{A}_\mu^{\dag ij}(x^a) A^{ji}_\nu(x^a) - A_\nu^{ij}(x^a) \bar{A}^{\dag ji}_\mu](x^a)\} \cdot \bar{d}^{\dag\mu}(x^a)d^{\nu}(x^a) \nonumber \\
        && \ea

\noindent For what follows we write $S_{HE} = \fr 12 \sum_a R^{ik}_{\mu\nu} \d^{ik}\bar{d}^{\dag\mu}d^\nu $ with

\ba R^{ik}_{\mu\nu} &=& \pa^\dag_\mu tr\,A_\nu (x^a)-\pa_\nu tr\,\bar{A}^\dag_\mu(x^a) + [tr\,\bar{A}^\dag_\mu (x^a), tr\,A_\nu(x^a)] + \nonumber \\
        && \qquad\qquad +\sum_{i,j\neq i,k\neq j}[\bar{A}_\mu^{\dag ij}(x^a) A^{jk}_\nu(x^a) - A_\nu^{ij}(x^a) \bar{A}^{\dag jk}_\mu](x^a) .\nonumber \\ && \label{curvature}\ea
$R^{ik}_{\mu\nu}$ is a generalization of curvature tensor.
We have indicated with $tr\,A$ the track on $ij$, ie $\d^{ij}A^{ij}(x^a) = \d^{ij}A(x^a_i,x^a_j)$.
Note that $[\bar{A}^{\dag ii},A^{jj}]$ is equal to zero when $i \neq j$ and then

$$\sum_{i} [\tilde{A}^{^\dag ii}_\mu, A^{ii}_\nu] = \sum_{ij}[\bar{A}^{^\dag ii}_\mu, A^{jj}_\nu]= [tr\,\bar{A}^\dag_\mu, tr\,A_\nu].$$

\noindent Consider now any skew hermitian matrix $W_\mu$ with elements $W_\mu^{ij} = A_\mu^{ij}$ for $i \neq j$ and $W_\mu^{ij} = 0$
for $i = j$. It belongs to the subalgebra of $u(m,\mathbf{Y})$ made by all null track generators.
This means that commutators between null track generators are null track generators too. In this way

$$ \sum_{i,i\neq j} [\bar{A}^\dag_\mu (x^i,x^j),A_\nu (x^j,x^i)] = tr [\bar{W}^\dag_\mu, W_\nu] = 0 .$$

\noindent Hence we can delete the mixed term in $S_{EH}$.

\ba S_{HE} &=& \fr 12 \sum_{a} \{ \pa^\dag_\mu tr\,A_\nu (x^a)-\pa_\nu tr\,\bar{A}^\dag_\mu(x^a) +
               [tr\,\bar{A}^\dag_\mu (x^a), tr\,A_\nu(x^a)]\}\cdot \nonumber \\
            &&\qquad\qquad\qquad\qquad\qquad\qquad\qquad\qquad\qquad \cdot \bar{d}^{\dag\mu}(x^a)d^{\nu}(x^a)\nonumber \ea

\noindent In the arrangement field paradigm, the operator $\dag$ transposes also rows with
columns in matrices which represent $\pa$ and $A$. As we have seen, the fields
$A$ which intervene in $R$ are only the diagonal ones, so the transposition of rows with
columns is trivial. Note that $\na$ satisfies a generalized condition of skew-hermiticity
($\bar{\na}^\dag = -\na$) and then its diagonal components belong to
lorentz algebra. This implies $tr\,\bar{A}^\dag = -tr\,A$, matching exactly with our request
in (\ref{gaugey}). Finally, if we consider the matrix which represents $\pa$ (we have called
it $\tilde M$), we note that $\bar{\pa}^\dag = \pa^T = -\pa$. Explicitly

$$\na_\nu^\dag = (\pa_\nu + tr\,\bar{A}_\nu)^\dag = \pa_\nu^\dag + tr\,\bar{A}_\nu^\dag = -\pa_\nu - tr\,A_\nu = -\na_\nu .$$

\noindent Applying this to $S_{HE}$,

\ba S_{HE} \!\! &=& \!\! -\fr 12 \sum_{a} \{ \pa_\mu tr\,A_\nu (x^a)-\pa_\nu tr\,A_\mu(x^a) +
               [tr\,A_\mu (x^a), tr\,A_\nu(x^a)]\}\cdot \nonumber \\
            &&\qquad\qquad\qquad\qquad\qquad\qquad\qquad\qquad\qquad \cdot \bar{d}^{\dag\mu}(x^a)d^{\nu}(x^a) \nonumber \\
            &=& -\fr 12 [\overset{G}{\na}_\mu,\overset{G}{\na}_\nu]\bar{d}^{\dag\mu}(x^a)d^{\nu}(x^a) \nonumber \\
           &=& \sum_a \sqrt{h}R(x^a) \ra \int d^4 x \, \sqrt{h}R(x).\ea

\noindent Here $\overset{G}{\na}$ is the gravitational covariant derivative $\overset{G}{\na} = \pa + tr\,A$.
It's very remarkable that gauge fields in $R$ are only the diagonal ones. First, this is the
unique possibility to obtain ${\overset{G}{\na^\dag}_\nu} = -\overset{G}{\na}_\nu$. Moreover, while gauge fields in $R$
are tracks of matrices $(A_{ij})(x^a)$, we'll see as the other gauge fields in Standard Model
correspond to non diagonal components.

\section{The kinetic term}
\label{kinetic}
Until now we have obtained no terms which describe gauge interactions.
In this section we find a such term, with the condition that it hasn't
to change Einstein equations. One option is as follows:

\ba
S_{GB}  &=& -tr\,(\bar{M}^\dag M \bar{M}^\dag M) \label{eq: opz}\\
        &=& -tr\,\left[ U \bar{\na}^\dag_\mu \bar{D}^{\dag\mu} \bar{U}^\dag U D^\nu \na_\nu \bar{U}^\dag U \bar{\na}^\dag_\a \bar{D}^{\dag\a} \bar{U}^\dag U D^\b \na_\b \bar{U}^\dag \right]  \nonumber \\
        &=& -tr\,\left[ \bar{\na}^\dag_\mu \bar{D}^{\dag\mu} D^\nu \na_\nu \bar{\na}^\dag_\a \bar{D}^{\dag\a} D^\b \na_\b \right] \nonumber \ea

\noindent We assume a residual symmetry under $U(m,\mathbf{Y})^{n/m}$. This means that
$D^\mu$ are matrices made of blocks $m \times m$ where every block is a
hyperionic multiple of identity. We use newly the correspondence between
$(1,I,i,j,k,iI,jI,kI)$ and gamma matrices:

\ba S_{GB}  &=& -\fr 14 tr(\g_a\g_b\g_c\g_d\g_e\g_f\g_g\g_h)\,\left[ \na_\b^a \bar{\na}^{b\dag}_\mu \bar{D}^{\dag
c\mu} D^{d\nu} \na^e_\nu \bar{\na}^{f\dag}_\a \bar{D}^{\dag g\a} D^{h\b} \right] \nonumber \ea

\noindent We use letters $a,b,c,d$ for indices which run on Gamma matrices, $\a,\b,\mu,\nu$
for spatial coordinates indices and $ijk$ for gauge indices (ie indices which run
inside a single $\mathcal{U}^a$). Pay attention to not confuse the index $a$ in the first group with the index
$a$ which runs over the vertices like in $x^a_i$.

We will see that physical fields arise in three families, determined by the
choice of a subspace inside $Y$. This is true both for fermionic and bosonic
fields. Thus the indices with letters $a,b,c,d$ run over the three families.

We proceed by imposing the second condition in (\ref{scond}), in such a way to
ignore terms proportional to $\{\na_\b, \bar{\na}^\dag_\mu\}$ inside $S_{GB}$.
We take

$$ S_{GB} = \sum_a L_{GB} (x^a) $$
Then

\ba L_{GB} &=& R^{ij}_{ab\mu\b} R^{ab ji}_{\nu\a} \bar{d}^{\dag\mu}_c d^{c\nu} \bar{d}^{\dag \a}_d d^{d\b} -
4R^{ij}_{ac\mu\b} \bar{d}^{\dag a\mu} R^{cb ji}_{\nu\a} d^{\a}_b d^{d\b} \bar{d}^{\dag \a}_d  + \nonumber   \\
        && + R^{ij}_{ac\mu\b} \bar{d}^{\dag a\mu} d^{c\b} R^{cb ji}_{\nu\a} \bar{d}^{\dag \nu}_c d^{\a}_b   \nonumber \\
        &=& h R^{ij}_{ab\mu\b} R^{ab ji \mu\b} - 4 h R^{ij}_{c \b} R^{c ji \b} + h R^{ij}R^{ji} \ea

\noindent $R^{ij}_{\b\mu}$ was defined in (\ref{curvature}), while $\sqrt[4] h R^{ij}_\mu
= R^{ij}_{\b\mu} d^{\b}$ and $\sqrt h R^{ij} = R^{ij}_{\b\mu} d^{\b} d^{*\mu}$.
You understand in a moment that for $i \neq j$ we have $R^{ij}_{ac\b\mu}
R^{ji ac}_{\nu\a} h^{\mu\a}h^{\nu\b} = tr\,\sum_{(ac)} F^{(ac)}_{\mu\nu} F^{(ac) \mu\nu}$.
The index $(ac)$ runs over three fields families and $F_{(ac)\mu\nu}$
is a strength field tensor. In this way the terms $R^{ij \nu}_\b R^{ji \b}_\nu$
and $R^{ij}R_{ji}$ are terms which mix families.

The trouble with $S_{GB}$ is that it generates a factor $h$ instead of $\sqrt{h}$.
However, we can solve the problem imposing the gauge condition $h=1$. Note that for
$i=j$ we have

$$L_{GB} = R_{ac\b\mu} R^{ac\b\mu} + R^2 - 4 R^{\a}_\mu R^{\mu}_\a $$
which is a topological term and it doesn't change the Einstein equations.

\begin{notation}[symmetry breaking]
The combination of $S_{HE}$ and $S_{GB}$ gives to gravitational
gauge field $\overset{G}{A}$ a potential with form

$$\overset{G}{A^2} - \overset{G}{A^4}.$$
This potential has non trivial minimums which imply a non-trivial
expectation value for $\overset{G}{A}$. Moreover, inside $S_{GB}$ we
find the following kind of terms for other fields $A$:

$$\langle \overset{G}{A^2} \rangle A^2 - A^4.$$
In this way we have a mass for gauge fields $A$ and another
potential with non-trivial minimums. Therefore, also gauge fields
$A$ have non-trivial expectation values. Finally, such expectation values
give mass to fermionic fields via terms

$$\psi^\dag \langle A \rangle \psi.$$
There is no need for a scalar Higgs boson.
\end{notation}

\section{Connections with Strings and Loop Gravity}\label{string}
We have seen in \cite{Arrangement}, at \textbf{Remark 13}, that some similarities exist between
diagonal components of $M$ (loops) and closed strings in string theory. Now we have discovered
that such diagonal components describe a gravitational field. Is then a case that the lower energy
state for closed string is the graviton? We think no. Moreover, we have seen that gauge fields
correspond to non-diagonal components of $M$, ie open edge in the graph. This finds also a connection
with open strings, whose lower energy states are gauge fields. We have shown that a symmetry $U(m,\mathbf{Y})$
arises when vertices are grouped in ensembles $\mathcal{U}^a$ containing $m$ vertices. This seems to
represent a superimposition of $m$ universes or branes. Gauge fields for such symmetry correspond to open
edge which connect vertices in the same $\mathcal{U}^a$. Is then a case that the same symmetry arises
in open strings with endpoints in $m$ superimposed branes? We still think no.
Until now we have supposed that open edges between vertices in the same $\mathcal{U}^a$ have length zero, so
that we haven't to introduce extra dimensions. However, by $T-duality$ such edges correspond to open
strings with $U(m,\mathbf{Y})$ Chan-Paton which moves in an infinite extended extra dimension. This happens because
an absente extra dimension is a compactified dimension with $R = 0$ and $T-duality$ sends $R$ in $1/R$.
Regarding edges between vertices in different $\mathcal{U}^a$, we see that they have a mass proportional to
separation between endpoints. This is true both in our model and string theory.

$\pt$

\noindent The following two theorems emphasize a triality between \emph{Arrangement Field Theory},
\emph{String Theory} and \emph{Loop Quantum Gravity}. We can see as they are different manifestations
of the same theory.

\begin{theorem}\label{Loop}
Every element $M^{ij}$ in the arrangement matrix can be written as a state in the
Hilbert space of \emph{Loop Quantum Gravity}, ie an holonomy for a
$SO(1,3)$ gauge field\footnote{In Loop Gravity the gauge field appears usually in the
form $iA$ with $A$ hermitian. We incorporate the $i$ inside $A$ so that $A^{ab}\g_a\g_b$ corresponds
to a hyperionic number.}. In this way, every field (gauge or gravitational) becomes
a manifestation of only gravitational field.
\end{theorem}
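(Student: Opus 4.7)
The plan is to reinterpret the decomposition $M=UD^{\mu}\nabla_{\mu}\bar{U}^{\dag}$ established in the preceding sections in a form where each entry of $M$ arises as a parallel transport. Recall that inside the hyperionic framework the connection $A_{\mu}$ sitting inside $\nabla_{\mu}=\tilde{M}_{\mu}+A_{\mu}$ takes values in the Lorentz subalgebra $so(1,3)\subset Y$, so that $\exp(\Delta A_{\mu})$ is a genuine $SO(1,3)$ group element. First I would look at the ``nearest neighbour'' piece: between vertices $i$ and $j=i\pm 1$ along coordinate $\mu$, the matrix element $(D^{\mu}\nabla_{\mu})^{ij}$ reduces, up to absorbing the $D^{\mu}$ factor into the edge length, to $\pm\tfrac{1}{2\Delta}+A_{\mu}^{ij}$, which is precisely the lattice form of an infinitesimal holonomy $\exp(\Delta A_{\mu})$ divided by $\Delta$.

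Next I would build up an arbitrary matrix element $M^{ij}$ by exploiting the factorization property of matrix multiplication. Writing $M$ in the normal form of theorem~\ref{existence} and iterating the covariant derivative along a chain of adjacent vertices $i=k_{0},k_{1},\dots,k_{\ell}=j$, one obtains, after absorbing the constant $(1/2\Delta)^{\ell}$ prefactors into a rescaling of the path parameter,
\begin{equation}
M^{ij}\;\propto\;\bigl(U\,\mathcal{P}\!\exp\!\!\int_{\gamma_{ij}}\!\!A_{\mu}\,dx^{\mu}\,\bar U^{\dag}\bigr)^{ij},
\end{equation}
where $\gamma_{ij}$ is the oriented path used to connect the two vertices and $\mathcal{P}$ denotes path ordering. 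The rotation by $U$ and $\bar U^{\dag}$ drops out when one looks at gauge-invariant traces, and what remains is exactly the holonomy of the $so(1,3)$ connection $A_{\mu}$ along $\gamma_{ij}$, i.e.\ a cylindrical function on the space of $SO(1,3)$ connections in the sense of Loop Quantum Gravity. The collection of such holonomies attached to the edges of the underlying graph is by definition a spin-network state in the kinematical Hilbert space of LQG.

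Finally, to make precise the assertion that ``every field becomes gravitational,'' I would invoke the identification, already used in section~\ref{ricciscalar}, between the gauge field of the residual $U(m,\mathbf{Y})^{n/m}$ symmetry and Lorentz-algebra-valued components in the hyperionic expansion. Since both the diagonal (gravitational) and off-diagonal (gauge) components of $A$ land in copies of $so(1,3)$ once written in terms of $\gamma$-matrix bilinears, they are treated on the same footing by the holonomy construction, and the whole $M^{ij}$ is nothing but an $SO(1,3)$-holonomy carrying extra internal indices absorbed into the starting and ending vertex labels.

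The main obstacle will be the path-dependence: a holonomy depends on the choice of $\gamma_{ij}$, whereas $M^{ij}$ is a single quaternionic number. I expect this to be handled by noting that in the probabilistic graph interpretation any two vertices are connected by at most a single oriented edge, so the natural path is the edge itself, and the issue only appears when one tries to reconstruct $M^{ij}$ from products along longer chains—at which point consistency forces the $\mathcal{P}$-exponentials from different chains to agree modulo the residual gauge invariance $U(1,\mathbf{Y})^{n}$ identified in remark~\ref{sinv}. Checking that this residual freedom is exactly the LQG gauge freedom at the vertices is the step that requires the most care; once it is verified, the correspondence between $M^{ij}$ and LQG states is established.
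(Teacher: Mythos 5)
Your construction diverges from the paper's at the decisive step, and the divergence opens a real gap. You propose to obtain $M^{ij}$ as a path-ordered product of nearest-neighbour factors along a chain $i=k_0,k_1,\dots,k_\ell=j$, ``exploiting the factorization property of matrix multiplication.'' But the decomposition you start from, $M=UD^{\mu}\nabla_{\mu}\bar U^{\dag}$, is a single three-factor product; its $(i,j)$ entry is $\sum_{k,l}U^{ik}d^{\mu}(x_k)\nabla^{kl}_{\mu}\bar U^{\dag\,lj}$, a \emph{sum} over intermediate indices, not a product along a chosen chain. Nothing in the arrangement formalism expresses $M$ as an $\ell$-fold product of nearest-neighbour transfer matrices, so the path-ordered exponential $\mathcal{P}\exp\int_{\gamma_{ij}}A_{\mu}\,dx^{\mu}$ does not emerge from this decomposition; you would need a separate transfer-matrix argument to justify it, and you would then face the path-dependence you flag at the end, which your appeal to the residual $U(1,\mathbf{Y})^{n}$ freedom does not actually resolve (holonomies along different paths between the same endpoints differ by curvature, not by a gauge transformation at the vertices). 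Your closing remark that ``the natural path is the edge itself'' is in fact the paper's starting point, and it makes the entire chain construction unnecessary.

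The paper's proof goes the other way around: since every hyperion admits a polar form, it simply \emph{writes} $M^{ij}=|M^{ij}|\exp\left(\int_{x_i}^{x_j}A_{\mu}\,dx^{\mu}\right)$ as a holonomy along the single edge $(ij)$, and the substance of the proof is the consistency check that the connection appearing in this exponent is the same $A_{\mu}$ that lives inside $\nabla_{\mu}$. That check uses the LQG canonical identification $d^{\mu}(x)=G\,\delta/\delta A_{\mu}(x)$ applied to $M^{ij}=\lambda\, d^{\mu}(x_i)\nabla^{ij}_{\mu}$, together with a smearing surface $\Omega$ around $x_i$, yielding $A^{ii}_{\nu}=A_{\nu}(x_i)+O(\lambda)$ after fixing the integration function $K_{\nu}$. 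Your proposal contains no analogue of this step: without it, exhibiting $M^{ij}$ as \emph{some} exponential is vacuous (any invertible hyperion is one), and the identification of that exponent with the physical connection --- which is what makes $M^{ij}$ a state of Loop Quantum Gravity --- is not established.
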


\begin{proof}
An element $M^{ij}$ can always be written in the following form:

\be M^{ij} = |M^{ij}| exp \left(\int_{x_i}^{x_j} A_\mu dx^\mu\right) \label{defM} \ee
with $\mu = 1,2,3$ and

$$|M^{ij}|= exp \left(\int_{x_i}^{x_j} A_0 dx^0\right).$$
Here $A_\mu$ is a $SO(1,3)$ connection and $A_0$ is an $I$-complex field.
Obviously, we take $A_\mu$ hyperionic by using the usual correspondence
with Gamma matrices. In this way $A_\mu$ is purely imaginary.
The integration is intended over the edge which goes from vertex $i$ to vertex $j$, parametrized
by any $\t \in [0,1]$. If you look (\ref{defM}), you see on the left a discrete space (the graph) with
discrete derivatives and fields which are defined only on the vertices. On the right you find
instead a Hausdorff space with continuous paths, continuous derivatives and fields which are
defined everywhere. Applying eventually a transformation in $U(n,\mathbf{Y})$,
we have

$$ M^{ij} = D^{ik\mu}\na^{kj}_\mu = D^{ii\mu} \na^{ij}_\mu = d^\mu(x_i) \na^{ij}_\mu. $$
In the following we introduce a real constant $\lambda$, with length dimensions, in order to make $M$ dimensionless:

\be M^{ij} = \lambda D^{ik\mu}\na^{kj}_\mu = \lambda D^{ii\mu} \na^{ij}_\mu = \lambda d^\mu(x_i) \na^{ij}_\mu. \label{scompose}\ee
In \emph{Loop Quantum Gravity} we consider any space-time foliation defined by some temporary
parameter and then we quantize the theory on a tridimensional slice. The simpler choice is
a foliation along $x_0$: in this case the metric on the slice is simply the spatial block $3\times 3$
inside the four dimensional metric when it's taken in temporary gauge. In such framework we have
$d^0 = \mathbf{1}$ and $[d^\mu(x), A_\nu(x')] = G\d^\mu_\nu \d^3(x-x')$ with $\mu,\nu = 1,2,3$.
We deduce the relation $d^\mu(x) = G\d / {\d A_\mu(x)}$ and apply it to (\ref{scompose}) when vertices $i$
and $j$ sit on the same slice. We obtain

\be d^\mu(x_i) \na^{ij}_\mu = G\fr \d {\d A_\mu (x_i)} \na^{ij}_\mu = \fr 1 \lambda |M^{ij}|
exp \left({\int_{x_i}^{x_j} A_\mu dx^\mu}\right)\label{appl}\ee
with $\mu = 1,2,3$. Note that $x_0 (x_i) = x_0 (x_j)$ when $i$ and $j$ sit on the same slice. Hence

$$|M^{ij}|= exp \left(\int_{x_i}^{x_j} A_0 dx^0\right) = exp \left(\oint A_0 dx^0\right).$$
Consider now the following relation:

\be exp \left({\int_{x_i}^{x_j} A_\mu dx^\mu}\right) = \fr \d {\d A_\nu} \int_\Omega d^2s \, n_\nu exp \left({\int_{x_i}^{x_j} A_\mu dx^\mu}\right)\label{relaz}\ee
with
$$n_\nu = \fr 12 \e_{\nu\mu\a} \fr {\pa x^\mu}{\pa s^a} \fr {\pa x^\a}{\pa s^b}\e^{ab}.$$
$\Omega$ is a two dimensional surface parametrized by coordinates $s^a$
with $a=1,2$ and $\int_\Omega d^2 s = G$. We assume that $\Omega$ contains
the vertex $x_i$ and no other point which is a vertex or sits along an edge.
Substituting (\ref{relaz}) in (\ref{appl}) we obtain

$$ \fr \d {\d A_\nu^i} \na^{ij}_{\nu} = \fr 1 {\lambda G}\fr \d {\d A_\nu}
\int_\Omega d^2s \, n_\nu |M^{ij}| exp \left({\int_{x_i}^{x_j} A_\mu dx^\mu}\right)$$
and then

\ba \na^{ij}_{\nu} &=& \fr 1 {\lambda G}\int_\Omega d^2s \, n_\nu |M^{ij}| exp \left({\int_{x_i}^{x_j} A_\mu dx^\mu}\right) +K_\nu (x_i, x_j) \nonumber \\
                &=& \fr 1 {\lambda G} \int_\Omega d^2s \, n_\nu exp \left({\int_{x_i}^{x_j} A_\mu dx^\mu}\right) + K_\nu (x_i, x_j).\nonumber \ea
$K_\nu$ is any function of $x_i$ and $x_j$ independent from $A_\mu$.
In the se\-cond line we have taken $\mu =0,1,2,3$. For diagonal components this becomes

\be A_\nu^{ii} = \fr 1 {\lambda G}\int_\Omega d^2s \, n_\nu exp \left({\oint A_\mu dx^\mu}\right)+K_\nu (x_i).\label{una}\ee
We have used $\pa^{ii} = 0$ because the matrix which represents the discrete derivative is
null along diagonal. We choose loops and surfaces $\Omega$ in such a way to have

$$n_\nu \oint A_\mu dx^\mu = \lambda A_\nu (x_i) + O(\lambda^2).$$
Applying this into (\ref{una}), it becomes

\ba A^{ii}_\nu &=& \fr 1 {\lambda G} \int_\Omega d^2s \, n_\nu \left( 1 + \oint A_\mu dx^\mu + O(\lambda^2) \right) +K_\nu (x_i)\nonumber \\
                        &=& \fr 1 {\lambda G} (G n_\nu + G \lambda A_\nu (x_i)+G\cdot O(\lambda^2)) + K_\nu (x_i)\nonumber \\
                        &=& \fr 1 \lambda (n_\nu + \lambda A_\nu (x_i)+ O(\lambda^2))+ K_\nu (x_i).\ea
If we set $K_\nu (x_i) = -n_\nu (x_i) / \lambda$, we obtain

$$A^{ii}_\nu = A_\nu (x_i) + O(\lambda).$$
This verifies the consistence of our definition and proves the theorem.

Note that $\l$ could be taken equal to $\Delta$ because $M$ contains a factor $\Delta^{-1}$
from definition (\ref{dderiv}) of $\tilde{M}$. In such case we obtain

$$A^{ii}_\nu = A_\nu (x_i)$$
in the continuous limit.
\end{proof}
\begin{notation}[third quantization]
Note that canonical quantization of gauge fields implies
\normalsize
$$\left[ \pa_0 A^{ij}_\a (x_a), A^{ij}_\nu (x_b) \right] =
\left[\left(\int d^4 x \pa_0 A_\mu (x) \fr {\d \na^{ij}_\a}
{\d A_\mu (x)}\right)(x_a), \na^{ij}_\nu(x_b)\right] = \d_{\a\nu}\d^3(x_a -x_b).$$
\large
\noindent Integration in the first factor is over continuous coordinates of Hausdorff space.
Conversely, the argument $x_a$ indicates simply to what ensemble $\mathcal{U}^a$
the edge $(ij)$ belongs. Here we have used $\pa^{ij} = 0$, which holds not only for
$i = j$ but also for $x_i$ and $x_j$ in the same ensemble $\mathcal{U}^a$. This implies
$\na^{ij} = A^{ij}$. Moreover $\na^{ij}$ is a state in the Hilbert space of Loop
Quantum Gravity and hence we have a sort of third quantization which applies on
gravitational states and creates gauge fields:

$$\left[\left( \int d^4 x \dot{A}_\mu(x) \fr {\d \Psi[\Lambda,A]}{\d A_\mu(x)}\right), \Psi^\dag[\Lambda',A]\right] = \d (\Lambda - \Lambda').$$
$$\left[\left( \int d^4 x \dot{A}_\mu(x) \fr {\d \Psi[\Lambda, A]}{\d A_\mu(x)}\right), \Psi^\dag[\Lambda, A']\right] = \d (A - A').$$
This implies
\normalsize
$$\Psi[A] = \int D[d^\mu]\, a(d)\, exp\left(\fr 1 G{\int d^4 x \,d^\mu A_\mu}\right) + b^\dag(d) \, exp\left(\fr 1 G{\int d^4 x \,d^{\dag\mu} A^\dag_\mu}\right)$$
\large
$$\left[a(d), a^\dag (d')\right] = \fr 1 {\int d^4 x \dot A_\nu d^\nu} \d(d-d^{\dag\prime})$$
$$\left[b(d), b^\dag (d')\right] = \fr 1 {\int d^4 x \dot A_\nu d^\nu} \d(d-d^{\dag\prime})$$
\end{notation}

\begin{figure}[ptbh]
\centering\includegraphics[width=0.6\textwidth ]{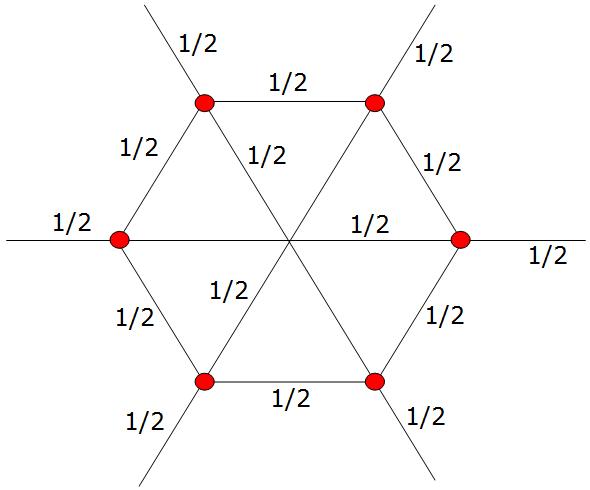}
\caption{A spin network with symmetry $U(6,\mathbf{Y}$). The six vertices are assumed superimposed.}
\label{Spin-network}
\end{figure}
In figure \ref{Spin-network} we see a spin network which defines a
$U(6,\mathbf{Y})$ gauge field $A^{ij}$ with $i,j =1,2,3,4,5,6$. The vertices
are assumed superimposed. The symmetry group is bigger than
$U(1,\mathbf{Y})^6 \sim SO(1,3)^6$ which acts separately on the single vertices.
The group grows in fact to $U(6,\mathbf{Y})$ because we can exchange
the vertices without change the graph. We have the same situation with
open strings: six strings with endpoints on six separated branes define
a state with symmetry $U(1)^6$ but, if the branes are superimposed,
the symmetry becomes $U(6)$.

Generators in $u(6,\mathbf{Y})$ are generators in $u(6,\mathbf{H})$ multiplied
by $1$ or $I$. In turn, generators in $u(6,\mathbf{H})$ can be divided in three families
of generators in $u(6)$, one for every choice of imaginary unit ($i,j$ or $k$).
Note that commutation relations for $U(6)$ are satisfied if and only if

$$U^{ij}U^{jk} = U^{ik},$$
where $U^{ij}$ is the holonomy from $x_i$ to $x_j$. Hence

$$A_\mu = \pa_\mu \Gamma \qquad \,\,\,\text{with}\,\,\,\Gamma\,\,\,\text{scalar.}$$
This means that gauge fields in $U(6)$ could exist without gravity, ie
when $A$ is a pure gauge. Otherwise, an holonomy with $A \neq \pa \Gamma$
exchanges gauge fields between different families.

\begin{theorem}
The actions $tr\,(M^\dag M)$ and $tr\,(M^\dag M M^\dag M)$ are sums of
exponentiated string actions.
\end{theorem}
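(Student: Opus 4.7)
The plan is to exploit Theorem \ref{Loop}, which already rewrites each matrix element $M^{ij}$ as an exponentiated holonomy $M^{ij} = |M^{ij}|\exp\bigl(\int_{x_i}^{x_j} A_\mu dx^\mu\bigr)$, and then to show that the traces in question collapse into sums over exponentials of closed-loop integrals, which upon Stokes' theorem become surface integrals of the field strength $F$ over worldsheets whose boundaries are the loops. That surface integral is the object we wish to identify with the exponentiated string action (a $B$-field Wess--Zumino coupling $\exp(\int_\Sigma F)$ on the worldsheet, or equivalently a Nambu--Goto-type area contribution once $F$ is pulled back through the tetrad $d^\mu$).

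First I would compute $tr(M^\dagger M) = \sum_{i,j}(M^{ji})^\dagger M^{ji}$. Using the holonomy formula together with the skew-hermiticity $\bar A^\dagger = -A$, each summand reduces to $|M^{ij}|^2 \exp\bigl(\oint_{C_{ij}} A_\mu dx^\mu\bigr)$, where $C_{ij}$ is the closed one-edge path $x_i \to x_j \to x_i$ traversed with opposite orientations. Absorbing $|M^{ij}|^2$ as $\exp\bigl(\oint A_0 dx^0\bigr)$ via the second factor in the definition of $M^{ij}$, this becomes a single exponential of a four-dimensional loop integral. By Stokes' theorem, $\oint_{C_{ij}} A = \int_{\Sigma_{ij}} F$ over any surface $\Sigma_{ij}$ with $\partial \Sigma_{ij} = C_{ij}$, which I identify with the worldsheet of an infinitesimal (degenerate) open string stretching between $x_i$ and $x_j$.

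Next I would repeat the calculation for $tr(M^\dagger M M^\dagger M) = \sum_{i,j,k,l} (M^{ji})^\dagger M^{jk} (M^{lk})^\dagger M^{li}$. Each term now concatenates four holonomies into a single closed quadrilateral path $x_i \to x_j \to x_k \to x_l \to x_i$. Ordered exponentiation of a product of line integrals along such a piecewise path yields one total holonomy $\exp\bigl(\oint_{C_{ijkl}} A\bigr)$ (modulo the usual commutators which, by the Baker--Campbell--Hausdorff expansion, only contribute higher-order terms of the type already contained in the curvature $F$). Applying Stokes once more on a spanning surface $\Sigma_{ijkl}$ gives $\exp\bigl(\int_{\Sigma_{ijkl}} F\bigr)$, the exponentiated action of a string worldsheet with polygonal boundary. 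Summing over $i,j,k,l$ then reproduces $tr((M^\dagger M)^2)$ as a partition-function-like sum over open string configurations in the graph.

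The hard part will be two things. First, the ordering issue: since hyperions and their matrix products do not commute, the naive identification $\prod \exp(\int A) = \exp(\oint A)$ picks up path-ordering corrections, and I must argue either that these corrections reassemble into the curvature $F$ (so that the BCH series terminates inside the Stokes identity, as happens in abelian-like sectors), or absorb them into the definition of a non-abelian Wilson-surface. Second, the precise identification of $\int_\Sigma F$ with a standard string action requires pulling back $F$ through the tetrad-like field $d^\mu$ so that the density matches $\sqrt{-g}$ on the worldsheet; this is natural in the gauge $h=1$ used in Section \ref{kinetic} but needs to be verified globally. Once these two points are settled, the triality with String Theory announced before the theorem is rigorously established, and the trace actions are manifestly sums $\sum_\Sigma \exp(-S_{\text{string}}[\Sigma])$ over the discrete string worldsheets inherent in the graph.
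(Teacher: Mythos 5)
Your proposal follows essentially the same route as the paper's proof: invoke Theorem \ref{Loop} to write each $M^{ij}$ as an exponentiated holonomy, concatenate the holonomies appearing in each trace term into a single closed polygonal loop (a two-edge loop for $tr\,(M^\dag M)$, a quadrilateral for $tr\,(M^\dag M M^\dag M)$, plus the degenerate loops and cylinders from diagonal entries), apply Stokes to convert $\exp(\oint A)$ into $\exp(\int_\Sigma F)$ over a spanning worldsheet, and identify that flux with an exponentiated string action. If anything you are more careful than the paper, which passes from the product of holonomies to a single $\exp\left(\int_{\pa\Box} A_\mu dx^\mu\right)$ and then to $\exp\left(\int_\Box F_{\mu\nu}dx^\mu\wedge dx^\nu\right)$ without addressing either the path-ordering corrections you flag or the precise matching of $\int_\Sigma F$ to a standard worldsheet action (it simply declares $F$ to be an antisymmetric background metric).
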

\begin{proof}
We obtain from theorem \ref{Loop}:

\ba M^{ij} M^{*jk} M^{kl} M^{*li} &=& exp\left(\int_{\pa\Box} A_\mu dx^\mu \right)  \\
                            &=& exp\left(\int_{\Box} F_{\mu\nu} dx^\mu \wedge dx^\nu \right) \nonumber \\
                            &=& exp\left(\int_{\Box} \e^{ab} F_{\mu\nu} X^\mu_{,a} X^\nu_{,b} \, d^2 s\right) \nonumber \ea
This is the exponential of an action for open strings whose worldsheet is a square made
by edges $(ij)$, $(jk)$, $(kl)$, $(li)$. The strings move in a curved background with
antisymmetric metric $F_{\mu\nu} = (d \wedge A)_{\mu\nu}$. In a similar manner

\ba M^{ij} M^{*jk} M^{ki} &=& exp\left(\int_{\bigtriangleup} \e^{ab} F_{\mu\nu} X^\mu_{,a} X^\nu_{,b} \, d^2 s\right) \ea
This is the exponential of an action for open strings whose worldsheet is a triangle.

\ba M^{ij} M^{*ji} &=& exp\left(\int_{O} \e^{ab} F_{\mu\nu} X^\mu_{,a} X^\nu_{,b} \, d^2 s\right) \ea
This is the exponential of an action for open strings whose worldsheet is a circle.

\ba M^{ii} &=& exp\left(\int_{O} \e^{ab} F_{\mu\nu} X^\mu_{,a} X^\nu_{,b} \, d^2 s\right) \ea
The same of above.

\ba M^{ii}M^{jj} &=& exp\left(\int_{Cil} \e^{ab} F_{\mu\nu} X^\mu_{,a} X^\nu_{,b} \, d^2 s\right) \ea
This is the exponential of an action for closed strings whose worldsheet is a cilinder. This concludes the proof.
\end{proof}

\section{Standard model interactions \label{electroweak}}

We suppose that a residual symmetry for $U(6,\mathbf{Y})^{n/6}$ survives.
If we consider the ensembles $\mathcal{U}^a = (x^a_1, x^a_2, x^a_3, x^a_4, x^a_5, x^a_6)$ as the
real physical points, $U(6,\mathbf{Y})^{n/6}$ can be considered as a local $U(6,\mathbf{Y})$.
We have defined $u(6,\mathbf{Y})$ as the complexified Lie algebra of $U(6,\mathbf{H})$,
generated by all matrices in $u(6,\mathbf{H})$ and $Iu(6,\mathbf{H})$.
By exponentiating $u(6,\mathbf{Y})$ we obtain a simple Lie group with complex dimension $78$.
This group is the symplectic group $Sp(12,\mathbf{C})$ and $U(6,\mathbf{H})$ is its real
compact form, sometimes called $Sp(6)$.
We consider the fields $A(x^a_i, x^b_j)$ with $a = b$ (we call them $A(x^a)$). They are
$6 \times 6$ skew adjoint hyperionic matrices $\bar{A}^\dag = -A$. These
matrices form the $Sp(12,\mathbf{C})$ algebra which has $156$ generators $\w$ with $\bar{\w}^\dag = -\w$.

$$\w = \left( \begin{array}[c]{cccccc}
 \vec{y}   & b+\vec{b}  & c+\vec{c}  & d+\vec{d}  & e+\vec{e}  & m+\vec{m}  \\
 -b+\vec{b}& \vec{a}_1  & f+\vec{f}  & g+\vec{g}  & h+\vec{h}  & p+\vec{p}  \\
 -c+\vec{c}& -f+\vec{f} & \vec{a}_2  & s+\vec{s}  & q+\vec{q}  & r+\vec{r}  \\
 -d+\vec{d}& -g+\vec{g} & -s+\vec{s} & \vec{a}_3  & k+\vec{k}  & t+\vec{t}  \\
 -e+\vec{e}& -h+\vec{h} & -q+\vec{q} & -k+\vec{k} & \vec{a}_4  & v+\vec{v}  \\
 -m+\vec{m}& -p+\vec{p} & -r+\vec{r} & -t+\vec{t} & -v+\vec{v} & \vec{a}_5  \\
\end{array} \right) $$

\noindent Consider now the subalgebra of the following form with complex
(not hyperionic) components except for $y$ which remains hyperionic:

$$\w = \left( \begin{array}[c]{cccccc}
 \vec{y}   & 0          & 0          & 0          & 0          & 0          \\
 0         & \vec{a}_1  & f+\vec{f}  & g+\vec{g}  & h+\vec{h}  & p+\vec{p}  \\
 0         & -f+\vec{f} & \vec{a}_2  & s+\vec{s}  & q+\vec{q}  & r+\vec{r}  \\
 0         & -g+\vec{g} & -s+\vec{s} & \vec{a}_3  & k+\vec{k}  & t+\vec{t}  \\
 0         & -h+\vec{h} & -q+\vec{q} & -k+\vec{k} & \vec{a}_4  & v+\vec{v}  \\
 0         & -p+\vec{p} & -r+\vec{r} & -t+\vec{t} & -v+\vec{v} & \vec{a}_5  \\
\end{array} \right) $$

\noindent Moreover we put the additional condition $\vec{a} = \sum_l \vec{a}_l = 0$.
The field $y = tr\,\w$ is the only one which contributes to Ricci scalar.
Conversely, all other fields belong to a $SU(5)$ subgroup, which defines the Georgi - Glashow
grand unification theory. The symmetry breaking in Georgi - Glashow model is induced by Higgs
bosons in representations which contain triplets of color. These color triplet Higgs can mediate
a proton decay that is suppressed by only two powers of GUT scale. However, our mechanism of
symmetry breaking doesn't use such Higgs bosons, but descends from the expectation values of
quadratic terms $AA$, which derive from non trivial minimums of a potential $AA - AAAA$. So we
circumvent the problem.

Restrict now the attention to the $SO(1,3) \otimes SU(2) \otimes U(1) \otimes SU(3)$
generators, that are the generators of standard model plus gravity.

$$\w = \left( \begin{array}[c]{cccccc}
 \vec{y}   & 0          & 0          & 0          & 0          & 0          \\
 0         & \vec{a}_1  & f+\vec{f}  & 0          & 0          & 0          \\
 0         & -f+\vec{f} & \vec{a}_2  & 0          & 0          & 0          \\
 0         & 0          & 0          & \vec{a}_3  & k+\vec{k}  & t+\vec{t}  \\
 0         & 0          & 0          & -k+\vec{k} & \vec{a}_4  & v+\vec{v}  \\
 0         & 0          & 0          & -t+\vec{t} & -v+\vec{v} & \vec{a}_5  \\
\end{array} \right) $$

\noindent We'll show in a moment that all standard model fields transform under this subgroup in the adjoint
representation. In this way themselves are elements of $Sp(12,\mathbf{C})$ algebra, explicitly:

$$\psi = \psi^1 + I\psi^2 = \left( \begin{array}[c]{cccccc}
 0              & e            & -\nu       & d^c_{R}        & d^c_{G}          & d^c_{B}   \\
 -e^*        & 0            & e^c        & -u_{R}         & -u_{G}           & -u_{B}    \\
 \nu^*       & -e^{c*}   & 0          & -d_{R}         & -d_{G}           & -d_{B}    \\
 -d^{c*}_R   & u^*_R     & d^*_R   & 0              & u^c_{B}          & -u^c_{G}  \\
 -d^{c*}_G   & u^*_G     & d^*_G   & -u^{c*}_{B} & 0                & u^c_{R}   \\
 -d^{c*}_B   & u^*_B     & d^*_B   & u^{c*}_{G}  & -u^{c*}_{R}   & 0         \\
\end{array} \right) $$

\noindent We have used the convention of Georgi - Glashow model, where the basic fields of $\psi^1$
are all left and the basic fields of $I\psi^2$ are all right. We have indicated with $^c$ the
charge conjugation. The subscripts $R,G,B$ indicates the color charge for the strong interacting
particles (R=red, G=green, B=blue).

In Georgi - Glashow model the fermionic fields are divided in two families. The first one transforms
in the representation $\bar{5}$ of $SU(5)$ (the fundamental representation). It is exactly the array
$(\w^{1j})$ in the matrix above, with $j = 2,3,4,5,6$. This array transforms in fact in the fundamental
representation for transformations in every $SU(5) \subset Sp(12,\mathbf{C})$ which acts on indices values
$2 \div 6$.

The second family transforms in the representation $10$ of $SU(5)$ (the skew symmetric representation).
Unfortunately it isn't the sub matrix $(\w^{ij})$ with $i,j = 2,3,4,5,6$. This is in fact the skew adjoint
representation of $U(5,\mathbf{Y})$, which is skew hermitian and not skew symmetric.

Do not lose heart. We'll see in a moment that such adjoint representation is a quaternionic combination
of three skew symmetric representations, one for every fermionic family. This concept could appears cumbersome,
but it will be clear along the following calculations.

\begin{theorem} The skew adjoint representation of $U(m,\mathbf{H})$ is a qua\-ternionic combination of
three skew symmetric representations of $U(m) = U(m,\mathbf{C})$ plus a real skew symmetric representation (which is also
skew hermitian).
\end{theorem}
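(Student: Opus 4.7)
The plan is to decompose $W \in u(m, \mathbf{H})$ along the real basis $\{1, i, j, k\}$ of $\mathbf{H}$, writing $W = W_0 + iW_1 + jW_2 + kW_3$ with each $W_a \in M_m(\mathbf{R})$, and to deduce transposition properties from the single condition $W^\dagger = -W$.

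First I would expand $W^\dagger$ entrywise. Since the quaternionic involution fixes real entries and negates $i, j, k$, while the transpose swaps row and column indices, the skew-hermitian condition separates into $W_0^T = -W_0$ and $W_a^T = W_a$ for $a = 1, 2, 3$. Thus $W_0$ is a real skew-symmetric matrix (automatically skew-hermitian when regarded as complex, so it is the ``real skew symmetric'' summand mentioned in the theorem), while the other three real matrices come out symmetric. A dimension count confirms the split: $m(2m+1) = m(m-1)/2 + 3 \cdot m(m+1)/2$, with the first term being the dimension of real skew-symmetric $m \times m$ matrices and each of the three other summands having dimension $m(m+1)/2$.

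Next I would verify the equivariance of this decomposition under conjugation by $U \in U(m) = U(m,\mathbf{C}) \subset U(m,\mathbf{H})$ (with the convention $\mathbf{C} = \mathbf{R}[i]$). Since such $U$ commutes with $i$ but satisfies $Uj = j\bar{U}$ and $Uk = k\bar{U}$, one checks that conjugation preserves the splitting into $W_0 + iW_1$ (on which $U$ acts by the ordinary adjoint $A \mapsto U A U^\dagger$) and $jW_2 + kW_3 = j(W_2 - iW_3)$ (on which $U$ acts by $B \mapsto \bar U B \bar U^T$). Using the $s$-invariance of Observation \ref{sinv} one may repeat the same analysis with $\mathbf{R}[j]$ or $\mathbf{R}[k]$ playing the role of the privileged complex subfield of $\mathbf{H}$, and the three resulting copies, paired with their distinct imaginary units $i, j, k$, constitute the ``quaternionic combination'' of three $U(m)$-representations claimed by the theorem.

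The main obstacle is reconciling the theorem's wording: the three extra pieces come out naturally as symmetric-tensor ($\mathrm{Sym}^2 \mathbf{C}^m$-type) representations rather than skew-symmetric ones, so matching the stated decomposition requires either correcting ``skew symmetric'' to ``symmetric'' for the three pieces or invoking a pseudo-real twisting that exchanges $\mathrm{Sym}^2$ and $\Lambda^2$ up to complex conjugation. I would handle this by spelling out the correspondence in the $\Gamma$-matrix language already used in the paper, relying on the dictionary between $(1, I, i, j, k, iI, jI, kI)$ and even products of gamma matrices to re-express each real symmetric piece as an antisymmetric-looking object in the appropriate dual basis, which is precisely the form needed to match the $10$ of $SU(5)$ in the Georgi--Glashow setup.
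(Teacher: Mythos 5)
Your computation is correct and, underneath the notation, it is essentially the computation the paper itself performs: the paper factors out one imaginary unit, writing $\psi = \psi' k$ and using $k\lambda = \lambda^* k$ to turn the conjugation action $\psi \mapsto U\psi U^\dag$ into $\psi' \mapsto U\psi' U^T$, which is the same manipulation as your observation that $U$ commutes with $i$ while $Uj = j\bar U$ and $Uk = k\bar U$. The paper then writes $\psi = \psi_0 + \psi_1 k + \psi_2 i + \psi_3 j$ with $\psi_{1,2,3}$ complex with respect to $i,j,k$ respectively, which is a (redundant) repackaging of your basis decomposition $W = W_0 + iW_1 + jW_2 + kW_3$. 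Your dimension count $m(2m+1) = m(m-1)/2 + 3\,m(m+1)/2$ is an addition the paper does not make, and it turns out to be the decisive check.

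The \lq\lq obstacle'' you flagged is not an artifact of your route; it is a genuine defect of the statement and of the paper's own proof. Skew-hermiticity of $W$ forces $W_1,W_2,W_3$ to be real \emph{symmetric}, so each of the three quaternionic pieces is a complex symmetric matrix, i.e.\ it lies in (the realification of) $\mathrm{Sym}^2\mathbf{C}^m$, not in $\Lambda^2\mathbf{C}^m$. The paper's only gesture toward antisymmetry is the sentence \lq\lq Sending $\psi$ in $\psi^*$ we bring $\psi'$ to $-\psi'$ and so we satisfy the skew symmetry'', which concerns behaviour under quaternionic conjugation and says nothing about the transpose; it does not establish $\psi'^T=-\psi'$, and in fact $\psi'^T=+\psi'$. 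You should therefore drop both of your proposed repairs: no \lq\lq pseudo-real twisting'' or $\gamma$-matrix re-dressing can exchange $\mathrm{Sym}^2$ and $\Lambda^2$, since they have different dimensions ($m(m+1)/2$ versus $m(m-1)/2$), and your own count already shows that \lq\lq one real skew-symmetric piece plus three copies of $\Lambda^2$'' has total dimension $2m(m-1)\neq m(2m+1)$. The statement your argument actually proves replaces \lq\lq skew symmetric'' by \lq\lq symmetric'' for the three quaternionic pieces (only the real $\psi_0$ piece is genuinely skew-symmetric); the resulting mismatch with the $10=\Lambda^2\mathbf{5}$ of the Georgi--Glashow application is a problem with the paper downstream of this theorem, not something a better proof of the theorem can repair.
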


\begin{proof}
Consider a fermionic matrix $\psi$ which transforms in the adjoint representation
of $U(m,\mathbf{H})$:

$$\psi \ra U\psi U^\dag $$
Take then a matrix $\psi'$ with $\psi' k =\psi$. Its transformation law
under $U(m) = U(m,\mathbf{C})$ is easily derived when this group is
constructed by using imaginary unit $i$ or $j$:

$$\psi' k \ra U \psi' k U^\dag = U \psi' U^T k .$$
Here we have used the relation $k \lambda = \lambda^* k$ for $\lambda
\in \mathbf{H}$ without $k$ component. We see that $\psi'$ transforms
in the skew symmetric representation:

$$\psi' \ra U \psi' U^T$$
We obtain a complex matrix $\psi'$ (with $i$ as imaginary unit) when
$\psi$ has the form $Ak+Bj$ with $A,B$ real matrices. Indeed:

$$\psi' = - \psi k = -Akk-Bjk = A - Bi$$
Sending $\psi$ in $\psi^*$ we bring $\psi'$ to $-\psi'$ and so we satisfy the skew symmetry.
Finally we can always write

$$\psi = \psi_0 + \psi_1 k + \psi_2 i + \psi_3 j$$
In this decomposition $\psi_1, \psi_2, \psi_3$ are complex matrices with complex unit
respectively $i, j, k$. Explicitly:

\ba \psi_1 &=& \phi_1 - i\xi_1 \qquad = \phi_1^1 - i\xi_1^1 + I(\phi_1^2 - i\xi_1^2)\nonumber \\
\psi_2 &=& \phi_2 - j\xi_2  \qquad = \phi_2^1 - j\xi_2^1 + I(\phi_2^2 - j\xi_2^2) \nonumber \\
\psi_3 &=& \phi_3 - k\xi_3  \qquad = \phi_3^1 - k\xi_3^1 + I(\phi_3^2 - k\xi_3^2).\nonumber \ea

\noindent Here all $\phi^1$, $\phi^2$ and $\xi^1$, $\xi^2$ are real fields.
In this way $\psi_{1,2,3}$ transform in the skew symmetric representation
of $U(m)$ when we construct this group by using the correspondent imaginary
unit ($i$ for $\psi_1$, $j$ for $\psi_2$ and $k$ for $\psi_3$). Hence they
define the famous three fermionic families, relate each other by $U(1,\mathbf{H})$
transformations. Moreover $\psi_0$ is a real skew symmetric field.
\end{proof}

$\pt$

\noindent Consider the following lagrangian

\ba tr(\psi^{\dag} \na \psi) &=& tr(k^* \psi^{\dag}_1 \na \psi_1 k) + tr(i^* \psi^{\dag}_2 \na \psi_2 i) +tr(j^* \psi^{\dag}_3 \na \psi_3 j) \nonumber \\
&& - tr(i^* \phi_2^\dag \na \xi_3 i) - tr(j^* \phi_3^\dag \na \xi_1 j) - tr(k^* \phi_1^\dag \na \xi_2 k) \nonumber \\
&&- tr(\psi_0^\dag \na \psi_0) \nonumber \ea
\ba &=& tr(\psi^{\dag}_1 \na \psi_1 kk^*) + tr(\psi^{\dag}_2 \na \psi_2 ii^*) +tr(\psi^{\dag}_3 \na \psi_3 jj^*) \nonumber \\
&& - tr(\phi_2^\dag \na \xi_3 ii^*) - tr(\phi_3^\dag \na \xi_1 jj^*) - tr(\phi_1^\dag \na \xi_2 kk^*) \nonumber \\
&& - tr(\psi_0^\dag \na \psi_0) \nonumber \\
&=& tr(\psi^{\dag}_1 \na \psi_1) + tr(\psi^{\dag}_2 \na \psi_2) +tr(\psi^{\dag}_3 \na \psi_3) \nonumber \\
&& - tr(\phi_2^\dag \na \xi_3) - tr(\phi_3^\dag \na \xi_1) - tr(\phi_1^\dag \na \xi_2) \nonumber \\
&&- tr(\psi_0^\dag \na \psi_0) \ea

\noindent In the third last line we have the fermionic terms in Georgi-Glashow model for three
families of fields in representation $10$. In this way we can use the lagrangian $tr(\psi^{\dag} \na \psi)$,
with $\psi$ in the adjoint representation, in place of Georgi-Glashow terms with $\psi_{1,2,3}$ in
the skew symmetric representation. Mixed terms in the second last line give a reason to CKM and PMNS matrices
which appear in standard model. Consider now the equivalence

$$tr(\psi^{\dag} \psi \na ) = tr( \psi \na \psi^{\dag}) =
tr( (-\psi^{\dag}) \na (-\psi) )= tr(\psi^{\dag} \na \psi).$$

\noindent Hence

\be tr(\psi^{\dag} \na \psi ) = \fr 12 tr(\psi^{\dag} \{\na, \psi \}).\label{anticom}\ee

\noindent In this formalism, given $\w \in su(3)\otimes su(2) \otimes u(1)$, the transformation $\d\psi = [\w,\psi]$
corresponds to the usual transformation $\d\psi = \w\psi$ in the standard model formalism. We see that the
only fields which transform correctly under $SO(1,3)$ are $e$, $\nu$ and $d^c$. For now we do not care.

We note rather that, when we restrict the elements of $\w$ from the hyperions to the complex numbers, we
have $3$ possibilities to do it. A complex number is not only in the form $a+ib$, with $a,b \in R$, but also
$a+jb$ and $a+kb$. The same is true for a fixed linear combination $a+(ci+dj+fk)b$, where $c,d,f \in R$ and
$c^2 + d^2 + f^2 =1$. The choice of $j$ in place of $i$ determines another set of ($\w,\psi)$ isomorphic to
the first one. In the same way we obtain a third set choosing $k$. Note that for a $i$-complex left field we
have an $Ii$-complex right field and so on for $j$ and $k$.

The three sets are related by the group $SU(2)$ which rotates an unitary vector in $R^3$ with coordinates
$(c,d,f)$. Its generators are

$$\w = \fr {\vec{y}}{6} \left( \begin{array}[c]{cccccc}
 1         & 0          & 0          & 0          & 0          & 0          \\
 0         & 1          & 0          & 0          & 0          & 0          \\
 0         & 0          & 1          & 0          & 0          & 0          \\
 0         & 0          & 0          & 1          & 0          & 0          \\
 0         & 0          & 0          & 0          & 1          & 0          \\
 0         & 0          & 0          & 0          & 0          & 1          \\
\end{array} \right) .$$

Their diagonal form suggests an identification between this group and the gravitational
group $SU(2)^{\subset SO(1,3)}$. If the two groups coincided, all fields would transform
correctly under $SU(2)^{\subset SO(1,3)}$. By extending this group to the entire $SO(1,3)$,
we see that boosts exchange left fields with right fields.

Note that three families have to exist also for bosonic particles (photon, $W^\pm$, $Z$, gluons)
although they are probably in\-di\-stin\-gui\-sha\-ble. Other interesting thing is that we have no warranty
for the persistence of $Sp(12,\mathbf{C})$ in the entire universe. However we have surely at least
the symmetry $U(1,\mathbf{Y}) = SO(1,3)$, which implies the secure existence of gravity.

\subsection{Fermions from an extended arrangement matrix}
We introduce the following entities:
\begin{itemize}
{\setlength\itemindent{-5mm} \item I-complex grassmannian coordinates $\theta = \theta^1 + I\theta^2$ and $\bar{\theta} = \theta^1 - I\theta^2$;}
{\setlength\itemindent{-5mm} \item Grassmannian derivatives $\pa_g$ and $\bar{\pa}_g$, with $\pa_g \theta = \bar{\pa}_g \bar{\theta} =1$ and $\pa_g \bar{\theta} = \bar{\pa}_g \theta =0$;}
{\setlength\itemindent{-5mm} \item Grassmannian covariant derivatives $\na_g = \pa_g +\psi$ and $\bar{\na}_g^\dag = \bar{\pa}_g+ \bar{\psi}^\dag$.}
\end{itemize}
The fundamental products return
$$ \theta\theta = \theta^1 \theta^1 + \theta^1 I \theta^2 + I \theta^2 \theta^1 - \theta^2\theta^2 = 0 + I \theta^1 \theta^2 - I \theta^1 \theta^2 - 0 = 0 \nonumber $$
\ba \bar{\theta}\bar{\theta} &=& \theta^1 \theta^1 - \theta^1 I \theta^2 - I \theta^2 \theta^1 - \theta^2\theta^2 = 0 - I \theta^1 \theta^2 + I \theta^1 \theta^2 - 0 = 0 \nonumber \\
 \theta\bar{\theta} &=& \theta^1 \theta^1 - \theta^1 I \theta^2 + I \theta^2 \theta^1 + \theta^2\theta^2 = - I \theta^1 \theta^2 - I \theta^1 \theta^2 = -2 I \theta^1 \theta^2 \nonumber \ea

\noindent In the arrangement field formalism, covariant derivatives descend from a grassmanian matrix $M_g$ or $\bar{M}_g^\dag$.
We can consider a unique generalized matrix $M_T = M_g + M$ that, up to a generalized $U(n,\mathbf{Y})$, becomes

\ba M_T &=& \theta \na_g + d^\mu \na_\mu = \theta \pa_g + \theta \psi + d^\mu \na_\mu \nonumber \\
    \bar{M}^\dag_T &=& \bar{\na}^\dag_g \bar{\theta} + \bar{\na}^\dag_\mu \bar{d}^{\dag\mu} = \bar{\pa}_g \bar{\theta} + \bar{\psi}^\dag \bar{\theta}  + \bar{\na}^\dag_\mu  \bar{d}^{\dag\mu}.\label{deriv} \ea

\noindent Expanding $tr\,(\bar{M}^\dag_T M_T)$ we obtain

\ba tr\,(\bar{M}^\dag_T M_T) &=& tr\,\left(d^\nu \bar{d}^{\dag\mu} \bar{\na}^\dag_\mu \na_\nu  \right) = \sum_a \sqrt h R(x^a) \label{ricci2}.\ea
To calculate $tr\,(\bar{M}^\dag_T M_T\bar{M}^\dag_T M_T)$ we write first $\bar{M}_T^{\dag 2}$ and $M_T^{2}$.

\ba M^2_T &=& \theta \pa_g + \theta \psi + \theta d^\mu \{ \na_\mu,  \psi\} + d^\mu \na_\mu d^\nu \na_\nu \nonumber \\
    \bar{M}^{\dag 2}_T &=& \bar{\pa}_g \bar{\theta} + \bar{\psi}^\dag  \bar{\theta} + \{\bar{\psi}^\dag, \bar{\na}^\dag_\a \} \bar{d}^{\dag \a} \bar{\theta} + \bar{\na}^\dag_\a \bar{d}^{\dag \a}\bar{\na}^\dag_\b \bar{d}^{\dag \b} \ea
If $M$ has the form (\ref{deriv}), then $[M_T,\bar{M}^\dag_T] = 0$. This implies

$$tr\,(\bar{M}^\dag_T M_T\bar{M}^\dag_T M_T) = tr\,(M_T^2 \bar{M}_T^{\dag 2}).$$

\noindent We calculate its value starting from the following product

\ba tr\,(\theta d^\mu \{\na_\mu, \psi \}\{\bar{\psi}^\dag , \bar{\na}^\dag_\a \}\bar{d}^{\dag \a}\bar{\theta} ) &=& tr\,(\theta \bar{\theta} d^\mu \{\na_\mu, \psi \}\{\bar{\psi}^\dag , \bar{\na}^\dag_\a \}\bar{d}^{\dag \a}).\nonumber \\ && \ea

\noindent Remember that operator $tr$ acts as a sum over vertices. Now every vertex is labeled by a couple $(\theta, x_i)$ and then

$$ tr\, (\theta \bar{\theta} (***)) = \left( \int d\bar{\theta} d\theta\,\theta \bar{\theta} \right)tr\,(***) = tr\,(***)$$
Hence

\ba tr\,(\theta d^\mu \{\na_\mu, \psi \}\{\bar{\psi}^\dag , \bar{\na}^\dag_\a \}\bar{d}^{\dag \a}\bar{\theta}) &=& tr\,(d^\mu \{\na_\mu, \psi \}\{\bar{\psi}^\dag , \bar{\na}^\dag_\a \}\bar{d}^{\dag \a}) \nonumber \\
       &=& tr\,(\bar{d}^{\dag \a} d^\mu [\na_\mu, \bar{\na}^\dag_\a] \psi \bar{\psi}^\dag)  \nonumber\\
       &=& \sum_a \sqrt h R(x^a) \bar{\psi}^\dag \psi\ea

\noindent In this way

\ba tr\,(\bar{M}^\dag_T M_T\bar{M}^\dag_T M_T) &=& tr\, \left(\bar{\psi}^\dag d^{\mu} \{\na_\mu, \psi\} + \{\bar{\psi}^\dag, \bar{\na}^\dag_\a\} \bar{d}^{\dag \a}\psi \right) + \nonumber \\
&& + \sum_a \sqrt h R(x^a) \bar{\psi}^\dag \psi + S_{GB} \label{ora}\ea

\noindent We have seen that every family distinguishes itself by the choice of complex unity.
Inserting in $\psi$ the definitions of $\psi_{1,2,3}$ we can write

\ba \psi &=& \psi_0^1 + i(\phi_2^1 + \xi_3^1) + j(\phi_3^1 + \xi_1^1) +k(\phi_1^1 + \xi_2^1) + \nonumber \\
        && + I\psi_0^2 + iI(\phi_2^2 + \xi_3^2) + jI(\phi_3^2 + \xi_1^2) + kI(\phi_1^2 + \xi_2^2) \nonumber \ea
Using the correspondences $(1,I,i,j,k,iI,jI,kI) \leftrightarrow \g\g$ and $4 \leftrightarrow tr$, the first term in (\ref{ora}) becomes

$$ 2 \times \fr 14 \times tr\, \left( {\psi}^{lm} \overline{(\g_l\g_m)^\dag} \left(
        \g_0 \g_s e^{\mu s} \overset{G}{\na}_\mu \psi^{np} (\g_n \g_p) + A_\mu \psi \right)\right) $$

\be \downdownarrows === \downdownarrows \ee

$$ \fr 12 \, tr\, \left( \psi^{lm} (\g_m\g_l) \left(
        \g_0 \g_s e^{\mu s} \overset{G}{\na}_\mu, \psi^{np} (\g_n \g_p) + A_\mu \psi_0 + \sum_{q,q'=1}^3  A^q_\mu \psi_{q'} i_{q'} \right)\right) $$

\noindent Here we have deleted the anticommutator by means of (\ref{anticom}).
In the covariant derivative we have included only the gravitational (track) contribution,
while $A_\mu$ is intended to have null track. Moreover $i_1 =k$, $i_2 =i$ and $i_3 =j$.

In the second line we have divided the $75$ generators $A_\mu$ in three families of $35$ generators.
Obviously, only two families are linearly independent. When they act on spinorial fields which belong
to their own family, they behave exactly as the $35$ generators of $SU(6)$ (which comprise the $24$
generators of $SU(5)$). Conversely, when a generator $A^q$ acts on a $q'$-field (with $q \neq q'$), it
mimics the application of some generator $A^{q'}$ followed by a rotation in $SU(2)_{GRAVITY}$ which
sends the family $q'$ in the remaining family $q''$.

We explicit now one entry of $\psi = \psi^1 + I\psi^2$ by exploiting the correspondence with $\g\g$.
We have

$$ \psi = \left( \begin{array}[c]{cc}
\psi_0^1 + i(\phi^1_2+\xi_3^1)&(\phi_3^1+\xi_1^1) +i(\phi^1_1 +\xi^1_2) \\
-(\phi_3^1+\xi_1^1)+i(\phi^1_1 +\xi^1_2)&\psi_0^1-i(\phi^1_2+\xi_3^1) \\
i\psi_0^2-(\phi^2_2+\xi_3^2)&i(\phi_3^2+\xi_1^2)+(\phi^2_1 +\xi^2_2) \\
-i(\phi_3^2+\xi_1^2)+(\phi^2_1 +\xi^2_2)& i\psi_0^2+(\phi^2_2+\xi_3^2) \\
\end{array}\right. $$
$$ \left. \begin{array}[c]{cc}
i\psi_0^2-(\phi^2_2+\xi_3^2) & i(\phi_3^2+\xi_1^2)+(\phi^2_1 +\xi^2_2)   \\
-i(\phi_3^2+\xi_1^2)+(\phi^2_1 +\xi^2_2) &i\psi_0^2+(\phi^2_2+\xi_3^2)    \\
\psi_0^1+i(\phi^1_2+\xi_3^1) &(\phi_3^1+\xi_1^1)+i(\phi^1_1 +\xi^1_2)     \\
-(\phi_3^1+\xi_1^1)+i(\phi^1_1 +\xi^1_2) &\psi_0^1-i(\phi^1_2+\xi_3^1)    \\
\end{array}\right) $$

\noindent If we define the four components spinor

$$ \hat{\psi} = \left(\begin{array}[c]{c} \psi_0^1 + i(\phi^1_2+\xi_3^1) \\
        -(\phi_3^1+\xi_1^1)+i(\phi^1_1 +\xi^1_2) \\ i\psi_0^2-(\phi^2_2+\xi_3^2) \\
        -i(\phi_3^2+\xi_1^2)+(\phi^2_1 +\xi^2_2) \end{array} \right) $$

\noindent the derivative term can be rewritten as

\be 2 \times \hat{\psi}^\dag \, \g_0 \g_s e^{\mu s} \overset{G}{\na}_\mu \hat{\psi} \ee
This is the Dirac action, although with a new interpretation of spinorial components.
Moreover

$$\psi^{AB} = W^{ABC} \hat{\psi}^C \qquad ; \qquad W^{ABC} W^{DBC} = \mathbf{1}_{AD}$$

\LARGE
$$ \left(W^{ABC}\right) = $$
$$\pt$$
\normalsize
$$ \left( \left( \begin{array}[c]{cccc}
1 & 0 & 0 & 0  \\
0 & 1 & 0 & 0  \\
0 & 0 & 1 & 0  \\
0 & 0 & 0 & 1  \\
\end{array}\right),
\left( \begin{array}[c]{cccc}
0 & -\ast &  0 & 0  \\
\ast &  0 &  0 & 0  \\
0 &  0 &  0 & \ast  \\
0 &  0 & -\ast & 0  \\
\end{array}\right),
\left( \begin{array}[c]{cccc}
0 & 0 & 1 & 0  \\
0 & 0 & 0 & 1  \\
1 & 0 & 0 & 0  \\
0 & 1 & 0 & 0  \\
\end{array}\right),
\left( \begin{array}[c]{cccc}
0 & 0 & 0 & \ast  \\
0 & 0 & -\ast & 0  \\
0 & -\ast & 0 & 0  \\
\ast & 0 & 0 & 0  \\
\end{array}\right)
\right) $$
$$ \pt $$
\be \pt \ee
\large

\noindent with $\ast \hat{\psi} = \hat{\psi}^\ast$. Adding the other terms

$$ tr\,(\hat{M}^\dag \hat{M} \hat{M}^\dag \hat{M}) = S_{GB} +$$
$$ + 2 \int \left( \hat{\psi}^\dag\, \g_0 \g_s e^{\mu s} \overset{G}{\na}_\mu \hat{\psi}
 + \hat{\psi} \sum_{q,q'} A^q_\mu \hat{\psi}_{q'} i_{q'} +
 \sqrt h R(x)\sum_q \hat{\psi}_q^\dag \hat{\psi}_q \right) dx  $$

In this way we include all the contents of standard model as elements in the
generalized $Sp(12,\mathbf{C})$ algebra. Terms which mix fa\-mi\-lies can be used to calculate values
in CKM and PMNS matrices. Masses for fermionic fields arise, as usual, from non null expectation
values of $A_\mu(x^a_i, x^b_j)$ with $a\neq b$ in $\na_\mu$.

We obtain a contribute to Hilbert-Einstein action also from term $\int d^4x \sqrt h R \bar{\psi} \psi$,
due to a non null expectation value of $\sum_q  \bar{\psi}_q \psi_q $. It contains in fact the chiral
condensate, whose non null vacuum value breaks the chiral flavour symmetry of QCD Lagrangian.

Note that known fermionic fields fill up a matrix $\psi$ with null track. However, only if $tr\,\psi \neq 0$
our action has an extra invariance under

\ba A_\mu &\ra& d_\mu^{-1} \theta \psi \nonumber \\
    \psi &\ra& \overleftarrow{\pa}_g d^\mu A_\mu .\label{super}\ea
Here $\overleftarrow{\pa}_g$ is a $\pa_g$ which acts backwards.
This means we have the same number of fermions and bosons, so that the vacuum energies erase each other.

Invariance (\ref{super}) predicts the existence of a new colored fermionic sextuplet which sits
on diagonal in $\psi$. Inside it we can include a conjugate neutrino ($\nu^c$), a sterile neutrino
($N$) and a conjugate sterile neutrino ($N^c$). Explicitly

$$\psi = \left( \begin{array}[c]{cccccc}
N & 0 & 0 & 0 & 0 & 0 \\
0 & \nu^c & 0 & 0 & 0 & 0 \\
0 & 0 & \nu^c & 0 & 0 & 0 \\
0 & 0 & 0 & N^c & 0 & 0 \\
0 & 0 & 0 & 0 & N^c & 0 \\
0 & 0 & 0 & 0 & 0 & N^c \end{array}
\right).$$

\noindent This field commutes with any gauge field in $U(1) \otimes SU(2) \otimes SU(3)$
and so it hasn't electromagnetic, weak or strong interactions. Moreover it gives a Dirac mass to
neutrinos via the term

$$tr\,(\bar{\psi}^\dag d^\mu A_\mu \psi) = \bar{\psi}^{\dag ij} d^\mu A_\mu^{kl} \psi^{mn} f^{(ij)(kl)(mn)}.$$
Here $f^{(ij)(kl)(mn)}$ are structure constants for $SU(6)$ and masses for neutrinos are eigenvalues of $<d^\mu A_\mu >$.

\subsection{The vector superfield}

The invariance (\ref{super}) suggests a connection with super-symmetric theories.
We redefine the supersymmetry algebra as follows:

\ba Q = \pa_g - d^\mu \bar{\theta} \widetilde{\na}_\mu \qquad\quad\!\! &;&\quad\qquad\!\! [Q, \widetilde{\na}_\nu] = - d^\mu \bar{\theta} \widetilde{R}_{\mu\nu} \nonumber \\
\bar{Q} = \bar{\pa}_g - \theta \bar{d}^{\dag \nu} \widetilde{\na}_\nu \qquad\quad\!\! &;&\quad\qquad\!\!
\{Q, \bar Q \} = -2d_H^\nu \widetilde{\na}_\nu - {\Sigma}^{\mu\nu} \widetilde{R}_{\mu\nu}\nonumber \\
2 d_H^\nu &=& d^\nu + \bar{d}^{\dag\nu} \ea

\noindent Here $\widetilde{\na}$ is a compatible covariant derivative which acts as a skew-adjoint operator.
It is a functional of $d^\mu$ with $[\widetilde{\na}_\nu, d^\mu] = [\widetilde{\na}_\nu, \bar{d}^{\dag\mu}] = 0$.
Note that off-shell we have $\widetilde{\na}_\nu \neq \overset{G}{\na}_\nu$. Moreover

$${\Sigma}^{\mu\nu} = d^\mu \bar{d}^{\dag\nu} \theta \bar{\theta} $$
$\widetilde{R}_{\mu\nu}$ is the curvature tensor made with $\widetilde{\na}$, ie $\widetilde{R}_{\mu\nu} = [\widetilde{\na}_\mu, \widetilde{\na}_\nu]$.
Consider that locally we can find a coordinate system where $\widetilde{R}_{\mu\nu} = 0$, recovering the usual SUSY algebra with $-i\widetilde{\na}_\mu$ in place of $P_\mu$.
The vector superfield assumes the simple form\footnote{As usually in this work, we absorb an $i$ in the fields to make them skew-hermitian.}

\ba V &=& \theta \psi + d^\mu A_\mu \nonumber \\
 V_{AB} &=& \theta_{AD} W^{D}_{\pt\pt BC} \hat{\psi}^C + (d^\mu  A_\mu)_{AB} \ea

\noindent with $\theta_{AD} = \mathbf{1}_{AD}\,\theta^1 + \g_{5\,AD} \,\theta^2$.
Note that $M_T = V+ \theta \pa_g + d^\mu \pa_\mu$ and then
the usual kinetic term for $V$ includes the same terms we have found in $tr\,(\bar{M}_T^\dag M_T) -
tr\,(\bar{M}^\dag_T M_T\bar{M}^\dag_T M_T)$. It's remarkable that all the known fermionic
fields take the role of gauginos for all the known bosonic fields. In this way the right up
quarks are gauginos for gluons, while right electrons are gauginos for $W$ bosons.
This is permitted because both fermions and bosons in AFT transform in the adjoint
representation of $Sp(12,\mathbf{C})$. In this way our theory includes SUSY $N = 1$ with no need
for new unknown particles.

\section{Inflation}

Our final action is
\label{inflation}
$$ S = tr\,\left(\fr {\bar{M}^\dag M }{16\pi G}- \bar{M}^\dag M \bar{M}^\dag M\right)$$
This is also an action for an $U(n,\mathbf{Y})$ gauge theory with coupling constant $1/G$ in a
mono-vertex space-time. In these theories the scaling of coupling constant can be calculated
exactly in the limit of large $n$. In several cases the coupling constant changes its sign for
big values of scale: this has considerable consequences for the first times after Big Bang, when
a measurement of $G$ has sense only at very high energies (very small distances). What said suggests
that such measurement can return a negative value of $G$, which implies a repulsive force of gravity.
In turn, repulsive gravity implies an accelerate expansion for the universe.

Because the entries of $M$ are probability amplitudes, we would be it was
dimensionless. However, when we pass from $M$ to $\na$, we need a scale
$\Delta$ to define the matrix $\pa$. This justify the inclusion of $\Delta^{-1}$
inside $M$. If we extract this factor, the Hilbert Einstein action becomes

$$\fr {\Delta^4}{16\pi G \Delta^2}tr\,(\bar{M}^\dag M) = \fr {\Delta^2}{16\pi G}tr\,(\bar{M}^\dag M)$$

\noindent where we have also added the correct volume form $\Delta^4$. This
seems a more natural formulation when $M$ represents probability amplitudes.
In this way we can take $\Delta$ very small but not zero. The most natural
choice is $\Delta^2 \sim G$.

In this case, what does it mean that $G$ is negative? Negative $G$ implies
negative $\Delta^2 = ds^2$. In lorentzian spaces $\Delta^2 = dt^2 -ds^2 <0$.
For purely temporal intervals we'll have $dt^2 < 0$, so the time becomes
imaginary. An imaginary time is indistinguishable from space. This hypothesis
of a \lq\lq spatial'' time had already been espoused by Hawking as a solution
for eliminate the singularity in the Big Bang \cite{Hawking}.

\section{Classical solutions}
\label{classical}
We rewrite our action in the form

$$S = \fr 12 tr\, (\bar{M}^\dag M) - \fr 1{4g} tr\,(\bar{M}^\dag M \bar{M}^\dag M)$$
where we have defined $g = \fr {\Delta^2} {32\pi G}$. We diagonalize $M$ with a
transformation in $U(n,\mathbf{Y})$ and define $M^{ii} \equiv \f(x_i)$, $\f(x)
= a(x)+\vec{b}(x)$. The lagrangian becomes:

$$L = \fr 12 \left[ a(x_i)^2 + |\vec{b}(x_i)|^2\right]  -\fr 1{4g} \left[ a(x_i)^4
+ |\vec{b}(x_i)|^2 + 2a(x_i)^2 |\vec{b}(x_i)|^2 \right]$$

\noindent The motion equations are

$$ g a(x) - a(x)^3 - a(x)|\vec{b}(x)|^2 = 0$$

$$ g \vec{b}(x) - \vec{b}(x)|\vec{b}(x)|^2 - a(x)^2\vec{b}(x) = 0$$

\noindent There are two solutions:

$$(1) \qquad a(x) = \vec{b}(x) = 0$$
$$(2) \qquad a(x)^2 + |\vec{b}(x)|^2 = \bar{M}^\dag M = g$$

\noindent The first one corresponds to the vacuum (all non-gravitational fields equal
to zero) plus a solution of Einstein equations in the vacuum:

$$\psi = A_\mu = 0 \qquad R(x) = 0$$

The solution $\bar{M}^\dag M = g$ corresponds to a vacuum expectation value for $\bar{M}^\dag M$ equal
to $g$. $M$ contains a factor $A$, so that an expectation value for $\bar{M}^\dag M$ corresponds
to an expectation value for $AA$. This means that

$$AAAA = <AA>AA + \text{quantum perturbations}$$

\noindent $<AA>$ gives a mass for $A$.

More precisely, for $A \in U(n,\mathbf{Y})/U(m,\mathbf{Y})^{n/m}$,

$$m_A^2 \sim \fr {<\bar{M}^\dag M>}{\Delta^2} = \fr g {\Delta^2} = \fr 1 {32\pi G}$$
So the fields $A \in U(n,\mathbf{Y})/U(m,\mathbf{Y})^{n/m}$ have a mass in the order of
Planck mass $m_P$. Moreover, in the primordial universe, when $k_B T \approx m_p$, all
the fields behave like null mass fields. In that time the symmetry was then $U(n,\mathbf{Y})$
and no arrangement exists. Our conclusion is that Quantum Gravity cannot be treated as a
quantum field theory in an ordinary space. In what follows we explain how overcome this
trouble.

\section{Quantum theory}
\label{quantize}
\begin{spacing}{1}
Quantum theory is defined via the following path integral:

\ba && \!\!\!\!\!\!\!\!\int D[M(x,y)]D[\bar M^* (x,y)] \nonumber \\
&& \qquad\qquad\quad \!\! Oe^{\int M(x,y)\bar{M}^*(x,y) dx dy - \int M(x,y)\bar M^*(x,y')M(x',y')\bar M^*(x',y) dx dy dx' dy'}\nonumber \ea
\begin{center}
with
\end{center}
\ba && \!\!\!\!\!\!\!\! Oe^{\int F(x,y)dx dy} = \nonumber \\
&& = 1+ \int F(x,y)dx dy +
\fr 12 \int F(x,y)F(x^1,y^1)dx dy dx^1 dy^1 + \nonumber \\
&&\qquad + \ldots + \fr 1 {n!} \int F(x,y)F(x^1,y^1)\ldots F(x^{n-1},y^{n-1})
dx dy dx^1 dy^1 \ldots \nonumber \\
&& \qquad\qquad\qquad\qquad\qquad\qquad\qquad\qquad \ldots dx^{n-1} dy^{n-1} \nonumber \ea
\end{spacing}

\ba Oe^{\int F(x,x',y,y')dx dx' dy dy'} &=& 1+ \int F(x,x',y,y')dx dy dx' dy'+ \nonumber \\
&&\!\!\!\!\!\!\!\!\!\!\!\!\!\!\!\!\!\!\!\!\!\!\!\!\!\!\!\!\!\!\!\!\!\!\!\!\!\!\!\!\!\!\!\!\!\!\!\!\!\!\!\!\!\! +
\fr 12 \int F(x,x',y,y')F(x^1,{x'}^1,{y'}^1,y^1)dx dy dx' dy' dx^1 dy^1 d{x'}^1 d{y'}^1 + \nonumber \\
&& \!\!\!\!\!\!\!\!\!\!\!\!\!\!\!\!\!\!\!\!\!\!\!\!\!\!\!\!\!\!\!\!\!\!\!\!\!\!\!\!\!\!\!\!\!\!\!
+ \ldots + \fr 1 {n!} \int F(x,x',y,y')F(x^1,{x'}^1,y^1,{y'}^1)\ldots \nonumber \\
&& \!\!\!\!\!\!\!\!\!\!\!\!\!\!\!\!\!\!\!\!\!\!\!\!\!\!\!\!\!\!\!\!\!\!\!\!\!\!
\ldots F(x^{n-1},{x'}^{n-1},y^{n-1},{y'}^{n-1}) dx dy dx' dy' dx^1 dy^1 d{x'}^1 d{y'}^1\ldots \nonumber \\
&& \qquad \qquad\qquad \quad\ldots dx^{n-1} dy^{n-1} d{x'}^{n-1} d{y'}^{n-1} \nonumber \\
&& \quad \overset{!}{=} \fr 1 {n!} F^n \ea
\begin{spacing}{2}
\noindent The integration of $F^n$ is very simple and gives

$$\fr 1 {n!} \int D^2[M] e^{\int M^2 dx dy} \,\,F^n = \fr {(4n)!}{n! 2^{2n}(2n)!} = \fr 1 {n!} P(4n)$$

\noindent Here $P(4n)$ gives the number of different ways to connect in couples $4n$ points.

It's clear that any universe configuration corresponds to an $F^k$ inside which
some connections have been fixed and the corresponding integrations have been
removed. For example:
\end{spacing}
\newpage
\begin{figure}[ptbh]
\centering\includegraphics[width=1\textwidth ]{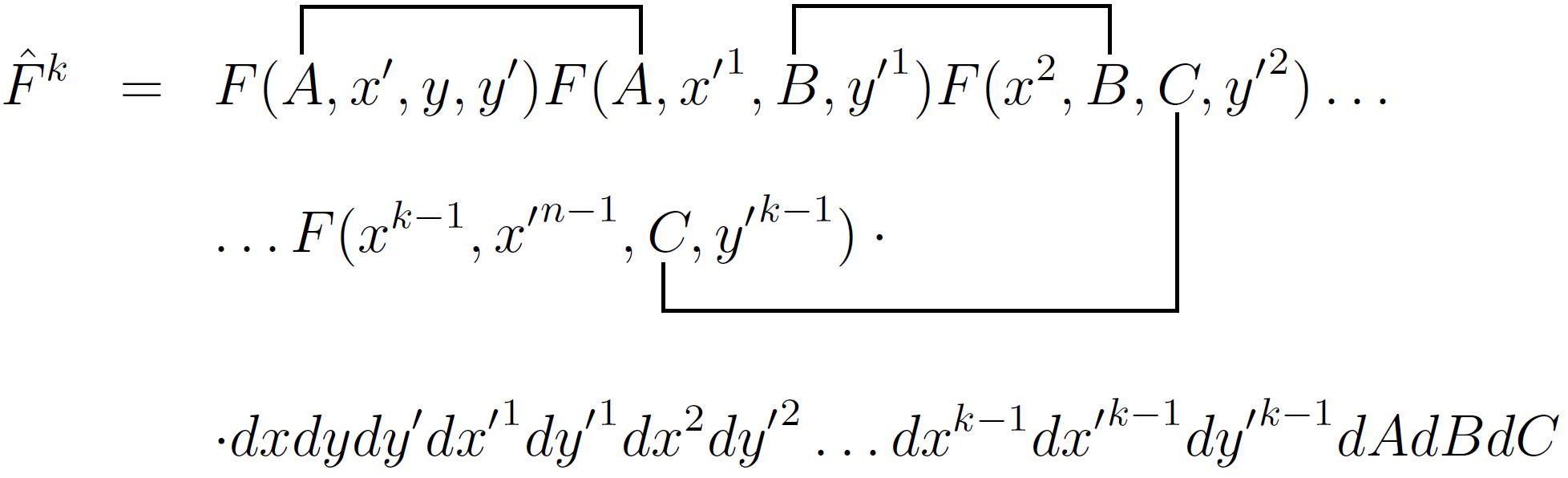}
\end{figure}

\noindent If the fixed connections are $m$, then

$$<\hat F^k > = \fr{\sum_n \fr 1 {n!} P(4(n+k)- 2m)}{\sum_n \fr 1 {n!} P(4n)}$$

\begin{notation}
\noindent At relatively low energies we can tract $\overset{G}{A}$ as
an ordinary gauge field. The arrangement field theory is then
approximated with a common quantum theory on a curved background,
determined by $e^{\mu a}$.
\end{notation}

\section{Quantum Entanglement and Dark Matter\label{entanglement}}

The elements of $M$ which do not reside in or near the diagonal, describe connections between
points that are not necessarily adjacent to each other, in the common sense.
These connections construct discontinuous paths as in figure \ref{cammini-entanglement}
and can be considered as quantum perturbations of the ordered space-time.

Such components permit us to describe the quantum entanglement effect, as it could be shown
in detail in a complete coverage that goes beyond the purpose of the present paper.

It is remarkable that in this framework the discontinuity of paths is only apparent, and it is a
consequence of imposing an arrangement to the space-time points. These discontinuous paths
can be con\-si\-de\-red as continuous paths which cross wormholes. The trait of path inside a wormhole
is described by a component of $M$ far away from diagonal. The information seems to travel
faster than light, but in reality it only takes a byway.

\begin{figure}[ptbh]
\centering\includegraphics[width=0.6\textwidth ]{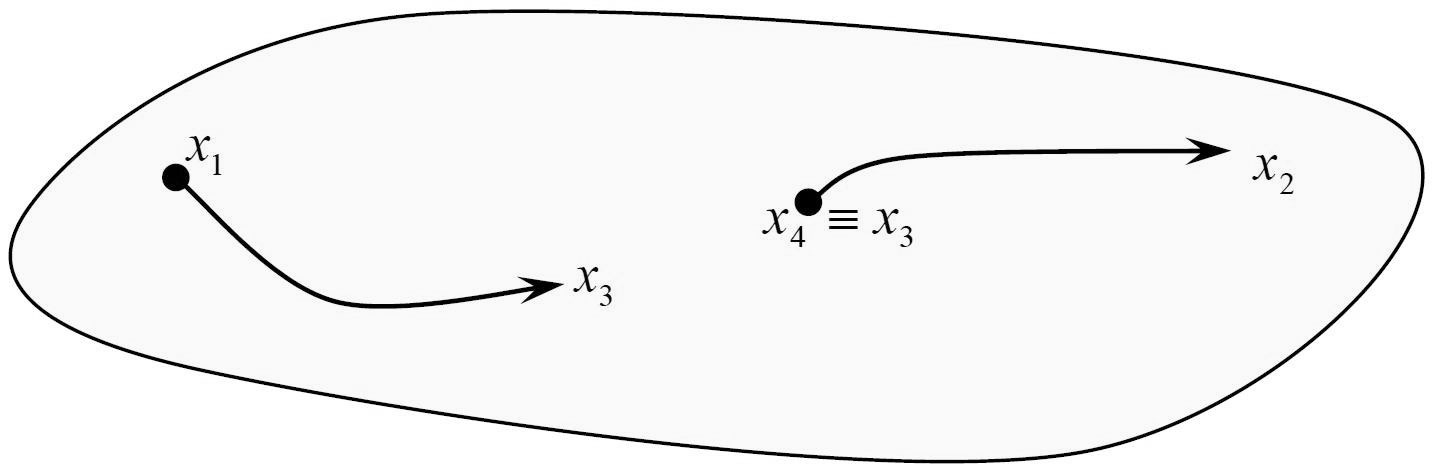}\caption{Discontinuous paths.
The connection between $x_3$ and $x_4$ is done by a component of $M$ far away from diagonal.}
\label{cammini-entanglement}
\end{figure}

Imagine now a gravitational source with mass $M_S$ which emits some gravitons
with energy $\sim E_{PLANCK}$, directed to an orbiting body with mass $M_B$ at distance $r$.
In this case (respect such gravitons) the fields $M(x^a, x^b)$ with $a \neq b$ would behave
as they had null mass. This implies the probable existence of connections (practicable by such
gravitons) between every couple of vertices in the path from the source to the orbiting body.
This means that if $r = \Delta j$, $j \in \mathbf{N}$, the graviton could reach the
orbiting body by traveling a shorter path $\Delta j'$, $j>j' \in \mathbf{N}$.
The question is: what is the average gravitational force perceived by the orbiting body?

The probability for a graviton to reach a distance $r$ passing through $m$ vertices is

$$P_m = (1-a)^{m-1}a \qquad with\,\,\,\sum_{m=1}^\infty P_m = 1$$
where $a =  1/j$.
These are the probabilities for extracting one determined object
in a box with $j$ objects at the $m$-th attempt.
In this way the effective length traveled by the graviton will be $\Delta m$.

We use these probabilities to compute the average gravitational force in a semiclassical approximation.

\ba <F> &=& G\fr{M_{B}M_{S}}{\Delta^2} \fr a {1-a} \int_1^\infty \fr {(1-a)^m}{m^2} dm\nonumber \\
        &=& G\fr{M_{B}M_{S}}{\Delta^2} \fr a {1-a} [log(1-a)] \int_{log(1-a)}^{-\infty} \fr {e^x}{x^2} dx \nonumber \\
        && \label{dark}\ea
The last integral gives

$$\int_{log(1-a)}^{-\infty} \fr{e^x}{x^2}dx = -Ei(log(1-a)) + \fr {1-a}{log(1-a)}$$

We expand $\langle F \rangle$ near $a=0$ (which implies $j >> 1$), obtaining

$$\fr a {(1-a)} [log(1-a)]\int_{log(1-a)}^{-\infty} \fr{e^x}{x^2}dx \approx a + a^2(log(a) + \g) + O(a^3).$$

Here $\g$ is the Eulero-Mascheroni constant. The dominant contribution is then

\ba <F> &\approx& G\fr{M_{B}M_{S}}{\Delta^2} \cdot a \cdot (1+a\,log(a) + a\g) \nonumber \\
     &\approx& \fr{G}{\Delta}\fr{M_{B}M_{S}}{r}\left( 1 - \fr \Delta r \left( log\left( \fr r \Delta \right) - \g \right) \right)\ea

If the massive object orbits at a fix distance $r$, its centrifugal force has to be equal to the
gravitational force. This gives

$$<F> \approx \fr{G}\Delta\fr{M_{B}M_{S}}{r}\left( 1 - \fr \Delta r \left( log\left( \fr r \Delta \right) - \g \right) \right) = \fr {M_B v^2} r$$

$$v^2 = \fr{G M_{B}M_{S}}{\Delta}\left( 1 - \fr \Delta r \left( log\left( \fr r \Delta \right) - \g \right) \right)$$
We see that, varying the radius, the velocity remains more or less constant (It increases slightly with $r$).
Can this explain the rotation curves of galaxies without introducing dark matter?

Surely not all gravitons have energy $> E_{PLANCK}$; at the same time we have to consider that
$G$ scales for small distances (hence for small $m$ in (\ref{dark})).
It's possible that these factors reduces the extremely high value of $r/\Delta$.

\section{Conclusion}

In this work we have applied the framework developed in \cite{Arrangement} to
describe the contents of our universe, ie forces and matter.

Doing this, we have discovered an unexpected road toward unification, which shows
similarities with Loop Gravity, String Theory and Georgi - Glashow model.
For the first time a natural symmetry justifies the existence of three particles
families, not one more, not one less. Moreover a new version of supersymmetry
seems to couple gauge fields with all known fermions, without necessity of
imagining new particles never seen by experiments.

Clearly this fact closes the door to dark matter. To compensate this big absence,
AFT proposes an explanation to galaxy rotation curves which doesn't make use of
dark matter.

Another considerable implication of AFT regards tangent space, which has symmetry $SO(1,3)$
only when gravity decouples from other forces. At that point also the real space-time
can obtain the same symmetry. This fact is coherent with \emph{no-go theorem} of Coleman-Mandula\cite{nogo},
under which \lq\lq $S$-matrix is Lorentz invariant if and only if the action symmetry is
$SO(1,3) \otimes internal\,\,symmetries$''.

We don't say that this theory is exact. However there are several good signals
which must be taken into account. We hope that a future teamwork can verify
this theory in detail, deepening all its implications.

\chapter{Antigravity in AFT}

\section{Introduction}
Arrangement field theory is a quantum theory defined by means of probabilistic
spin-networks. These are spin-networks where the existence of an edges is
regulated by a quantum amplitude. AFT is a proposal for an unifying
theory which joins gravity with gauge fields. See \cite{Arrangement} and \cite{Arrangement2}
for details. The unifying group is $Sp(12,\mathbf{C})$ for the lorentzian theory
and its compact real form $Sp(6)$ for the euclidean theory.
The unifying group contains three indistinguishable copies of gauge fields,
mixed by gravitational field. Moreover, commutators between gravitational
and gauge fields are non null and give new terms for the Einstein equations.
In what follows we focus on the term which mixes gravity with electromagnetism,
showing that its contribution to Einstein equations could generate antigravity.
In the end we verify that new interactions don't affect the making of nucleus
and nucleons.

\section{Antigravity}
\label{formalism2}
\begin{spacing}{1.5}
The term which mixes gravity with electromagnetism is given by space-time
integration of the following expression:
\end{spacing}
\normalsize
$$ -\fr 14 f^{(G)(EM1)(EM2)} A^{(G)}_\mu A^{(EM1)}_\nu \bigg(
F^{(EM2)\mu\nu} + \a f^{(EM3)(EM1)(EM2)} A^{(EM3)\mu} A^{(EM1)\nu}
\bigg) $$
\large
\be \pt \label{prima}\ee
Remember that AFT includes three indistinguishable e\-lec\-tro-ma\-gne\-tic
fields, with non-trivial commutators. In this way $A^{(G)}$ is the
gravitational gauge field, $A^{(EMn)}$ is the n-th electromagnetic
field and $\a$ is the fine structure constant. In the realistic case of
null torsion, the gravitational gauge field can be rewritten in
function of the tetrad field:

$$A_\mu^{(G)bc} = \fr 12 e^{\nu [b} \pa_{[\mu} e^{c]}_{\nu]} + \fr 14
e_{\mu d} e^{\nu b} e^{\sigma c} \pa_{[\sigma} e^d_{\nu]}$$

\noindent From now we take a low energy limit so defined: $e_{ii} = 1$ with $i=1,2,3$,
$e_{00} = \theta(x)$ and $\pa_0 \theta(x) =0$. Varying with respect to $e$
we obtain:

$$\fr {\d A_\mu^{(G)bc}}{\d e^s_\tau} = \fr 12 e^{\nu [b} \d^{c]}_s
\d^\tau_{[\nu} \pa_{\mu]} + \fr 14 e_{\mu s} e^{\nu b} e^{\sigma c}
\d^\tau_{[\nu} \pa_{\sigma]}$$

$$\fr {\d A_\mu^{(G)bc}}{\d g_{\w\tau}} = 2e^{\w s} \fr {\d A_\mu^{(G)bc}}{\d e^s_\tau}
= e^{\w [c} e^{b] \nu} \d^\tau_{[\nu} \pa_{\mu]} + \fr 12 \d^\w_{\mu} e^{\nu b} e^{\sigma c}
\d^\tau_{[\nu} \pa_{\sigma]}$$
The component with $c = \w =\tau = 0$ and $b \neq 0$ results:

$$\fr {\d A_\mu^{(G)b0}}{\d g_{00}} = -\theta^{-1} \d^0_\mu \pa_b
- \fr 12 \theta^{-1} \d^0_\mu \pa_b = -\fr 3{2\theta} \d^0_\mu \pa_b$$

$$A^{(EM)\rho}A^{(EM)}_\rho A^{(EM)\mu}\fr {\d A_\mu^{(G)b0}}{\d g_{00}} = \fr 3{2\theta} \pa_b A^{(EM)0} A^{(EM)\rho}A^{(EM)}_\rho$$
The minus sign has disappeared because we have reversed the derivative.
The variation of quartic term in (\ref{prima}) with respect to $\d g_{00}$
is then given by

$$-\fr \a 4 \cdot \fr 3{2\theta} \pa_b f^b A^{(EM)0} A^{(EM)\rho}A^{(EM)}_\rho =
-\pa_b f^b \fr {3\a} {8\theta} V(\theta^2 V^2 - A^2)$$
$$f^b = \sum_{cade} f^{(bo)ca} f^{dea} \approx 4\fr {x^b}{r}.$$
Here we have indicated with $V$ the electric potential and with $A$ the
magnetic vector potential. The sum inside $f$ is over the three electromagnetic
fields.

It's so clear that varying the complete action with respect to $g_{\mu\nu}$
we obtain a new term for Einstein equations. In the Newtonian limit we can
substitute $g_{00} = -(1-2\phi)$ and $R_{00} - (1/2)Rg_{00} = \na^2 \phi$
where $\phi$ is the newtonian potential. Hence:

\ba 2 \na^2 \phi &\approx& 8\pi T^{00} = 8\pi \fr{-2}{\sqrt {-g}} \fr {\d \sqrt {-g} L_{matt}}{\d g_{00}} \nonumber \\
            &\approx& \pa_b \fr {x^b}{r} 24\pi \a V(\theta V^2 - \theta^{-1} A^2) \ea
For radial potential we have

$$\pa_b \phi = \fr {x^b}{r} \pa_r \phi .$$
In such case

$$C_G = \pa_r \phi \approx 12\pi\a V(\theta V^2 -\theta^{-1} A^2)$$
Now we insert the appropriate universal constants and approximate $\theta$  with $1$:

\be C_G \approx 12\pi \a \fr{(G\e_0)^{3/2}}{c^4 L_p} V(V^2 - c^2 A^2) = k V(V^2 - c^2 A^2) \label{ultima} \ee
Here $L_p$ is the Planck length, equal to $\sqrt {\hbar G/c^3}$. The multiplicative constant is

$$ k = \fr{12\pi}{137}\cdot \fr{(6,67\cdot 10^{-11}\cdot 8,85\cdot 10^{-12})^{3/2}}{(3\cdot 10^8)^4 \cdot(1,62\cdot 10^{-35})} = 30,27\cdot 10^{-33} \,\left(\fr{C^3 s^4}{Kg^3 m^5}\right).$$
This means that for having a weight variation (on Earth) of about $10\%$
($\Delta C_G =1$) we need an electrical potential of $10^{11}$ Volts.
These are $100$ billions of Volts. For $V = Q/r$ and $A=0$ we have:

$$C_G = \fr {k}{(4\pi\e_0)^3}\cdot \fr {Q^3}{r^3} = 2,198 \cdot 10^{-2} \left(\fr {m^4}{s^2 C^3}\right)\fr {Q^3}{r^3}$$
Note that the sign of $C_G$ is the sign of $Q$ and then we obtain
antigravity for negative $Q$. We associate to this interaction an
equivalent mass $m$, substituting $C_G = Gm/{r^2}$. We have

$$m = \fr k G V^3 r^2 = \fr {k}{G(4\pi \e_0)^3}\fr {Q^3} r = 3,293\cdot 10^8 \left(\fr {Kg\, m}{C^3}\right) \fr {Q^3}{r}$$
which is a negative mass for negative $Q$. Negative mass implies negative
energy via the relation $E =mc^2$.
Intuitively, if we search a similar relation for gravi-magnetic field
(which is $\na \times (g^{0i})$, $i=1,2,3$), we should find the same
formula (\ref{ultima}) with an exchange between $V$ and $cA$.

We calculate now at what distance the gravitational attraction between two
protons is equal to their electromagnetic repulsion.

$$G\fr {m^2}{r^2} = \fr {k^2}{G^2 (4\pi\e_0)^6} \fr {Q_p^6}{r^4} = \fr 1 {4\pi \e_0} \fr {Q_p^2}{r^2}$$

$$\fr {k^2 Q_p^4}{G^2 (4\pi\e_0)^5} = r^2$$

$$\Longrightarrow r^2 = 79,49 \cdot 10^{-70} m^2 \Longrightarrow r = 8,916 \cdot 10^{-35} m = 5,516\, L_p$$
Note that we are $20$ orders of magnitude under the range of strong force
and $23$ orders of magnitude under the range of weak force. In this way
the gravitational force doesn't affect the making of nucleus and nucleons.

\section{Conclusion}

We have seen that a potential of $10^{11}$ Volts can induce relevant
gra\-vi\-ta\-tio\-nal effects. They are too many for notice variations in the
experiments with particles accelerators. However they sit at the border
of our technological capabilities. The possibility to rule gravitation is
very attractive and constitutes a good reason for try experiments
with high electric potentials. Such experiments can be connected to the
work of Nikola Tesla and can also be a good test for the arrangement
field theory.


\begin{thebibliography}{}

\bibitem {Arrangement} Marin, D.: The arrangement field theory (AFT). Paper number 1 (2012).

\bibitem {Arrangement2} Marin, D.: The arrangement field theory (AFT) - PART. 2 -. Paper number 2 (2012).

\bibitem {twistor} Penrose, R.: Twistor Algebra. Journal of Mathematical Physics, Volume 8 (1967), pp.345-366.

\bibitem {Tian} Tian, Y.: Matrix Theory over the Complex Quaternion Algebra. ArXiv: 0004005 (2000).

\bibitem {E6} Barton, C., H., Sudbery A.: Magic squares and matrix models of Lie algebras. ArXiv: 0203010 (2002).

\bibitem {Quat1} De Leo, S., Ducati, G.: Quaternionic differential operators. Journal of Mathematical Physics, Volume 42, pp.2236-2265. ArXiv: math-ph/0005023 (2001).

\bibitem {Quat2} Zhang, F.: Quaternions and matrices of quaternions. Linear algebra and its applications, Volume 251 (1997), pp.21-57. Part of this paper was presented at the AMS-MAA joint meeting, San Antonio, January 1993, under the title \lq\lq Everything about the matrices of quatemions''.

\bibitem {Quat3} Farenick, D., R., Pidkowich, B., A., F.: The spectral theorem in quaternions. Linear algebra and its applications, Volume 371 (2003), pp.75–102.

\bibitem {Hawking} Hartle, J. B., Hawking, S. W.: Wave function of the universe. Physical Review D (Particles and Fields), Volume 28, Issue 12, 15 December 1983, pp.2960-2975.

\bibitem {Bochicchio} Bochicchio, M.: Quasi BPS Wilson loops, localization of loop equation by homology and exact beta function in the large-N limit of SU(N) Yang-Mills theory. ArXiv: 0809.4662 (2008).

\bibitem {minko} Minkowski, H., Die Grundgleichungen fur die elektromagnetischen Vorgange in bewegten Korpern. Nachrichten von der Gesellschaft der Wissenschaften zu Gottingen, Mathematisch-Physikalische Klasse, 53-111 (1908).

\bibitem {kant} Kant, I., Kritik der reinen Vernunft, (Critique of pure reason), Hartknoch, Riga (1781).

\bibitem {galileo1} Galilei, G., Dialogo sopra i due massimi sistemi del mondo, Giornata seconda. Landini, Fiorenza/Firenze/Florence (1632), Guerneri (1995), 238-241. Crew, H., de Salvio, A. (English translation), Dialogue concerning the two chief world systems, Second day. Dover Publications, New York (1954).

\bibitem {einstein1} Einstein, A., Zur Elektrodynamik bewegter Korper. Annalen der Physik,17, 891-921 (1905).

\bibitem {mmorley} Michelson, A. A., Morley, E. W., On the Relative Motion of the Earth and the Luminiferous Aether. American Journal of Science, Third Series, 203, 333-345 (1887).

\bibitem {minko} Minkowski, H., Die Grundgleichungen für die elektromagnetischen Vorgänge in bewegten Körpern. Nachrichten von der Gesellschaft der Wissenschaften zu Göttingen, Mathematisch-Physikalische Klasse, 53–111 (1908).

\bibitem {einstein3} Einstein, A., Die Grundlage der allgemeinen Relativitatstheorie. Annalen der Physik, 49, 769-822 (1916).

\bibitem {planck} Planck, M., Entropy and Temperature of Radiant Heat. Annalen der Physik, 4, 719–37 (1900).

\bibitem {einstein2} Einstein, A., Über einen die Erzeugung und Verwandlung des Lichtes betreffenden heuristischen Gesichtspunkt. Annalen der Physik, 17, 132–148 (1905).

\bibitem {rutherford} Rutherford, E., The Scattering of Alpha and Beta Particles by Matter and the Structure of the Atom. Philosophical Magazine, 21 (1911).

\bibitem {bohr1} Bohr, N., On the Constitution of Atoms and Molecules. Philosophical Magazine, 26, 1-24, 476-502 (1913).

\bibitem {heisenberg1} Heisenberg, W., Physics and Philosophy, Section 3. Harper, New York (1958).

\bibitem {schr} Schr\"odinger, E., An Undulatory Theory of the Mechanics of Atoms and Molecules. Physical Review, 28, 1049–1070 (1926).

\bibitem {born} Born, M., Zur Quantenmechanik der Stoßvorgänge. Zeitschrift für Physik, 37, 863–867 (1926).

\bibitem {heisenberg2} Heisenberg, W., \"Uber den anschaulichen Inhalt der quantentheoretischen Kinematik und Mechanik. Zeitschrift für Physik, 43, 172–198 (1927).

\bibitem {jordan} Jordan, P., Anschauliche Quantentheorie. Springer, Berlin (1936).

\bibitem {von} von Neumann, J., Mathematical Foundations of Quantum Mechanics. Princeton University Press, Princeton (1932).

\bibitem {wheeler} Wheeler, A., The anthropic universe, radio interview. Science show, ABC Radio National, Australia, February 18, 2006.

\bibitem {stapp1} Stapp, H.P., Mind, Matter and Quantum Mechanics. Foundations of Physics, 12, 363-399 (1982).

\bibitem {stapp2} Stapp, H.P., Quantum Theory and the Role of Mind in Nature. Foundations of Physics, 31, 1465-1499 (2001).

\bibitem {epr} Einstein, A., Podolski, B., Rosen, N., Can quantum-mechanical description of physical reality be considered complete? Phys. Rev., 47, 777-780 (1935).

\bibitem {bridge} Einstein, A., Rosen, N., The Particle Problem in the General Theory of Relativity. Phys. Rev., 48, 73-77 (1935).

\bibitem {bohm} Bohm, D., Quantum Theory. Prentice Hall, New York (1951).


\bibitem {bell} Bell, J.S., On the Einstein-Podolsky-Rosen paradox. Physics, 1, 195-200 (1964).

\bibitem {aspect}Aspect, A., Grangier, P., Roger, G., Experimental realization of Einstein-Podolsky-Rosen-Bohm Gedankenexperiment: a new violation of Bell's inequalities. Phys. Rev. Lett. 49, 2, 91-94 (1982).

\bibitem {spinnet} Penrose, R., Angular Momentum: an approach to combinatorial space-time. Originally appeared in Quantum Theory and Beyond, edited by Ted Bastin, Cambridge University Press, 151-180 (1971).

\bibitem {spinfoam1} LaFave, N.J., A Step Toward Pregeometry I.: Ponzano-Regge Spin Networks and the Origin of Spacetime Structure in Four Dimensions. ArXiv:gr-qc/9310036 (1993).

\bibitem {spinfoam2} Reisenberger, M., Rovelli, C., 'Sum over surfaces' form of loop quantum gravity. Phys. Rev. D, 56, 3490-3508 (1997).

\bibitem {loopgravity} Engle, G., Pereira, R., Rovelli, C., Livine, E., LQG vertex with finite Immirzi parameter. Nucl. Phys. B, 799, 136-149 (2008).

\bibitem {matrix_model} Banks, T., Fischler, W., Shenker, S.H., Susskind, L., M Theory As A Matrix Model: A Conjecture. Phys. Rev. D, 55, 5112-5128 (1997). Available at URL http://arxiv.org/abs/hep-th/9610043 as last accessed on May 19, 2012.


\bibitem {lisi} Garrett Lisi, A., An Exceptionally Simple Theory of Everything. ArXiv:0711.0770 (2007). Available at URL http://arxiv.org/abs/0711.0770 as last accessed on March 29, 2012.

\bibitem {superg} Nastase, H., Introduction to Supergravity ArXiv:1112.3502 (2011). Available at URL http://arxiv.org/abs/1112.3502 as last accessed on March 29, 2012.

\bibitem {nogo} Coleman, S., Mandula, J., All Possible Symmetries of the S Matrix. Physical Review, Volume 159 (1967), pp.1251-1256.





\bibitem {spin2} Penrose, R., On the Nature of Quantum Geometry. Originally appeared in Wheeler. J.H., Magic Without Magic, edited by J. Klauder, Freeman, San Francisco, 333-354 (1972).




\end{thebibliography}
\end{document}